\documentclass[12pt]{article}
\usepackage{amsmath}
\usepackage{amssymb}
\usepackage{graphicx}
\usepackage{layout}

 \RequirePackage[round]{natbib}
 \RequirePackage[colorlinks,citecolor=blue,urlcolor=blue]{hyperref}
\usepackage{theoremref}
\usepackage{amssymb}
\usepackage{amsthm}  
\usepackage{bm}
\usepackage{float}
\usepackage{caption}
\usepackage{bbm}
\usepackage{mathabx}
\let\hat\widehat
\let\tilde\widetilde
\usepackage{ulem}
\usepackage{caption}
\usepackage{subcaption}

\usepackage{lineno}

\newtheorem{theorem}{Theorem}[section]
\newtheorem{lemma}[theorem]{Lemma}

\newtheorem{proposition}[theorem]{Proposition}
\newtheorem{remark}[theorem]{Remark}

\theoremstyle{definition}

\theoremstyle{definition}

\theoremstyle{definition}

\newcommand{\indep}{\perp \!\!\! \perp}

\makeatletter

\newcommand{\Rmnum}[1]{\expandafter\@slowromancap\romannumeral #1@}
\makeatother

\renewenvironment{proof}{{\bf Proof.}}{$\Box$}

\renewcommand{\P}{\mbox{$\mathbb{P}$}}
\newcommand{\E}{\mbox{$\mathbb{E}$}}
\newcommand{\R}{\mbox{$\mathbb{R}$}}
\newcommand{\G}{\mathbb{G}}

\newcommand{\Var}{\mathrm{Var}}

\newcommand{\bbeta}{\boldsymbol\beta}
\newcommand{\bX}{\mathbf{X}}
\newcommand{\bS}{\mathbf{S}}
\newcommand{\bs}{\mathbf{s}}
\newcommand{\bgamma}{\boldsymbol\gamma}
\newcommand{\tX}{\tilde{\mathbf{X}}}
\newcommand{\bU}{\mathbf{U}}
\newcommand{\bSigma}{\mathbf{\Sigma}}
\newcommand{\bx}{\mathbf{x}}
\newcommand{\tx}{\tilde{\mathbf{x}}}
\newcommand{\bphi}{\bm{\phi}}
\newcommand{\bQ}{\mathbf{Q}}
\newcommand{\bA}{\mathbf{A}}
\newcommand{\bZ}{\mathbf{Z}}
\newcommand{\tZ}{\tilde{\mathbf{Z}}}
\newcommand{\bg}{\mathbf{g}}

\newcommand{\blind}{1}

\addtolength{\oddsidemargin}{-.5in}%
\addtolength{\evensidemargin}{-1in}%
\addtolength{\textwidth}{1in}%
\addtolength{\textheight}{1.7in}%
\addtolength{\topmargin}{-1in}%

\catcode`@=11
\newskip\beforeproofvskip
\newskip\afterproofvskip
\beforeproofvskip=\medskipamount
\afterproofvskip=\bigskipamount

\def\prooftag{Proof}
\def\proofskip{\enspace}

\def\proof{\@ifnextchar[{\@@proof}{\@proof}}  
\def\@startproof{\par\vskip\beforeproofvskip\leavevmode}
\def\@proof{\@startproof{\scshape\prooftag.}\proofskip}
\def\@@proof[#1]{\@startproof {\scshape\prooftag #1.}\proofskip}

\catcode`@=12

\begin{document}

\def\spacingset#1{\renewcommand{\baselinestretch}%
{#1}\small\normalsize} \spacingset{1}
\if1\blind
{
  \title{\bf Long-term effect estimation when combining clinical trial and observational follow-up datasets}
  \author{Gang Cheng
    \\
    Department of Statistics, University of Washington\\
    and \\
    Yen-Chi Chen \\
    Department of Statistics, University of Washington \\
    and \\
    Joseph M. Unger \\
    Public Health Sciences Division, Fred Hutchinson Cancer Research Center \\
    and \\
    Cathee Till \\
    Public Health Sciences Division, Fred Hutchinson Cancer Research Center \\
    and \\ 
    Ying-Qi Zhao\\
     Public Health Sciences Division, Fred Hutchinson Cancer Research Center}
  \maketitle
} \fi
\if0\blind
{
  \bigskip
  \bigskip
  \bigskip
  \begin{center} 
    {\LARGE\bf Long-term effect estimation when combining clinical trial and observational follow-up datasets}
\end{center}
  \medskip
} \fi

\bigskip
\begin{abstract}
Combining experimental and observational follow-up datasets has received a lot of attention lately.  In a time-to-event setting, recent work has used medicare claims to extend the follow-up period for participants in a prostate cancer clinical trial. This allows the estimation of the long-term effect that cannot be estimated by clinical trial data alone. In this paper, we study the estimation of long-term effect when participants in a clinical trial are linked to an observational follow-up dataset with incomplete data. Such data linkages are often incomplete for various reasons. 
We formulate incomplete linkages as a missing data problem with careful considerations of the relationship between the linkage status and the missing data mechanism. We use the popular Cox proportional hazard model as a working model to define the long-term effect. 
We propose a conditional linking at random (CLAR) assumption and an inverse probability of linkage weighting (IPLW) partial likelihood estimator. 
We show that our IPLW partial likelihood estimator is consistent and asymptotically normal.
We further extend our approach to incorporate time-dependent covariates.
Simulations results confirm the validity of our method, and we further apply our methods to the SWOG study. 
\end{abstract}


\noindent%
{\it Keywords:}  
Cox model,  Incomplete linkage, Inverse probability weighting, Weighted empirical process, Time-dependent covariate
\vfill

\newpage
\spacingset{1.9} 

\section{Introduction} 

With the increasing availability of electronic health data, combining experimental and observational datasets has been widely applied in public health research \citep{warren2002overview,gilbert2018guild}.
In a time-to-event setting,  we consider the setup when data from a clinical trial is combined with an observational follow-up dataset, such as electronic health records or administrative claims.  Clinical trials often study the effect of a particular treatment for a fixed period of time and it might not be long enough to determine the maximum benefit of the treatment.  
In contrast,  an observational dataset such as medicare claims naturally extends the follow-up period for clinical trial participants at minimal cost. This enables the estimation of the long-term effect for the treatment after the clinical trial. 
To combine the observational follow-up dataset with the clinical trial data, records belonging to the same individual can be linked with unique identifiers from both datasets. 
In this paper,  we use Cox model \citep{cox1972regression} to define the long-term effect as the parameter for treatment when participants from the clinical trial are linked to an observational follow-up dataset. 

For a real data example, the Prostate Cancer Prevention Trial (PCPT) was previously launched to examine whether finasteride\footnote{a treatment that inhibits the development potent androgen that fuels the malignancy of prostate cancer}
 could prevent the development of prostate cancer (PC). 
PCPT 
showed that seven years of finasteride reduced PC risk by 25\% \citep{thompson2003influence}. 
However, it was unclear if seven years' of trial follow-up sufficed to determine the maximum benefit of the treatment. Further,  the reduced risk of prostate cancer for subjects receiving finasteride might not be maintained after finasteride discontinuation \citep{unger2018using}. 
A later 
 study linked medicare claims to the clinical records for participants in PCPT with their social security numbers (SSN) \citep{unger2018using} to estimate the long-term effect of finasteride on prostate cancer (PC) development. In this example, PCPT is the clinical trial and medicare claim is the observational follow-up dataset.
Medicare claims extend the follow-up periods up to a maximum of 20 years compared to 7 years by the PCPT. Thus, we can observe more diagnosis times of PC within the medicare claims dataset. 

However, not every participant in the clinical trial can be linked to the observational dataset. 
For the PCPT-medicare example, some participants might not be willing to share their SSNs or they may be enrolled in health maintenance organization (HMO) and medicare claims are not applicable to HMO individuals \citep{unger2018using}. 
With incomplete linkages, survival outcomes in the observational dataset might be missing for some participants. 
For a participant censored in the PCPT, meaning he was not diagnosed with PC within the clinical trial, his survival outcome in the observational dataset would be missing if he is unlinked. On the other hand, if a participant was diagnosed with PC within the clinical trial, his survival outcome has been already observed within the clinical trial and the linkage to the observational follow-up dataset is in fact not necessary. 
This suggests that the missingness of survival outcome depends both on the linkage status and whether a participant was censored in the clinical trial or not. 

To deal with the missing survival outcomes, a complete-case analysis that only includes linked participants will ignore all the unlinked participants with observed survival outcomes within the clinical trial. However, simply adding those unlinked participants to the complete-case analysis will also cause biased estimate. Essentially this would lead to the missingness of the survival outcomes to depend on itself and the missingness is then missing not at random (MNAR). 
To properly incorporate those unlinked participants with observed survival outcomes, we
choose to model the linkage probability directly.
As we discussed above, participants who miss the survival outcomes in the observational dataset are those who are censored in the clinical trial and unlinked.
Hence we take a missing data perspective and propose a novel conditional linking at random (CLAR) assumption for the linkage mechanism. 
More specifically, we assume that for participants who are censored in the clinical trial, linkages are independent of the survival outcomes after conditioning on their covariates vectors, such as social economic status or other clinical factors. 
No linkage assumptions are made for those participants uncensored in the clinical trial.  
Under the CLAR assumption, we can then weight each participant appropriately and obtain unbiased estimates for the long-term effect.  

As we use Cox model to define the long-term effect, we develop an inverse probability of linkage weighting (IPLW) partial likelihood estimator. 
We prove the asymptotic normality and consistency of our IPLW partial likelihood estimator. 
Our approach allows inclusion of time-dependent covariates for more flexibility.  
While there has been plenty work \citep{binder1992fitting,robins1993information,lin2000fitting,qi2005weighted} on proving the asymptotic convergence for an inverse probability weighting (IPW) type partial likelihood estimator when there are only time-independent covariates,
their proof cannot be easily generalized to the case when there are time-dependent covariates  \citep{breslow2007weighted}. 
 To this end, we establish an IPLW empirical process weak convergence results that builds on the work in \cite{saegusa2013weighted} and borrow the techniques from \cite{lin1989robust} to extend the theoretical results to include time-dependent covariates. 

\emph{Related work.} 
There has been an increasing amount of work on combining different datasets and studying the treatment effect on long-term outcome in causal inference \citep{rosenman2018propensity,rosenman2020combining,kallus2020role,athey2020combining}.
All these work focuses on using experimental and observational datasets that contain different set of individuals, which is different from our setup. 
IPW  
has also been widely applied in the survival analysis setting \citep{binder1992fitting,robins1994estimation,robins2000correcting,hernan2000marginal,lin2000fitting,qi2005weighted,tsiatis2007semiparametric,breslow2007weighted,saegusa2013weighted}. 
\cite{robins2000correcting} applied inverse probability of censoring weights to estimate Cox model that adjusts for dependent censoring by utilizing data collected on time-dependent prognostic factors. 
IPW has also been applied for Cox models with two-phase stratified sampling under right censoring \citep{binder1992fitting,lin2000fitting,breslow2007weighted}, while \cite{saegusa2013weighted} further studied the problem of two-phase sampling for Cox model under interval censoring with IPW.  
Our approach is different from all previous works as we also allow for time-dependent covariates. 

%

\emph{Outline.}
In Section \ref{sec::BG}, we provide background and notations required for our methodological developments. 
We also introduce several alternative approaches. 
We introduce our main IPLW estimator in Section~\ref{sec::methods}
and provide theoretical justifications. 
We conduct simulation studies in Section~\ref{sec::simulation} to illustrate the validity of the proposed method.
We apply our approach to the SWOG prevention trial in Section~\ref{sec::SWOG}.
We further compared our IPLW estimator to an alternative approach in Section~\ref{sec::sec_naive_iplw_comparison}.
In Section~\ref{sec::discussion}, we conclude this paper and point out some possible future directions.

\section{Background and Notations}	\label{sec::BG}
We first consider the oracle setting that all participants from the clinical trial are linked. 
We make a ``no gap'' assumption such that there is no gap between a participant's last recorded date within the clinical trial and the start date of the observational follow-up dataset. 
This ``no gap'' assumption eliminates the possibility of interval censoring in which a participant is diagnosed with the event of interest while not under observation.  For simplicity,  we make this ``no gap'' assumption to focus on the right censoring problem and we discuss how to relax this no gap assumption in Appendix F (supplementary material).

Time is measured since enrollment in the clinical trial. We define $T$ as the failure time,  $C_{1}$ as the censoring time within the clinical trial and $Q = I(T \leq C_1)$
as the censoring indicator for the clinical trial.  
We use $\tau_1$ to denote the end time of clinical trial.  
Possible reasons for censoring in the clinical trial include loss to follow-up and administrative censoring.  In contrast,  the length of 
observational follow-up dataset is often determined by the data availability and also vary from person to person.  We set $\tau_2$
with $\tau_2 > \tau_1$ as the common end time for observational follow-up dataset and assume that there are a significant proportion of participants at risk after $\tau_2$. 
Thus, we are interested in estimating the long-term treatment effect on survival up to time $\tau_2$ using data from clinical trial records and observational follow-up.  
Similarly,  we define $C_{2}$ as the censoring time in the observational follow-up dataset.  Possible reasons for censoring in the observational follow-up include short
coverages such that a participant is not covered long enough by the observational dataset,  administrative censoring where a participant is 
event-free and covered by observational follow-up until $\tau_2$. 

We use $C = \max(C_1, C_2)$ to denote the actual censoring time and let $\tilde{T} = \min\{T, C\}$ denote the actual observed time and $\Delta = I(T \leq C)$ be the 
censoring indicator throughout the entire follow-up period. 
Let $\bX \in \R^p$ denote baseline characteristics,  clinical factors and treatment assignment.  We also use $A$ to denote the treatment assignment when necessary.  
We make the independent censoring assumption that $T$ and $(C_1, C_2)$ are conditionally independent given $\bX$.  Thus,  $T$ is conditionally independent of $\max(C_1, C_2)$ given $\bX$. 

\subsection{Cox model and Long-term efect}  \label{sec::sec_Cox} 
We use Cox model to define the long-term effect and we allow for the possibility of model-misspecification. 
We now discuss the parameter for the long-term effect. 
Assuming that there are $n$ participants in the clinical trial and they are all linked to the observational follow-up dataset. 
$(\bX_i,  Q_{i}, \tilde{T}_i, \Delta_i)$ for $i = 1, \ldots, n$ are all observed.  
Cox model assumes that the hazard function has the following form:
\begin{align}
\lambda(t | \bX, \bbeta_0) = \lambda_0(t) \exp(\beta_1 A + {\bbeta'_0}^T \bX_{-A})
\label{eqn::eqn_overall}
\end{align}
with ${\bbeta'_0} \in  \R^{p - 1}$ and $\bX_{-A}$ is the covariate vector excluding treatment $A$. $\lambda_0(t)$ is the baseline hazard function and $\beta_1$ represents the long-term treatment effect. To account for potential different treatment effects between clinical trial and the observational follow-up period,  we also consider the following model with a change point at time $\tau_1$ \citep{liang1990Cox,pons2003estimation}
\begin{align}
\lambda(t |\bX; \beta_0, \bbeta_1, \theta) = \lambda_0(t) \exp[(\beta_1 + \theta  I_{t > \tau_1}) A + {\bbeta'_0}^T \bX_{-A}].
\label{eqn::eqn_change}
\end{align}
$\beta_1$ now represents the effect of treatment in the clinical trial, while $\theta$ represents the difference of the treatment effect between observational follow-ups and clinical trial.  $\beta_1 + \theta$ now represents the long-term treatment effect. When $\theta$ = 0,  model \eqref{eqn::eqn_change} reduces to \eqref{eqn::eqn_overall}.  Without loss of generality,  we will use the parameter notation from model \eqref{eqn::eqn_overall} in the following.  
 Following the notation in \cite{lin1989robust},  for the $i$-th individual, let $\lambda_i(t) = \lambda(t | \bX_i)$ be the true hazard function, $N_i(t) = I(\tilde{T}_i \leq t, \Delta_i = 1)$ and $Y_i(t) = I(\tilde{T}_i \geq t)$.  For $k = 0, 1, 2$, define 
\begin{align*}
& \bS_n^{(k)}(t) = \frac{1}{n} \sum_{i=1}^n Y_i(t) \lambda_i(t) \bX_i ^{(k)},  \quad \bs^{(k)}(t) = \E[\bS_n^{(k)}(t)] \\
& \bS_n^{(k)}(\beta, t) = \frac{1}{n} \sum_{i=1}^n Y_i(t) \exp(\bbeta^T \bX_i) \bX_i^{(k)}, \quad \bs^{(k)}(\beta, t) = \E[\bS_n^{(k)}(\bbeta, t)]
\end{align*}
where for a column vector $\mathbf{a}$, denote $\mathbf{a}^{\otimes 2} = \mathbf{a} \mathbf{a}^T$, $\mathbf{a}^{\otimes 1} = \mathbf{a}$ and $\mathbf{a}^{\otimes 0}$ refers to the scalar 1.  
With Cox model as a working model, the parameter of interest is $\bbeta_0^*$
that solves the following equations \citep{andersen1982Cox,lin1989robust}
\begin{equation}
\bU_0(\bbeta) = \E\left[ \Delta \left(\bX - \frac{\bs^{(1)}(\bbeta, \tilde{T})}{\bs^{(0)}(\bbeta, \tilde{T})}\right)\right] = \int^{\tau_2}_0 \bs^{(1)}(t) dt - \int^{\tau_2}_0 \frac{\bs^{(1)}(\bbeta, t)}{\bs^{(0)}(\bbeta, t)} \bs^{(0)}(t) dt .
\label{eq::U0}
\end{equation}
When Cox model is correctly specified with true parameter $\bbeta_0$,  we have $\bbeta_0^* = \bbeta_0$.  
When Cox model is misspecified, parameter $\bbeta_0^*$ that solves equation \eqref{eq::U0}
is still well-defined. 
For the model without change points \eqref{eqn::eqn_overall}, the long-term effect $\beta_1$ can be interpreted as the averaging effect over the entire follow-up period. For the model with change points \eqref{eqn::eqn_change}, the long-term effect $\beta_1 + \theta$ can be interpreted as the effect after the change point.

\subsection{The linkage problem and assumption}  \label{sec::sec_assumption}
Now we consider the more realistic setup that not every participant is linked to the observational follow-up dataset. 
We use $L$ to denote the linkage status.  $L = 1$ means that the participant is linked to the observational follow-up dataset and  $L = 0$ means unlinked. 
As $C_2$ is unobserved when $L = 0$, $C$ is missing in this case. 
We classify participants into three classes based on their linkage status $L$ and in-trial censoring indicator $Q$. 


{\bf Class 1: $L=1$.}
This class contains all participants linked to the observational follow-up dataset, 
where we have full observations: $(L, \tilde{T}, \Delta, Q, \bX)$.


{\bf Class 2: $L=0, Q=1$.} This class represents participants diagnosed with the event of interest within the clinical trial but unlinked to observational follow-up.
Thus, we still have complete information $(L,\tilde{T},\Delta, Q, \bX)$. 


{\bf Class 3: $L=0, Q=0$.} This class includes participants who did not experience the event of interest during the clinical trial, and were not linked to observational follow-up dataset.
Both $\tilde{T}$ and $\Delta$ are missing and 
we only observe $(L, Q, \bX)$.  

\begin{figure}
\center
\includegraphics[height=2in]{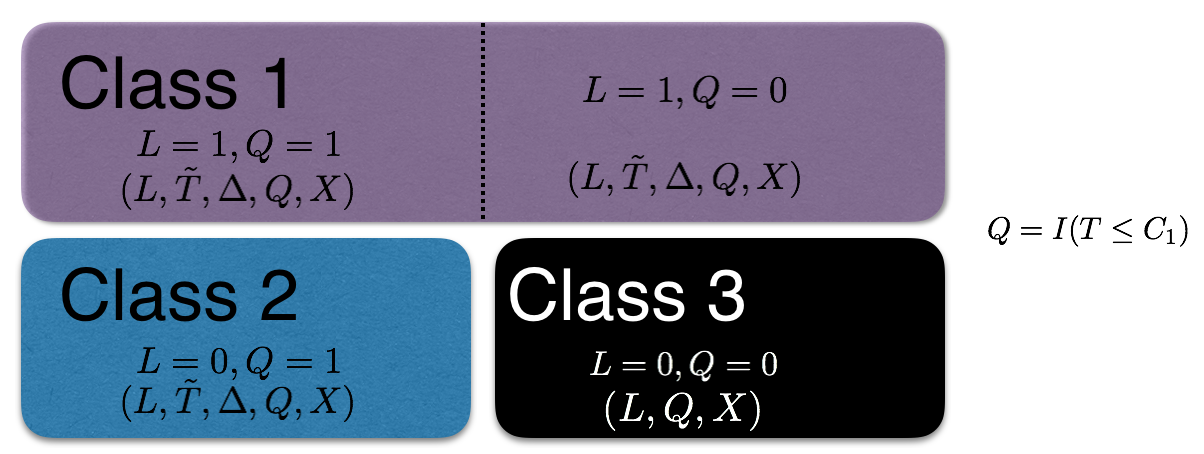}
\caption{A diagram illustrating the three classes of participants and how it is defined via the linkage indicator $L$
and in-trial censoring indicator $Q$. 
}
\label{fig::ex01}
\end{figure}

Figure~\ref{fig::ex01} provides a summary of the three classes defined above. We have completely observed data $(L,\tilde{T}, \Delta, Q, \bX)$ in class 1 and 2, while two important variables $\Tilde{T}, \Delta$ are missing in class 3. To deal with the missing $\tilde{T}$ and $\Delta$, we propose the following conditional linking at random assumption (CLAR):
$$
\text{\bf (A1)} \ P(L=1|\tilde{T}, \Delta, Q = 0,  \bX) = P(L=1| Q = 0, \bX).
$$
More specifically, $\bf (A1)$ states that for a participant that is censored in the clinical trial,  his/her linkage status is independent of the survival outcomes given his baseline covariates, clinical factors and treatment assignment.  
For example,  clinical trial participants with higher social economic status might be more sensitive to personal privacy and not willing to share personal information that are important for data linkage. 

We compare our CLAR assumption (A1) with the classical MAR type assumption \citep{rubin1976inference, little2019statistical}, which can be written as follows
\begin{equation}
P(L = 1 | \tilde{T}, \Delta, Q, \bX) = P(L = 1 |Q, \bX).
\label{assumption::mar}
\end{equation}
CLAR is actually implied by MAR \eqref{assumption::mar}. 
However, CLAR is restricting the conditional independence to the subpopulation with $Q = 0$, while MAR \eqref{assumption::mar} is assuming the conditional independence for the whole population. To see why this is important, note that MAR \eqref{assumption::mar} implies that 
\begin{equation}
P(L = 1 | \tilde{T}, \Delta, Q = 1, \bX) = P(L = 1 |Q = 1,  \bX)
\label{assumption::mar_q_1}
\end{equation}
and when $Q = 1$, both $\tilde{T}$ and $\Delta$ are always observed, meaning that \eqref{assumption::mar_q_1} might in fact contradict the data. In contrast, with no assumptions for participants with $Q = 1$,  CLAR is non-parametrically identifiable, i.e., they will never contradict the data \citep{robins2000sensitivity}.
 Further discussions on potential linkage assumptions are given in Appendix E (supplementary material).

\subsection{Alternative approaches and a motivating example}    \label{sec::motivating_example} 
We now consider three alternative approaches that practitioners may use. We show that they all give inconsistent estimates for $\bbeta_0^*$ with a simulated example when Cox model is 
misspecified and CLAR assumption (A1) holds.  These three approaches are
\begin{itemize}
\item Complete-case (CC) analysis that only includes participants that are linked. These corresponds to participants in Class 1 with $L = 1$.
\item Complete-case analysis plus (CC+) that includes not only participants that are linked, but also participants with $Q = 1$.  This corresponds to participants in Class 1 and 2 in Figure \ref{fig::ex01}.These are the participants with $L + Q > 0$. 
\item Non-linked-as-censored (NLAC) that treats participants from Class 3 ($L = 0, Q = 0$) as censored and sets their censoring time as $C = C_1$. These are the participants that are unlinked and censored in the clinical trial. Then we can fit the Cox regression with all participants from clinical trial.
\end{itemize}
We simulate data according to a Cox model with hazard function specified by covariates $X_1, X_2, X_3^2$ and a Cox model with covariates $X_1, X_2, X_3$ is fitted. More details are given in Appendix H(supplementary material). 
 Figure~\ref{fig::fig_ci_data_3}
 presents the 95\% confidence intervals for one of the parameters.  Oracle method refers to the approach that all participants in the clinical trial are linked.
Among all four approaches, CC+ gives the most biased estimates. CC gives less biased estimates than CC+.  NLAC also gives biased estimates compared to the oracle method.  
Although not shown here,  the coverage of 95\% confidence intervals for CC, CC+, NLAC all decrease as $n$ increases.  

We now discuss these three alternative approaches. One sufficient condition for CC to be consistent is linking completely at random (LCAR), i.e., $L\perp (\tilde T, \Delta)$.  This is similar to the missing completely at random (MCAR) \citep{little2019statistical} condition. 
The LCAR condition is a strong condition and may contradict the data\footnote{we can easily test this condition}
whereas the CLAR condition will not.
Thus the CLAR condition (A1) is a preferred condition
in a context similar to our setting.
For CC+, the missingness of survival outcomes now also depends on survival outcomes itself as a participant would be included if $L + Q > 0$. Thus, the missingness can be viewed as MNAR and CC+ will always lead to biased estimates. 
Finally, for participants from Class 3, NLAC seems like a natural idea that simply uses their censoring time in clinical trial $C_1$ as the censoring time for the entire follow-up period. 
However, when Cox model is misspecified, NLAC in fact always give biased estimates of $\bbeta_0^*$. 
More discussions of NLAC are deferred to section \ref{sec::sec_naive_iplw_comparison}. 
As these three approaches are inconsistent, we have to propose a new approach to consistently estimate $\bbeta_0^*$.  Our proposed approach are different from NLAC as for participants in Class 3, we treat their survival outcomes as missing.
\begin{figure}
\centering
\includegraphics[scale = 0.55]{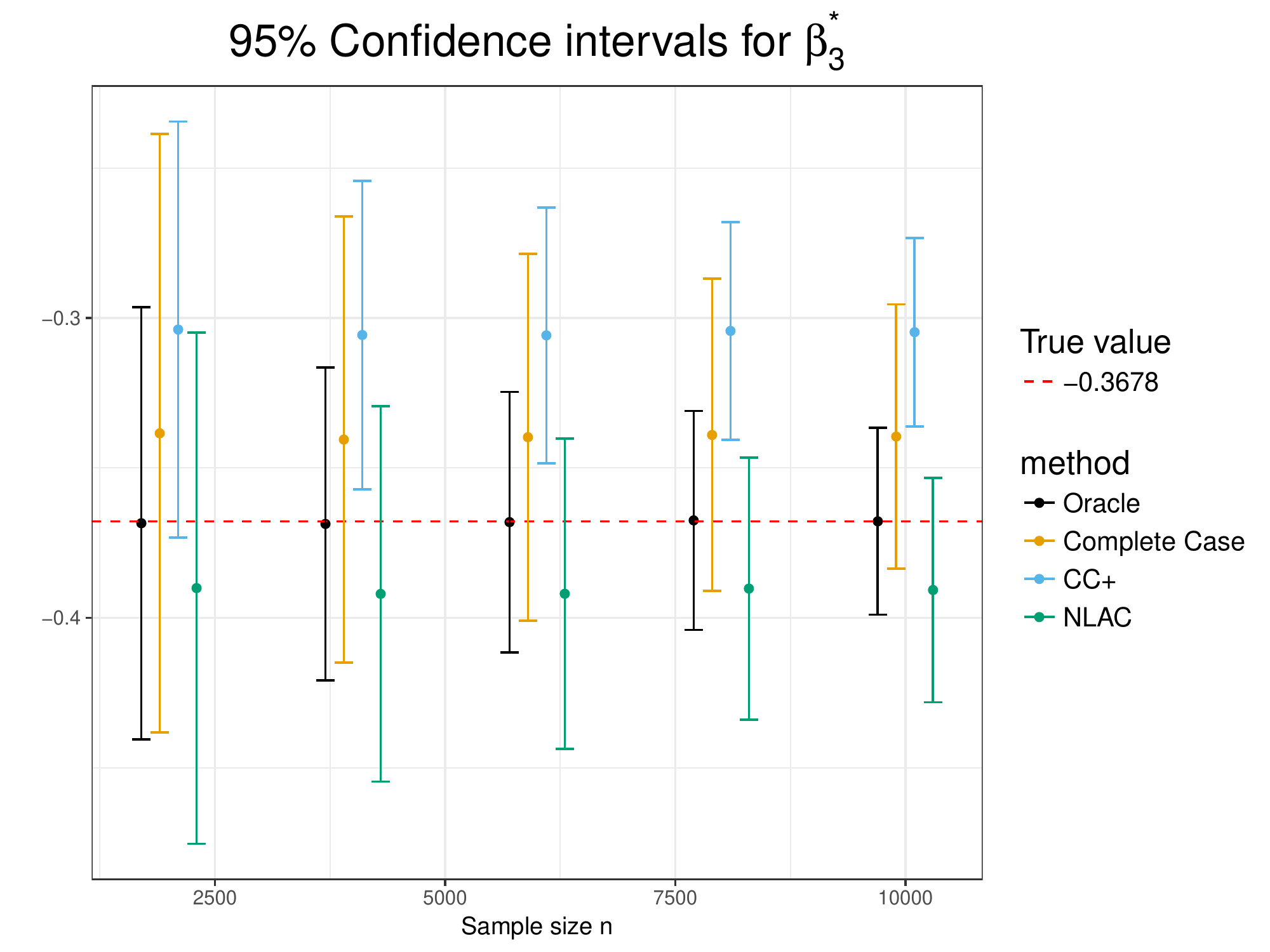}
\caption{
  CC+, CC and NLAC all give inconsistent estimates of $\beta_3^*$.  }
\label{fig::fig_ci_data_3}
\end{figure}

\section{Methods}	\label{sec::methods}

\subsection{IPLW Partial likelihood estimator}

Due to the missingness of $\tilde{T}$ and $\Delta$,  we can not use the classical partial likelihood for Cox model to estimate the parameters.  
We first illustrate our approach with time-independent covariates only. 
We start by writing the regular log-likelihood for Cox model as follows:
\begin{align*}
l_n(\bbeta, \lambda_0) &= \frac{1}{n} \sum_{i=1}^n l(\tilde{T}_i, \Delta_i, \bX_i; \bbeta, \lambda_0) = \frac{1}{n}\sum_{i=1}^n \log \left(\lambda(\tilde{T}_i | \bX_i)^{\Delta_i} S(\tilde{T}_i | \bX_i)\right) 
\end{align*}
where $S(t | \mathbf{x})$ is the conditional survival function for the failure time $T$.  However,  $l_n(\bbeta, \lambda_0)$ is unidentifiable since we do not observe $(\tilde{T}_i, \Delta_i)$ for participants in Class 3 of Figure~\ref{fig::ex01}.  
To resolve the identifiability issue, consider the expected log-likelihood 
$$
\E(l(\bbeta,\lambda_0)) = \E\left[\log\left( \lambda(\tilde{T} | \bX)^{\Delta} S(\tilde{T} | \bX) \right)\right]
$$ 
where $\E$ is the expectation with respect to random variable $(\tilde{T}, \Delta, \bX)$ and $l(\bbeta,\lambda_0) =l(\tilde{T},\Delta, \bX; \bbeta, \lambda_0)$.  By the law of large number,  we have $l_n(\bbeta,\lambda_0) \rightarrow_p \E(l(\bbeta,\lambda_0))$.  

\begin{proposition} \thlabel{prop::prop_ipw_llk}
Under assumption (A1),  
we have
\begin{equation}
\E(l(\bbeta, \lambda_0)) = \E\left[ \frac{I(L + Q > 0)  l(\bbeta, \lambda_0)}{Q + (1 - Q)P(L = 1 | \bX, Q = 0)} \right]
\label{eq::ipw_whole}
\end{equation}
\end{proposition}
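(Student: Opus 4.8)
The plan is to verify the identity by a direct conditioning argument: decompose the right-hand side according to the in-trial censoring indicator $Q$, and then invoke (A1) through iterated expectation on the $Q=0$ piece. First I would note that $l(\bbeta, \lambda_0)$ is a function of $(\tilde{T}, \Delta, \bX)$ only, and that the weight $Q + (1-Q)P(L=1|\bX, Q=0)$ depends on $(Q, \bX)$ only; both are therefore measurable with respect to the sigma-algebra generated by $(\tilde{T}, \Delta, Q, \bX)$, which is the conditioning I intend to use. Splitting on the value of $Q \in \{0,1\}$: the term with $Q=1$ has weight equal to $1$ and indicator $I(L+Q>0)=1$, so it contributes exactly $\E[Q\, l(\bbeta, \lambda_0)]$; the term with $Q=0$ has $I(L+Q>0)=L$ and weight $P(L=1|\bX, Q=0)$, so it contributes
\[
\E\left[ (1-Q)\, \frac{L\, l(\bbeta, \lambda_0)}{P(L=1|\bX, Q=0)} \right].
\]

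For the $Q=0$ term I would condition on $(\tilde{T}, \Delta, Q, \bX)$ and factor out everything except $L$, leaving $\E[L \,|\, \tilde{T}, \Delta, Q=0, \bX] = P(L=1 \,|\, \tilde{T}, \Delta, Q=0, \bX)$ inside. This is precisely where assumption (A1) enters: it allows me to replace this conditional linkage probability by $P(L=1 \,|\, \bX, Q=0)$, which then cancels the denominator. Hence the $Q=0$ contribution collapses to $\E[(1-Q)\, l(\bbeta, \lambda_0)]$. Adding the two pieces yields $\E[Q\, l(\bbeta,\lambda_0)] + \E[(1-Q)\, l(\bbeta,\lambda_0)] = \E[l(\bbeta, \lambda_0)]$, which is the left-hand side.

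The main obstacle here is conceptual rather than computational: one must recognize that (A1) is exactly the conditional independence needed to make the inverse-linkage weight unbiased on the $Q=0$ subpopulation, and that no analogous replacement is required (or even available) for $Q=1$, since those outcomes are always observed and so enter with weight one. A minor but genuine point to check carefully is the measurability bookkeeping — confirming that $l(\bbeta,\lambda_0)$ and the weight truly are functions of the conditioning variables — so that they may be pulled out of the inner conditional expectation before (A1) is applied; this is what makes the cancellation legitimate rather than merely formal.
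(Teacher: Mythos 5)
Your proof is correct and follows essentially the same route as the paper's: decompose the weighted expectation by the value of $Q$, then on the $Q=0$ piece apply iterated expectation given $(\tilde{T}, \Delta, Q, \bX)$, use (A1) to replace $P(L=1\,|\,\tilde{T},\Delta,Q=0,\bX)$ by $P(L=1\,|\,Q=0,\bX)$, and cancel the weight. The only cosmetic difference is that you phrase the decomposition with indicators $Q$ and $(1-Q)$ inside unconditional expectations, whereas the paper uses conditional expectations given $Q=0$ and $Q=1$ multiplied by $P(Q=0)$ and $P(Q=1)$; these are equivalent.
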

\thref{prop::prop_ipw_llk} shows that $\E(l(\bbeta, \lambda_0))$ can be expressed in the IPLW form and the proof can be found in Appendix C (supplementary material).  We assume a logistic regression model for the linkage probability $P(L = 1 | \bX, Q = 0)$ for simplicity such that 
$$
P(L = 1 | \bX, Q = 0; \bgamma_0) = \frac{\exp(\bgamma_0^T \tX)}{1 + \exp(\bgamma_0^T \tX)} = \pi_{\bgamma_0}(\bX)
$$
with $\bgamma_0 \in \R^{p + 1}$ and $\tX = (1,  \bX^T)^T$. In particular, $\bgamma_0$ can be estimated by the maximum likelihood estimator $\hat{\bgamma}_n$.  
Using result \eqref{eq::ipw_whole},  an IPLW estimator of $\E(l(\bbeta, \lambda_0))$ is
\begin{align*}
l_{n}(\bbeta,\lambda_0) & = \frac{1}{n} \sum_{i=1}^n \frac{I(L_i + Q_{i} > 0)}{Q_i + (1 - Q_i) \pi_{\hat{\bgamma}_n}(\bX_i)} [\Delta_i\log\lambda(\tilde{T}_i | \bX_i) + \log S(\tilde{T}_i | \bX_i)]
\end{align*}
with the log-likelihood being weighted by $\hat w_i = I(L_{i} + Q_i > 0)/[ Q_i + (1 - Q_i) \pi_{\hat{\bgamma}_n}(\bX_{i})]$.   
IPLW partial score can then be derived as
\begin{equation}
\label{eqn::weighted_pl}
\hat{\bU}_n(\bbeta) = \frac{1}{n} \sum_{i=1}^n \Delta_i \hat{w}_i \left(\bX_i - \frac{\sum_{j = 1}^{n} \hat{w}_j  I(\tilde{T}_j \geq \tilde{T}_i) \exp(\bX_j^T \bbeta) \bX_j }{\sum_{j = 1}^{n} \hat{w}_j I(\tilde{T}_j \geq \tilde{T}_i) \exp(\bX_j^T \bbeta)} \right),
\end{equation}
which is a sample analog of equation \eqref{eq::U0} and $\hat \bbeta_n$ can be obtained by solving equation \eqref{eqn::weighted_pl} using standard statistical software. Detailed derivations of \eqref{eqn::weighted_pl} can be found in Appendix A (supplementary material). 
In summary,   our method for estimating the regression parameters of Cox model consists of two steps:
\begin{itemize}
\item {\bf Step 1.} We estimate the linkage probability $P(L = 1| Q = 0, \bX;  \bgamma_0) = \pi_{\bgamma_0}(\bX)$ with logistic regression.  $\hat{\bgamma}_n$ can be obtained by the maximum likelihood estimation. 

\item {\bf Step 2.}
An individual with $\Delta_i = 1$ is weighted with weight $\hat w_i$ using the estimated linkage probability $\pi_{\hat \bgamma_n}(\bX_i)$.  More specifically, for participants with $Q_i = 1$, the weight is 1; for participants with $Q_i = 0$ and $L_i = 1$, the weight is ${1}/{\pi_{\hat{\bgamma_n}}(\bX_i)}$.  $\hat \bbeta_n$ is then obtained by solving \eqref{eqn::weighted_pl}. 
\end{itemize} 

\subsection{Time-dependent covariates}
 In practice, it is common for Cox regression to include time-dependent covariates and we now extend our IPLW method to incorporate the time-dependent covariates. 
 We builds on the work in \cite{lin1989robust} to extend our theoretical results to include time-dependent covariates with the IPLW partial likelihood estimator.  
Let $\bX_i(t) = (\bZ_{1i}^T, \bZ_{2i}(t)^T)^T \in \mathbb{R}^p$ denotes the covariates vector, where $\bZ_{1i} \in \mathbb{R}^{d_1}$ corresponds to the baseline (time-independent) covariates and $\bZ_{2i}(t) \in \mathbb{R}^{d_2}$ corresponds to the time-dependent covariates for $i = 1, \ldots, n$ at time $t$. We have $d_1 + d_2 = p$.  $\bZ_{2i}(t)$ can represent covariates that are continuously monitored during the clinical trial and observational follow-up datasets.  For Cox model with a change point \eqref{eqn::eqn_change},  $Z_{2i}(t) = I(t > \tau_1) A$.  
Let $\widebar{\bX}(t) = \{\bX(s):  s \in [0, t]\}$ denotes the history of covariates vector $\bX(s)$,  up to time $t$. 
To incorporate time-dependent covariates into the IPLW partial likelihood, we modify the CLAR assumption (A1) as following:

{\bf Assumptions.}
\begin{itemize}
\item[\bf (D1)] 
The linkage status satisfies that 
$$
P(L = 1 | \tilde{T}, \Delta, Q = 0,  \bar{\bX}(\tau_M)) = P(L = 1 | Q = 0, \bZ_1).
$$
The distribution of $\bZ_1$ is not concentrated on a $(d_1-1)$ dimensional affine subspace of $\R^{d_1}$.
\end{itemize}
(D1) assumes that linkage only depends on time-independent covariates $\bZ_1$ and also ensures the identifiability of $\bgamma_0$ (see, e.g., Example 5.40 of \cite{van2000asymptotic}). We can further relax this assumption such that linkage also depends on $\bZ_2(C_1)$, the value of time-dependent covariates at the censoring time in clinical trial.  For simplicity, we assume that linkage only depends on the baseline (time-independent) covariates. Based on assumption (D1),
we modify the weights as follows:
$$
w_i = \frac{I(L_i + Q_i > 0)}{Q_i + (1 - Q_i)\pi_{\gamma_0}(\bZ_{1i})} \qquad \hat{w}_i = \frac{I(L_i + Q_i > 0)}{Q_i + (1 - Q_i)\pi_{\hat{\gamma}_n}(\bZ_{1i})}
$$ 
with $\bgamma_0 \in \R^{d_1 + 1}$.  The IPLW partial score incorporating time-dependent covariates is now as follows:
\begin{equation}
\hat{\bU}_n(\bbeta) = \frac{1}{n} \sum_{i=1}^n \Delta_i \hat{w}_i \left\{\bX_i(\tilde{T}_i) - \frac{\bS_{n, w}^{(1)}(\bbeta, \tilde{T}_i)}{\bS_{n, w}^{0}(\bbeta, \tilde{T}_i)}\right\} = \frac{1}{n} \sum_{i=1}^n \hat{w}_i \int^{\tau_2}_0 \bX_i(t) d N_i(t)  -\int^{\tau_2}_0 \frac{\bS_{n, w}^{(1)}(\bbeta, t)}{\bS_{n, w}^{(0)}(\bbeta, t)} d \bar{N}(t)
\label{eq::Un}
\end{equation}
with
$
\bS_{n, w}^{(k)}(\bbeta, t) = \frac{1}{n} \sum_{i=1}^n \hat{w}_i  Y_i(t) \exp(\bbeta^T \bX_i(t)) \bX_i(t)^{\otimes k}
$.

We redefine $\bs^{(k)}(\bbeta, t) = \E[Y(t)\exp(\bbeta^T \bX(t))  \bX(t)^{\otimes k}]$ and $\bs^{(k)}(t) = \E[Y(t) \lambda(t | \bar{\bX}(t)) \bX(t)^{\otimes k}]$ where $\lambda(t | \widebar{\bX}(t))$ is the true hazard function for participants with covariates history $\widebar{\bX}(t)$. 
The estimated parameter $\hat{\bbeta}_n$ solves $\hat{\bU}_n(\bbeta) = \bm{0}$ and its population version $\bbeta_0^*$ solves $\bU_0(\bbeta) = 0$ with 
$$
\bU_0(\bbeta) = \E\left[ \Delta \left( \bX(\tilde{T}) - \frac{\bs^{(1)}(\bbeta, \tilde{T})}{\bs^{(0)}(\bbeta, \tilde{T})} \right)\right]  = \int^{\tau_2}_0 \bs^{(1)}(t) dt - \int^{\tau_2}_0 \frac{\bs^{(1)}(\bbeta, t)}{\bs^{(0)}(\bbeta, t)} \bs^{(0)}(t) dt 
$$
In addition to (D1), we consider the following technical assumptions.

{\bf Assumptions.}
\begin{itemize}
\item[\bf (D2)] The time-dependent covariates $\bX_i(t)$ have bounded total variation such that $\|\bX_i(0)\|_1 + \int_0^{\tau_2} \|d \bX_i(t)\|_1 \leq B$ for 
a fixed constant $B > 0$. 
\item[\bf (D3)]
$P(L = 1 | Q = 0, \bZ_1) = \pi_{\bgamma_0}(\bZ_1) \geq \delta > 0$ for all possible values of $\bZ_1$. 
\item[\bf (D4)] 
The failure time and censoring time satisfy 
$$
P(T \geq s | C_1, C_2,  \bar{\bX}(s)) = P(T \geq s | \bar{\bX}(s))
$$
for $s \in [0, \tau_2]$ and $P(\tilde{T} \geq \tau_2) > 0$. 
\item[\bf (D5)]
 $\bSigma_0 = \int^{\tau_2}_0 \left\{ \frac{\bs^{(2)}(\bbeta_0^*, t)}{\bs^{(0)}(\bbeta_0^*, t)} - \left(\frac{\bs^{(1)}(\bbeta_0^*, t)}{\bs^{(0)}(\bbeta_0^*, t)} \right)^{\otimes 2} \right\} \bs^{(0)}(t) dt$ is positive definite. 
\end{itemize}
(D2) assumes that time-dependent covariates have bounded variation \citep{bilias1997towards}. (D3) is a standard positivity assumption for IPW type approach. (D4) is an independent censoring assumption \citep{bilias1997towards} and basically requires that a positive fraction of participants are still at-risk after the end of the observational dataset.  (D5) is a standard assumption for Cox models \citep{andersen1982Cox,lin1989robust} that ensures the uniqueness of $\bbeta_0^*$. 
Now we present the consistency and asymptotic normality of $\hat{\bbeta}_n$.  
\begin{theorem}[Asymptotic results of $\hat{\bbeta}_n$]  \thlabel{thm::thm_Cox_right_asn_d_asp}
Let $\hat{\bbeta}_n$ be the solution to the equation $\hat{\bU}_n(\bbeta) = 0$.
Under assumptions (D1) - (D5), we have $\hat{\bbeta}_n \rightarrow_p \bbeta_0^*$ and 
$\sqrt{n}(\hat{\bbeta}_n - \bbeta_0^*) \rightarrow_d N(0, \bSigma_0^{-1}\bSigma_U \bSigma_0^{-1})$ and the form of $\bSigma_U$ can be found in Theorem A.1 (supplementary material).
\end{theorem}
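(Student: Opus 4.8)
The plan is to treat $\hat{\bbeta}_n$ as the solution of a two-stage $Z$-estimating equation and to argue consistency and asymptotic normality through the standard $Z$-estimator machinery, where the nonstandard ingredient is a weak-convergence result for the inverse-probability-of-linkage weighted empirical process that must accommodate time-dependent covariates. Throughout, the first-stage estimate $\hat{\bgamma}_n$ enters the weights $\hat{w}_i$, so every argument has to carry the extra variability from estimating $\bgamma_0$.

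First I would establish consistency. Since $\pi_{\bgamma}(\bZ_1)$ is a logistic model, standard maximum-likelihood theory gives $\hat{\bgamma}_n \to_p \bgamma_0$, using the identifiability built into (D1) (no concentration on a lower-dimensional affine subspace, cf. \cite{van2000asymptotic}) together with positivity (D3). I would then show $\sup_{\bbeta} \|\hat{\bU}_n(\bbeta) - \bU_0(\bbeta)\| \to_p 0$ on a compact neighborhood of $\bbeta_0^*$. The crux is a uniform-in-$(\bbeta,t)$ law of large numbers $\bS_{n,w}^{(k)}(\bbeta,t) \to_p \bs^{(k)}(\bbeta,t)$ for $k=0,1,2$; because the weights are bounded above through $\pi \geq \delta$ (D3) and the covariate processes have bounded total variation (D2), the relevant function classes are Glivenko--Cantelli, and I would replace $\pi_{\hat{\bgamma}_n}$ by $\pi_{\bgamma_0}$ uniformly using continuity of $\pi$ in $\bgamma$ and $\hat{\bgamma}_n \to_p \bgamma_0$. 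Combined with the uniqueness of $\bbeta_0^*$ guaranteed by positive-definiteness of $\bSigma_0$ in (D5) (which makes $\bU_0$ strictly monotone near its zero), a $Z$-estimator consistency theorem yields $\hat{\bbeta}_n \to_p \bbeta_0^*$.

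For asymptotic normality I would Taylor expand $0 = \hat{\bU}_n(\hat{\bbeta}_n)$ about $\bbeta_0^*$, giving $\sqrt{n}(\hat{\bbeta}_n - \bbeta_0^*) = -[\partial_{\bbeta}\hat{\bU}_n(\bar{\bbeta})]^{-1}\sqrt{n}\,\hat{\bU}_n(\bbeta_0^*)$ for some $\bar{\bbeta}$ between $\hat{\bbeta}_n$ and $\bbeta_0^*$. Consistency together with the uniform convergence of $\bS_{n,w}^{(k)}$ shows $\partial_{\bbeta}\hat{\bU}_n(\bar{\bbeta}) \to_p -\bSigma_0$, so the problem reduces to the limiting law of $\sqrt{n}\,\hat{\bU}_n(\bbeta_0^*)$. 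I would decompose this into the score evaluated at the true weights $w_i$ plus a correction from $\hat{\bgamma}_n - \bgamma_0$; for the latter, a further expansion in $\bgamma$ and the asymptotic linearity of the logistic maximum-likelihood estimator express the correction as an average of i.i.d. influence functions. For the leading term, since the working Cox model may be misspecified the usual martingale-integral representation of the score need not hold, so I would adopt the robust strategy of \cite{lin1989robust}: write $w_i\int\{\bX_i(t) - \bs^{(1)}/\bs^{(0)}\}\,dN_i(t)$ and replace the empirical ratio $\bS_{n,w}^{(1)}/\bS_{n,w}^{(0)}$ by its limit, controlling the error via the weak convergence of the weighted process. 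This yields a representation of $\sqrt{n}\,\hat{\bU}_n(\bbeta_0^*)$ as a mean-zero i.i.d. sum, and the classical multivariate central limit theorem gives the $N(0,\bSigma_U)$ limit, with $\bSigma_U$ collecting the covariance of the weighted score together with the cross-terms from the $\bgamma$-correction. Sandwiching then produces the stated $\bSigma_0^{-1}\bSigma_U\bSigma_0^{-1}$.

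The main obstacle is establishing tightness (weak convergence) of $\sqrt{n}\{\bS_{n,w}^{(k)}(\bbeta,t) - \bs^{(k)}(\bbeta,t)\}$ uniformly over $t \in [0,\tau_2]$ in the presence of time-dependent covariates. With only baseline covariates the indexing class $\{Y_i(t)\exp(\bbeta^T\bX_i)\bX_i^{\otimes k}\}$ is a simple monotone-in-$t$ family and classical results apply; once $\bX_i(t)$ itself varies with $t$ the class is far richer and ordinary IPW arguments break down. Here I would invoke the bounded-total-variation assumption (D2) to control the bracketing entropy---in effect writing $\bX_i(t)$ through its increments $d\bX_i(t)$ via the integration-by-parts device of \cite{lin1989robust}---and build the weighted-process limit on the framework of \cite{saegusa2013weighted}, which handles the dependence induced by the estimated linkage weights. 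Verifying the Donsker property of these time-indexed classes, and checking that plugging in $\hat{\bgamma}_n$ does not disturb it, is the technical heart of the argument.
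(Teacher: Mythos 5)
Your proposal follows essentially the same route as the paper: consistency via the first-stage logistic MLE, a uniform-in-$(\bbeta,t)$ law of large numbers for $\bS_{n,w}^{(k)}$ (the paper's Lemma on uniform convergence, built from the bounded-variation condition (D2) and the Glivenko--Cantelli property of the IPLW empirical measure), and uniqueness from (D5) in an Andersen--Gill-type argument; then asymptotic normality via the Taylor/sandwich expansion, the Lin--Wei robust decomposition of $\sqrt{n}\,\hat{\bU}_n(\bbeta_0^*)$ into an asymptotically linear form, and the weak convergence of the estimated-weight IPLW empirical process (with the $\hat{\bgamma}_n$-correction term) in the Saegusa--Wellner framework. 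All the key ingredients and their roles match the paper's proof, so no substantive differences or gaps to report.
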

Our proof builds on the convergence results for the underlying IPLW emprical process and we give the relevant results for the IPLW process in Appendix B (supplementary material). In particular, we proved the Glivenko-Cantelli property of the IPLW empirical process and adopt the strategy in \citep{andersen1982Cox} for the consistency proof.
Next we follow \cite{lin1989robust} to derive the asymptotic linear form for our IPLW partial likelihood estimator and we further establish the weak convergence results of the IPLW empiricall process to prove the asymptotic normality.

\begin{remark}
We further consider the augmented inverse probability of linkage weighting (AIPLW) estimator in Appendix D (supplementary material).  We give the augmented estimating equation and prove that AIPLW estimator is ``doubly''-robust when either the linkage probability is consistently estimated or three regression functions are consistently estimated.  The limitation of the AIPLW estimator is that we need to consistently estimate three regression functions.  These three regression functions are themselves variational dependent and congenial parametric modeling can be difficult for all three functions.  On the other hand,  nonparametric estimation technique does not have the model congeniality problem,  but suffers from the curse of dimensionality when there are a large number of covariates.  For these reasons,  we decide to not implement this ``doubly''-robust estimator.  
\end{remark}

\section{Simulation}	\label{sec::simulation}
We now compare the performances of our proposed IPLW method with several other methods,  including complete-case analysis (CC), complete-case analysis plus (CC+),  Non-linked as censored (NLAC) and the oracle method.  The oracle method assumes that all participants in the clinical trial are linked to the observational follow-up dataset.  We first revisit the motivating example in section \ref{sec::motivating_example}.  
Figure~\ref{fig::fig_motivating_example}
shows that our proposed IPLW method gives both consistent estimates and correct coverages for the 95\% confidence intervals. On the other hand, CC+, CC and NLAC all give below nominal coverages and inconsistent estimates. 
\begin{figure}
  \centering
  \begin{subfigure}[t]{0.45\textwidth} 
  \includegraphics[width=\textwidth]{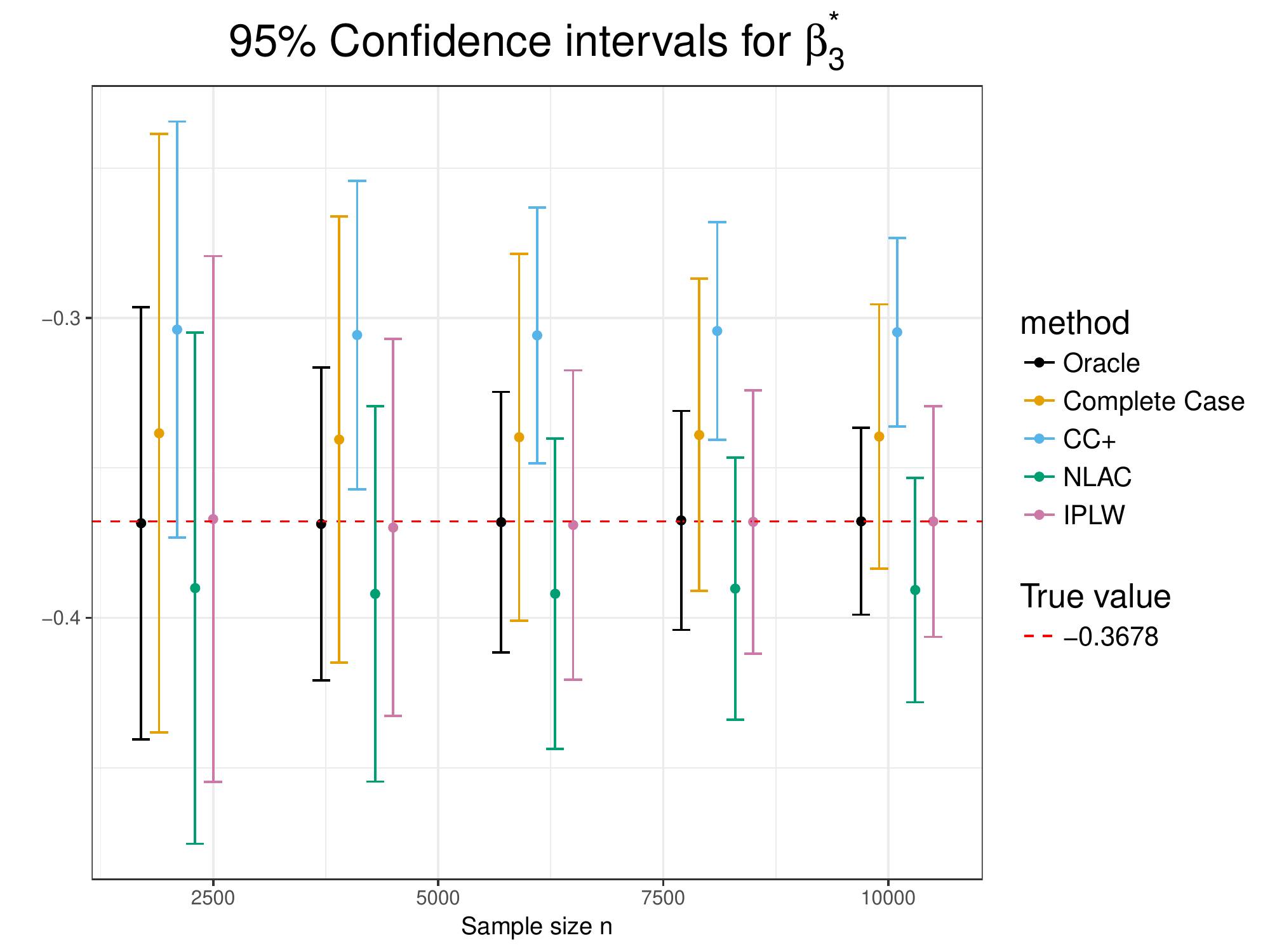}
  \label{fig::fig_motivating_example_ci}
  \end{subfigure}
  \hfill
  \begin{subfigure}[t]{0.45\textwidth} 
    \includegraphics[width=\textwidth]{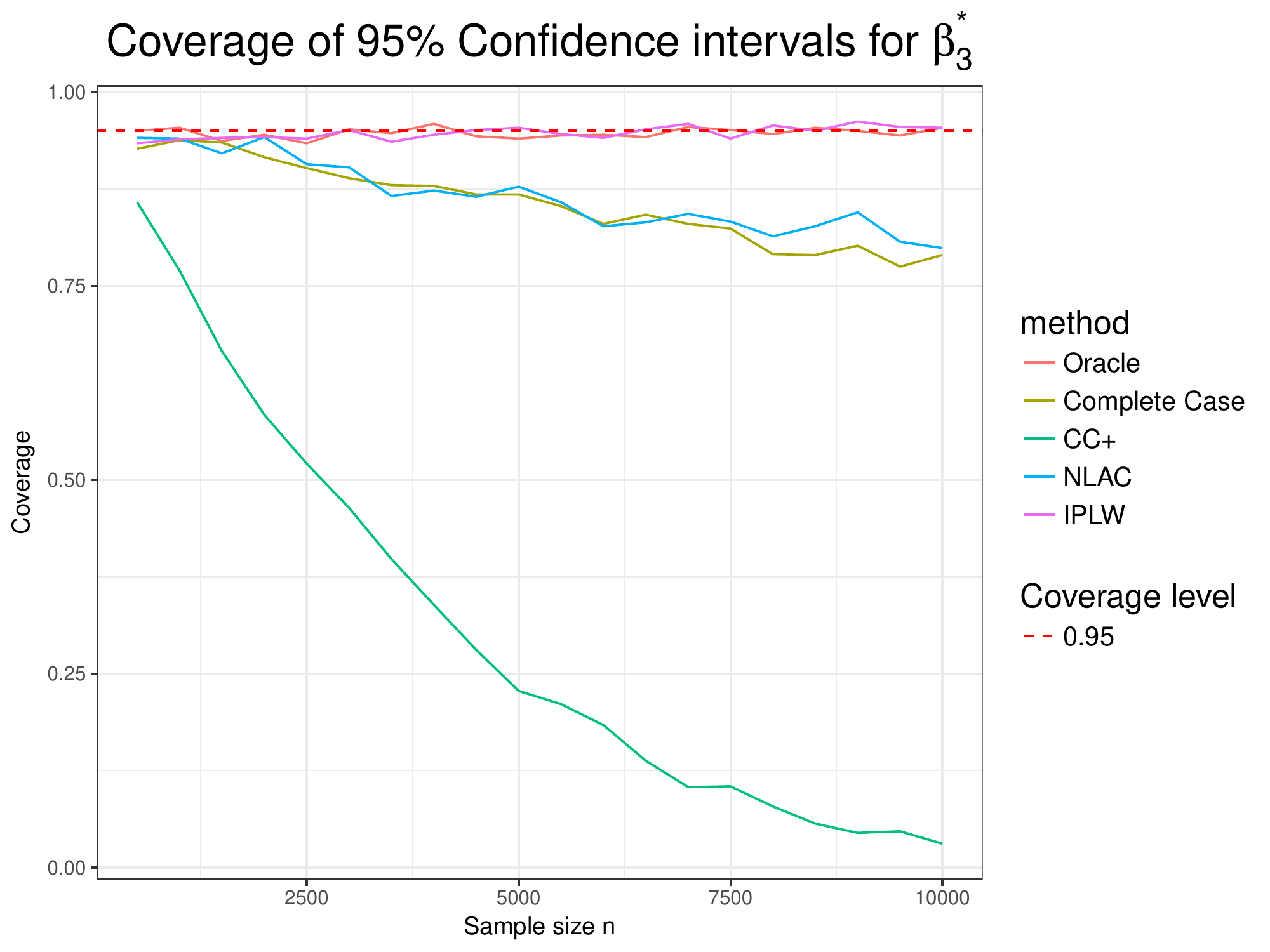}
    \label{fig::fig_motivating_example_coverage}
  \end{subfigure}
  \caption{Revisiting the motivating example in section \ref{sec::motivating_example}.
  Our approach and the oracle approach are the only methods leading to a valid confidence interval.}
  \label{fig::fig_motivating_example}
  \end{figure}

We next perform a more comprehensive set of simulations. We consider the following data generation settings. 
 The hazard function is $\lambda(t | \bX(t)) = \lambda_0(t) \exp(\beta_1 X_1 + \beta_2 X_2 + \beta_3 X_3(t) \times X_1)$ and $\bbeta_0 = (\beta_1, \beta_2, \beta_3) = (-\ln(4), \ln(1.5), 0.5)$. $X_1$ follows a Bernoulli distribution with probability 0.5, $X_2$ follows a normal distribution with mean and variance both being 1 and $X_3 = I(t \geq \tau_1)$ where $\tau_1$ is the end time of clinical trial.  If we treat $X_1$ as the variable for treatment assignment,  $\beta_3$ now represents the difference between the treatment effect after and before $\tau_1$.  This is the Cox model with a change point at $\tau_1$  \eqref{eqn::eqn_change}. The baseline hazard function is $\lambda_0(t) = 0.06$. $C_1$ is exponentially distributed with rate $0.01 X_1 + 0.03$ and the censoring time $C_2$ is set as $C_1$ plus an exponential random variable with rate $0.05 X_1 + 0.03$.  Further,  we set $\tau_1 = 5$ and $\tau_2 = 16$.  
Three linkage mechanisms are considered as follows:
\begin{enumerate}
\item[(1)]  $P(L = 1) = 0.5$ and $L$ is independent of all other variables. This is the linking completely at random (LCAR) case. 


\item[(2)]  
$\log \left\{P(L = 1 | \bX, Q = 0) / P(L = 0 | \bX, Q = 0) \right\} = -0.25 + 0.5 X_1 + 0.5 X_2$
 and $P(L = 1 | Q = 1) = 0.5$. Thus the linkage satisfies CLAR assumption. 

\item[(3)]  
$\log \left\{ P(L = 1 | \bX, Q = 0, \tilde{T}, \Delta) / P(L = 0 | \bX, Q = 0, \tilde{T}, \Delta) \right\} = -0.25 + 0.5 X_1 + 0.5 X_2 - 0.01 \tilde{T} - 0.01 \Delta$
and $P(L = 1 | Q = 1) = 0.5$. This is the linking not at random (LNAR($\tilde{T}$)) case.
\end{enumerate}
Both mechanisms (1) and (2) satisfy our CLAR assumption. Mechanism (2) shows that under the CLAR assumption,  data linkage can still depend on the survival outcomes through the in-trial censoring indicator $Q$. Mechanism (3) slightly violates the CLAR assumption and serves as a case for sensitivity analysis. 

\begin{table}[H]
\caption{Simulation results when Cox model is correctly specified when $n = 2000$. }
\begin{center}
\begin{tabular}{c c c c c c c c }
\hline
 & &   \multicolumn{3}{c}{Bias (Mean SE)} & \multicolumn{3}{c}{Coverage of 95\% CI}\\
 \hline
 Mechanism & Method              & $\beta_1$ & $\beta_2$ & $\beta_3$ & $\beta_1$ & $\beta_2$ & $\beta_3$ \\
 \hline
LCAR & Oracle & -0.00 (0.108) &  0.00 (0.033)  & -0.00  (0.140) & 0.94   & 0.95  & 0.96 \\
\hline
& CC  & -0.00 (0.153) &  0.00 (0.046)  & -0.01  (0.199) & 0.95   & 0.96  & 0.96  \\
\hline
& CC+ &  0.13 (0.110) & -0.03 (0.038)  & -0.13 ( 0.168)  & $0.77^{\dagger}$   & $0.86^{\dagger}$   & $0.89^{\dagger}$  \\
\hline
& NLAC & -0.00  (0.110) &  0.00 (0.038)  & -0.01 (0.168)  & 0.94   & 0.95   & 0.95  \\
\hline
& IPLW & -0.00   (0.114) & 0.00 (0.041)  & -0.01  (0.170)  & 0.94   & 0.95   & 0.95   \\
\hline
 \hline
CLAR & Oracle & -0.00 (0.108) &  0.00 (0.033)  & -0.00  (0.140) & 0.94   & 0.95  & 0.96 \\
\hline
& CC  & -0.18 (0.151) & -0.06 (0.045)   &  0.19  (0.188) & $0.79^{\dagger}$   & $0.75^{\dagger}$  & $0.83^{\dagger}$  \\
\hline
& CC+  & -0.07  (0.109) & -0.10 (0.038)  &  0.09  (0.157)  & 0.91   & $0.23^{\dagger}$  & 0.91  \\
\hline
& NLAC & -0.00  (0.109) &  0.00 (0.037)  & -0.00  (0.157)  & 0.94   & 0.95   & 0.95  \\
\hline
& IPLW & -0.00   (0.110) & 0.00 (0.040)  & -0.00  (0.159)  & 0.94   & 0.95   & 0.96 \\
\hline
\hline
LNAR ($\tilde{T}$) & Oracle & -0.00 (0.108) &  0.00 (0.033)  & -0.00  (0.140) & 0.94   & 0.95  & 0.96 \\
\hline
& CC  &  -0.19 (0.151) & -0.07 (0.045)   &  0.20  (0.189) & $0.79^{\dagger}$   & $0.69^{\dagger}$  & $0.84^{\dagger}$  \\
\hline
& CC+ &  0.07  (0.109) & -0.11 (0.038)  &  0.09  (0.158)  & 0.92   & $0.16^{\dagger}$  & 0.92  \\
\hline
& NLAC & -0.00  (0.109) &  0.00 (0.037)  & -0.01 (0.158)   & 0.94   & 0.95  & 0.95  \\
\hline
& IPLW & -0.00 (0.111) & -0.00 (0.041)  & -0.01  (0.160)  & 0.94   & 0.95   & 0.95\\
\hline
\hline
LNAR ($C_2$) & Oracle & -0.00 (0.108) &  0.00 (0.033)  & -0.00  (0.140) & 0.94   & 0.95  & 0.96 \\
\hline
& CC  & -0.31 (0.154)   & -0.16 (0.055) &  0.33  (0.229) & $0.47^{\dagger}$   & $0.16^{\dagger}$  & $0.70^{\dagger}$  \\
\hline
& CC+  & -0.12  (0.111)   & -0.22 (0.043) &  0.14  (0.202)  & $0.83^{\dagger}$   & $0.00^{\dagger}$  & $0.88^{\dagger}$  \\
\hline
& NLAC & -0.00  (0.110)   &  0.00 (0.042) &  -0.02  (0.202)  & 0.93   & 0.95  & 0.95  \\
\hline
& IPLW & -0.00   (0.117)   & 0.01 (0.057) & -0.02  (0.211)  & 0.93   & 0.94  & 0.94   \\
\hline
\end{tabular}
\end{center}
\footnotesize{We use $^\dagger$ to highlight settings with coverage below 90\%.}
\label{table::td_s1_lcar}
\end{table}

\begin{table}[H]
\caption{Simulation results for when Cox model is misspecified and $n = 2000$.}
\begin{center}
\begin{tabular}{c c c c c c  }
\hline
&  &  \multicolumn{2}{c}{Bias (Mean SE)} & \multicolumn{2}{c}{Coverage of 95\% CI}\\
 \hline
Mechanism &  Method & $\beta_1$ & $\beta_2$ & $\beta_1$ & $\beta_2$ \\
 \hline
LCAR & Oracle & -0.00 (0.068)   &  0.00 (0.033)  & 0.93   & 0.94 \\
\hline
& CC & -0.00 (0.096)   &  0.00 (0.046)  & 0.93   & 0.96  \\
\hline
& CC+ &  -0.00 (0.081)  & -0.03 (0.038)  & 0.93   & $0.84^{\dagger}$   \\
\hline
& NLAC & -0.08  (0.081)   &  0.00 (0.038)  & $0.82^{\dagger}$   & 0.95  \\
\hline
& IPLW &  -0.00  (0.087)   & 0.00 (0.042)  & 0.94   & 0.95   \\
\hline
 \hline
CLAR & Oracle & -0.00 (0.068)   &  0.00 (0.033)   & 0.93   & 0.94 \\
\hline
& CC & -0.04 (0.087)   &  -0.06 (0.045)  & 0.92   & $0.76^{\dagger}$    \\
\hline
& CC+ &  -0.08 (0.075)  &  -0.10 (0.038)  & $0.81^{\dagger}$   & $0.25^{\dagger}$   \\
\hline
& NLAC & -0.05  (0.076)   &  0.00 (0.037)  & $0.89^{\dagger}$   & 0.95   \\
\hline
& IPLW &  -0.00  (0.079)   & 0.00 (0.040)  & 0.93   & 0.95    \\
\hline
\hline
LNAR($\tilde{T}$) & Oracle & -0.00 (0.068)   &  0.00 (0.033)  & 0.93  & 0.94 \\
\hline
& CC & -0.05 (0.088)   & -0.07 (0.046)  & 0.91   & $0.70^{\dagger}$  \\
\hline
& CC+ &  -0.09 (0.076)  & -0.11 (0.038)  &  $0.78^{\dagger}$   & $0.17^{\dagger}$   \\
\hline
& NLAC &  -0.06 (0.077)   & -0.00 (0.038)  & $0.88^{\dagger}$   & 0.94  \\
\hline
& IPLW &  -0.00  (0.080)   & -0.00 (0.041)  & 0.93   & 0.95   \\
\hline
\end{tabular}
\end{center}
\footnotesize{We use $^\dagger$ to highlight settings with coverage below 90\%.}
\label{table::td_s2_lcar}
\end{table}

We consider sample sizes $n = 500, 1,000, \ldots, 10,000$ and  we generate 1,000 samples for each simulation setting.
We fit two Cox regressions. The first Cox regression is fitted with covariates $X_1, X_2, X_3(t) \times X_1$ and the second Cox regression is fitted with $X_1$ and $X_2$ only. Thus, Cox regression is correctly specified for the first regression and mis-specified for the second regression.  This suggests that $\bbeta_0^* = \bbeta_0$ for the first Cox regression.  For the misspecified case,  $\bbeta_0^*$ is estimated with the oracle method by computing the averages of 1,000 parameter estimates with sample size $n = 10,000$.  The mis-specified regression omits the time-dependent covariate $X_3(t) \times X_1$ and thus ignores the change-point at $\tau_1$ for the treatment effect.  The corresponding parameter 
$\beta_1^*$ for $X_1$ can be interpreted as an averaging effect for the entire follow-up period and is equal to $-1.10$, between the clinical trial treatment effect $-\ln(4) \approx -1.39$ and the observational follow-up treatment effect $-\ln(4) + 0.5 \approx -0.89$.

We use the robust variance estimate \citep{lin1989robust} when Cox model is misspecified for all methods other than the IPLW method.  For the IPLW method,  the variance estimate is automatically robust when Cox model is misspecified. When Cox model is correctly specified,  one additional mechanism for linkage is considered as 
\begin{enumerate}
\item[(4)] 
$\log \left\{P(L = 1 | \bX, Q = 0, C_2, \Delta) / P(L = 0 | \bX, Q = 0, C_2, \Delta) \right\} = -0.25 + 0.5 X_1 + 0.5 X_2 - 0.1 C_2 - 0.1 \Delta$
 and $P(L = 1 | Q = 1)  = 0.5$.  We call this linkage mechanism LNAR($C_2$). 
\end{enumerate}
Mechanism (4) is a more serious violation of the CLAR assumption and linkage now depends on the unobserved censoring time $C_2$.
As discussed in section \ref{sec::sec_naive_iplw_comparison}, NLAC should still work under this linkage mechanism.
The percentages of samples that are not linked and censored in the clinical trial are approximately  39\%,  30\%, 32\%, 51\% for these four mechanisms.  

Simulation results are reported in Tables \ref{table::td_s1_lcar} - \ref{table::td_s2_lcar}.  In the table,  bias is the difference of the average of 1,000 parameter estimates and the true parameter value.  Mean standard error (SE) is the average of 1,000 SE estimates.  CI stands for confidence interval. We first discuss the results when Cox model is correctly specified.  When linkage satisfies LCAR,  all methods give consistent estimates and correct coverages for the 95\% confidence intervals except CC+. 
Oracle method gives the smallest variance estimates as each participant is linked.  NLAC gives the second smallest variance estimates.  CC can be viewed as an IPW method with known probability as the weights and it has the largest variance estimates among all methods.  Our proposed IPLW method gives smaller variance estimates than CC for two reasons.  First, IPLW method uses more data than CC;  second,  IPLW method uses estimated weights, which is known to be more efficient than IPW method with known probability as weights.  

When linkage satisfies CLAR but not LCAR, only oracle method, NLAC and our proposed IPLW method give consistent estimate. 
When CLAR is slightly violated,  CC, CC+ all obtain severely biased estimates and confidence intervals with less than nominal coverages.  NLAC and our proposed IPLW method still perform relatively well in this case.  When linkage depends on the censoring time in observational follow-up $C_2$,  NLAC gives consistent estimates and correct coverage as expected.  For this particular simulation setting,  our proposed IPLW method also works pretty well. 

Next,  we discuss the simulation results when Cox model is misspecified.  When linkage satisfies LCAR,  CC+ and NLAC give inconsistent estimates of the parameters and do not achieve nominal coverage for 95\% confidence intervals.  It is expected that CC would perform well in this case as discussed in section \ref{sec::motivating_example}.  Further, NLAC always obtains severely negatively biased estimate of $\beta_1^*$, the averaging treatment effect.  Our proposed IPLW method again obtains smaller variance estimates than CC as more data are fitted and estimated weights improve efficiency.   When the linkage satisfies CLAR but not LCAR,  only oracle method and IPLW method give consistent estimates and correct coverages.  When CLAR is slightly violated,  IPLW approach performs best among all methods other than the oracle method.

\section{SWOG study}	\label{sec::SWOG}
We apply the proposed IPLW method to the SWOG study that links medicare claims data to the PCPT data \citep{unger2018using}.  
The PCPT randomly assigned 18,880 eligible men from 1993 to 1997 to finasteride or placebo daily for seven years. 
PCPT clinical records are linked to participants' medicare claims data according to common social security number, sex and date of birth. 
Medicare claims are available from 1999 to 2011.  
The linkage enables PC to be identified by both clinical records and medicare claims.  
14,176 (75.1\%) participants were linked to medicare claims (finasteride = 7069; placebo = 7107)\footnote{See \cite{unger2018using} for details on linkage criteria}.  
The median time from treatment random assignment to the end of the linked trial medicare dataset was 16 years.  We are interested in studying the effect of treatment finasteride on the time to diagnosis of PC.  Death is treated as censoring.

Of the 14,176 participants with a link to the medicare, 2,037 have a gap between the end of SWOG trial and the start of medicare claims.  The median length of the gap was 1.6 years.  We exclude those participants with a gap.  We fit Cox regression with covariates including the prostate-specific antigen (PSA) level at study entry,  race,  body mass index at study entry,  first degree family history of prostate cancer, age at baseline.  Additional covariates were included for logistic regression modeling linkage:  participants' education level, marital status, employment status, type of jobs.  We further remove participants with any missing covariates and we have 16,518 participants left in the study.   

Following the studies in \cite{unger2018using},  Cox regressions with two change points at 6.5 and 7.5 years are fitted to account for potential differing treatment effects within critical periods of follow-up.  We compared the results of CC,  CC+, NLAC and our proposed IPLW method in table \ref{table::real_data_results}.  Table \ref{table::real_data_results} contains the parameter estimates and 95\% confidence intervals for treatment Finasteride in different time periods\footnote{on the exponential level}.  Overall, the results does not differ much between all four methods based on the 95\% confidence intervals. 
A key reason might be that the linkage rate was high for the original study \citep{unger2018using} as 75\% of the participants were linked. Further, they examined potential health care utilization differences by arm and other potential biases in \cite{unger2018using} and found no evidence of strong differences. 
This suggests that linkages might be following a LCAR mechanism. 
We also obtained robust variance estimates \citep{lin1989robust} and the corresponding confidence intervals. The results are very similar to the nonrobust ones. In summary, finasteride arm participants had a 30\% decrease in the hazard ratio of prostate cancer (hazard ratio (HR) = 0.70, 95\% confidence intervals (CI) = 0.61 - 0.80) during the first 6.5 years. The effect of finasteride is strongest between 6.5 - 7.5 years (HR = 0.67, 95\% CI = 0.60 - 0.75). The long-term effect of finasteride after the 7.5 years does not seem to increase the risk of PC (HR = 1.11, 95\% CI = 0.95 - 1.30). It is worth noting that CC, CC+ and NLAC obtain more similar long-term effects estimates compared to our proposed IPLW methods. We further fit a Cox regression without any change points and the results are in table \ref{table::real_data_results_with_no_change_point}. The results again does not differ too much between all four methods, despite the fact that CC, CC+ and NLAC share more similar results compared to our IPLW method. In summary, the long-term effect of finasteride, now estimating the averaging effect over the entire follow-up period, is still beneficial (HR = 0.79, 95\% CI = 0.73 - 0.85). 



\begin{table}[H]
  \caption{SWOG study long-term effect estimation with two change points}
  \begin{center}
  \begin{tabular}{ c c c c }
  \hline
  \hline
   Methods   &  Finasteride (0 - 6.5 years)  & Finasteride (6.5 - 7.5 years) &  Finasteride (7.5 years + )\\
   \hline
  IPLW & 0.696 (0.607 - 0.797) & 0.670 (0.599 - 0.749) & 1.113 (0.951 - 1.303) \\
  CC & 0.683 (0.587 - 0.795) & 0.662 (0.586 - 0.747) & 1.087 (0.933 - 1.265) \\
  CC+ & 0.699 (0.610 - 0.801) & 0.663 (0.594 - 0.740) & 1.086 (0.933 - 1.265) \\
  NLAC & 0.697 (0.608 - 0.798) & 0.668 (0.600 - 0.744) & 1.079 (0.927 - 1.257) \\
  \hline
  \hline
  \end{tabular}
  \end{center}
  \label{table::real_data_results}
\end{table}

\begin{table}[H]
  \caption{SWOG study long-term effect estimation with no change points}
  \begin{center}
  \begin{tabular}{ c c }
  \hline
  \hline
   Methods   &  Finasteride\\
   \hline
  IPLW & 0.790 (0.732 - 0.853) \\
  CC & 0.767 (0.708 - 0.831) \\
  CC+ & 0.758 (0.704 - 0.816)  \\
  NLAC & 0.757 (0.703 - 0.814)  \\
  \hline
  \hline
  \end{tabular}
  \end{center}
  \label{table::real_data_results_with_no_change_point}
\end{table}


\section{Comparison of NLAC and IPLW} \label{sec::sec_naive_iplw_comparison}
Now we give a detailed comparison between NLAC and IPLW method.  For notational simplicity,  we only present the results with time-independent covariates.  
When Cox model is correctly specified, NLAC gives consistent estimates as long as censoring time $C$ is independent of $T$ given $\bX$.  
Recall the censoring time is modified as
$$
C_{\text{NLAC}} = L \max(C_1, C_2) + (1 - L) C_1
= L[\max(C_1, C_2) - C_1] + C_1
$$ 
with NLAC. The independent censoring assumption holds if 
$$
\text{\bf (N1)} \ L \indep T | \bX, C_1, C_2
$$ 
holds since
$$
\text{\bf (N1)} + (C_1, C_2) \indep T | \bX  \Rightarrow (L, C_1, C_2) \indep T |  \bX
$$
and $(L, C_1, C_2) \indep T | \bX$ implies that $C_{\text{NLAC}} \indep T | \bX$. 
Further,  we study how NLAC works under the CLAR assumption. First, we can modify the CLAR assumption as 
\[
\text{\bf (N2)} \ L \indep (\tilde{T}, \Delta) | \bX, Q = 0, C_1
\]
as $C_1$ is always observed when $Q = 0$ and one sufficient assumption for $\text{\bf (N2)}$ is 
$$
\text{\bf (N3)} \ L \indep (T, C_2) | (\bX, Q = 0, C_1).
$$
With a bit abuse of notation, we also call assumption $\text{\bf (N3)}$ the CLAR assumption. Next,  we have the following proposition.
\begin{proposition} \thlabel{prop::prop_naive_method_linkage_assumption}
When Cox model is correctly specified and $(C_1, C_2) \indep T | \bX$,  if the following assumption
$$
\text{\bf (N4)} \ L \indep T | (\bX, Q = 0, C_1, C_2)
$$ 
holds, NLAC provides consistent estimates for $\bbeta_0$. 
\end{proposition}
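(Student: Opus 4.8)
The plan is to reduce the claim to the implication already displayed in the text, namely that $(L, C_1, C_2) \indep T \mid \bX$ forces $C_{\text{NLAC}} \indep T \mid \bX$, after which consistency of the NLAC partial-likelihood estimator follows from standard independent-censoring theory for a correctly specified Cox model. Since assumption (N4) only delivers conditional independence of $L$ and $T$ on the subpopulation $\{Q=0\}$, the whole difficulty is to show that the uncontrolled behaviour of $L$ on $\{Q=1\}$ is harmless. The key structural observation is that on $\{Q=1\}$ we have $T \le C_1 \le C_{\text{NLAC}}$, so every such participant is uncensored under NLAC with $(\tilde{T}_{\text{NLAC}}, \Delta_{\text{NLAC}}) = (T,1)$ irrespective of the value of $L$; consequently the NLAC observed data, and hence the NLAC estimating equation \eqref{eqn::weighted_pl}, does not depend on $L$ restricted to $\{Q=1\}$.

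I would then exploit this invariance by a coupling. Define a modified linkage indicator $\tilde{L}$ that agrees with $L$ on $\{Q=0\}$ and, on $\{Q=1\}$, is drawn conditionally independently of $T$ given $(\bX, C_1, C_2)$ with the same conditional linkage probability $\pi(\bX, C_1, C_2) = P(L=1 \mid \bX, Q=0, C_1, C_2)$ as on $\{Q=0\}$. By the previous observation this substitution leaves the NLAC observed data, and the estimator it produces, unchanged. The point of the construction is that $\tilde{L}$ now satisfies the stronger assumption (N1): for fixed $(\bX, C_1, C_2)$ the probability $P(\tilde{L} = 1 \mid T, \bX, C_1, C_2)$ equals $\pi(\bX, C_1, C_2)$ on both $\{T > C_1\}$ and $\{T \le C_1\}$, hence does not depend on $T$, i.e. $\tilde{L} \indep T \mid \bX, C_1, C_2$.

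Next, combining $\tilde{L} \indep T \mid \bX, C_1, C_2$ with the maintained hypothesis $(C_1,C_2) \indep T \mid \bX$ yields $(\tilde{L}, C_1, C_2) \indep T \mid \bX$ by the same two-step conditioning argument used in the text, and therefore $\tilde{C}_{\text{NLAC}} \indep T \mid \bX$. Under a correctly specified Cox model, independent censoring makes the NLAC partial-likelihood score an unbiased estimating function at $\bbeta_0$, so the estimator for the modified problem converges to $\bbeta_0$; since it shares the observed data and estimating equation with the original NLAC procedure, NLAC itself is consistent for $\bbeta_0$.

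The main obstacle, and the step that needs the most care, is the passage from the two conditional statements indexed by $Q$ to the single unconditional-in-$Q$ independence $\tilde{L} \indep T \mid \bX, C_1, C_2$: because $Q = I(T \le C_1)$ is itself a function of $T$, one cannot simply glue independence on $\{Q=0\}$ and on $\{Q=1\}$, and the coupling circumvents this only by forcing the two branch probabilities to coincide so that the conditional law of $\tilde{L}$ given $T$ is literally constant in $T$. As an alternative that avoids coupling, I would instead verify independent censoring directly at the level of the risk set: decompose $\{C_{\text{NLAC}} \ge t\} = \{C_1 \ge t\} \cup \{C_1 < t,\, L = 1,\, C_2 \ge t\}$, note that the second event forces $T \ge t > C_1$ and hence $Q = 0$ so that (N4) applies, and then use $(C_1,C_2) \indep T \mid \bX$ to check that the observed NLAC hazard on the risk set equals the true $\lambda(t \mid \bX)$, with the linkage probability cancelling between numerator and denominator.
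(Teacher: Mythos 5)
Your proof is correct, but it reaches the conclusion by a genuinely different route than the paper. The paper's own proof is a direct population-level computation: it writes the probability limit $U_0(\bbeta)$ of the NLAC partial score, splits $\E[\Delta I(L+Q>0)\bX]$ into its $\{Q=1\}$ and $\{L=1,Q=0\}$ parts, and, using the factorization $\lambda(t\mid \bX)=\lambda_0(t)\exp(\bX^T\bbeta_0)$, the hypothesis $(C_1,C_2)\indep T\mid \bX$, and the function $\pi(\bX,C_1,C_2)=\E[I(L=1)\mid T, Q=0, C_1,C_2,\bX]$ (well defined, i.e.\ free of $T$, precisely because of (N4)), verifies two integral identities, one on $[0,\tau_1]$ and one on $[\tau_1,\tau_2]$, showing that $U_0(\bbeta_0)=\bm{0}$; Andersen--Gill type arguments then complete the consistency claim. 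You instead reduce (N4) to the already-treated condition (N1) by a coupling: since every $\{Q=1\}$ participant enters the NLAC analysis as the observed failure $(T,1)$ whatever the value of $L$, you may re-randomize $L$ on $\{Q=1\}$ (on an enlarged probability space) to have the same conditional linkage probability $\pi(\bX,C_1,C_2)$ as on $\{Q=0\}$, so that $P(\tilde L=1\mid T,\bX,C_1,C_2)$ is literally constant in $T$; for a binary indicator this is equivalent to $\tilde L\indep T\mid(\bX,C_1,C_2)$, contraction with $(C_1,C_2)\indep T\mid\bX$ gives $\tilde C_{\mathrm{NLAC}}\indep T\mid\bX$, and since the coupled data coincide pointwise with the NLAC data, the implication displayed in Section~\ref{sec::sec_naive_iplw_comparison} finishes the proof. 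You also correctly identified, and correctly resolved, the one real subtlety: independence statements on the two $Q$-branches cannot simply be glued because $Q$ is a function of $T$, and the coupling works only because the two branch probabilities are forced to coincide. As for what each approach buys: yours is shorter, avoids all integral bookkeeping, and makes transparent why the behaviour of $L$ on $\{Q=1\}$ is irrelevant; the paper's computation is heavier but localizes the use of (N4) to two specific steps (its equations \eqref{myeq1} and \eqref{myeq2}), which is exactly what lets the authors remark afterwards that the alternative assumption (N5), in which linkage may additionally depend on $\Delta$, also suffices, and to extend the result to time-dependent covariates. Your coupling does not cover (N5): matching the branch probabilities would then require $\tilde L$ to depend on $\Delta$, hence on $T$, and the reduction to independent censoring breaks down, so the two arguments are not interchangeable beyond the proposition as stated.
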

The proof is given in Appendix E (supplementary material). By the weak union property of conditional independence, 
we have that 
$$
\text{\bf (N3)} \Rightarrow \text{\bf (N4)} 
$$
On the other hand,  using the contraction property of conditional independence,  we further have 
$$
\text{\bf (N4)} +  L \indep C_2 | \bX,  Q = 0, C_1\Rightarrow \text{\bf (N3)}
$$
Thus, to conclude, NLAC works under a slightly weaker assumption than CLAR $\text{\bf (N3)}$ in that the linkage can further depends on the potentially missing $C_2$. On the other hand, when Cox model is mis-specified,  the parameter of interest $\bbeta_0^*$ now depends on the actual distribution of the censoring time $C = \max(C_1, C_2)$ and NLAC always gives inconsistent parameter estimates of $\bbeta_0^*$ since the distribution of the censoring time is modified.  In contrast, our proposed IPLW method still gives consistent estimates under the CLAR assumption (A1).

\section{Discussion}	\label{sec::discussion}


In this paper,  we consider the problem of long-term effect estimation by fitting a Cox model to a partially linked dataset.  We propose a novel CLAR assumption that allows us to construct an elegant IPLW estimator that consistently estimates the underlying parameters as if all participants are linked. 
There have been a limited number of studies on incomplete linkages, other than \cite{kim2012regression}, but their focus is on linear regression with probabilistic record linkage.  While \cite{baldi2010impact} have discussed potential biases caused by incomplete linkages for Cox regression with simulation studies, no theoretical analysis has been conducted.  In contrast,  we consider the problem when data is linked by unique identifiers and thus we do not need account for incorrect linkages. This allows us to develop rigorous asymptotic theories for our proposed estimators and also compare with some other alternative methods.  
Here  we point out some possible future directions.
\begin{itemize}

\item {\bf Interval Censoring.}
We have made the ``no gap'' assumption in the current paper to focus on the right censoring problem for simplicity.  However, in practice, it is possible that there might be gaps between the clinical trial and observational follow-up dataset.  Thus,  to fully deal with the problem,  we need to extend our current procedure to the interval censoring case as mentioned in Appendix F (supplementary material).  \cite{saegusa2013weighted} has studied the problem of two-phase sampling for Cox models under interval censoring. Generalizing their techniques to the current linkage problem remains an open question.

\item {\bf Beyond CLAR and sensitivity analysis.}
CLAR may not hold in certain situations. 
For instance, if the data being linked is from another study, in which
the time to event variable $T$ may influence the chance that someone participates, then (A1) will no longer be true.
In this case,  we may need to model the linkage probability that depends on $T$, which could be seen as a sensitivity analysis \citep{little2012prevention, little2019statistical} on perturbing assumption (A1). 
How to analyze the data in this case is left as a future work.

\item {\bf Missing covariates.}
Another direction that we will be exploring is the case of missing covariates \citep{tsiatis2007semiparametric}.
Missing covariates is a common issue in medical research. 
When part of $\bX$ is missing, CLAR will no longer be enough to identify the underlying parameter
since the linkage probability cannot be computed for every individual.
In this case, we have to impose additional assumptions on the missingness of $\bX$.
However, such assumption has to be carefully chosen so that it will not conflict with the assumption on the linkage. 

\end{itemize}

%

\if1\blind{
\bigskip
\begin{center}
{\large\bf Acknowledgements}
\end{center}
We would like to thank Michael L Leblanc and Catherine M.Tangen for helpful comments. 
} \fi

\bigskip
\begin{center}
{\large\bf SUPPLEMENTARY MATERIAL}
\end{center}

The supplementary material contains the following appendices. Appendix A contains the derivation of IPLW partial score. Appendix B and C contains the results for IPLW empirical process theory and the proof for technical results. Appendix D discusses the doubly robust estimator. Appendix E contains more discussions of linkage assumptions and NLAC method. Appendix F discusses the relaxation of the ``no gap'' assumption. Appendix G and H contain more simulation results. 

\bibliographystyle{apalike}
\bibliography{gang}

\appendix

\section{Derivation of IPLW partial score} \label{sec::iplw_partial_score}
Assume that there are $n_{00}$ observations with $L = 0$ and $Q = 0$ (Class 3 of Figure~\ref{fig::ex01}) and define weight $\hat{w}_{i} = \frac{I(L_{i} + Q_{i} > 0)}{Q_{i} + (1 - Q_i) \pi_{\hat{\bgamma}_n}(\bX_{i})}$.  Now we are ready to derive the IPLW partial likelihood for estimating $\bbeta_0^*$.  Assume that $\tilde{T}_{(1)} < \tilde{T}_{(2)} < \cdots < \tilde{T}_{(n - n_{00})}$ are the ordered $\tilde{T}_i$'s, and $\bX_{(i)}, L_{(i)}, Q_{(i)}, \Delta_{(i)}$ are the corresponding covariates, linkage indicator,  in-trial censoring indicator and censoring indicator.  Denote $\Lambda_0(t)$ as the cumulative hazard function.  Let $h_i = d\Lambda_0(\tilde{T}_{(i)}) = \Lambda_0(\tilde{T}_{(i)}) - \Lambda_0(\tilde{T}_{(i)}^-)$ and $\Lambda_0(\tilde{T}_{(i)}) = \sum_{j \leq i} h_j$. It is known\citep{van2000asymptotic} that maximization with respect to $\lambda_0$ can be done by maximizing $\bm{h} = (h_1, \ldots, h_{n - n_{00}})$ and thus we only need to consider the case where $l_n(\bbeta,\lambda_0)  = l_n(\bbeta,\bf{h}) $. As a result, the IPLW log-likelihood can be rewritten as:
\begin{align*}
l_n(\bbeta,\lambda_0)  = l_n(\bbeta,\bf{h}) 
&= \frac{1}{n} \sum_{i=1}^n \hat{w}_{(i)} \left[\Delta_{(i)} \log h_i + \Delta_{(i)} \bX_{(i)}^T \bbeta - \exp(\bX_{(i)}^T \bbeta) \sum_{j \leq i} h_j \right].
\end{align*}
Thus, maximizing this with respect to $\bm{h} = (h_1, \ldots, h_{n - n_{00}})$ leads to
$$
\hat{h}_i = \frac{\Delta_{(i)} \hat{w}_{(i)}}{\sum_{j \geq i} \hat{w}_{(j)} \exp(\bX_{(j)}^T \bbeta)}
$$
Take $\hat {\bm{h}}=(\hat{h}_1,\cdots, \hat h_{n-n_{00}})$ back to the empirical log-likelihood, we obtain the IPLW partial log-likelihood
\begin{align*}
\mathcal{L}_n(\bbeta) & = l_n(\bbeta,\hat {\bm{h}}) = \frac{1}{n}\sum_{i=1}^{n}  \Delta_{(i)} \hat{w}_{(i)} \left(\bX_{(i)}^T \bbeta - \log\left(\sum_{j \geq i} \hat{w}_{(j)} \exp(\bX_{(j)}^T \bbeta)\right) \right)
\end{align*}
which can be further simplified as 
\begin{align*}
\mathcal{L}_n(\bbeta) = \frac{1}{n} \sum_{i=1}^n \Delta_i \hat{w}_i\left(\bX_i^T \bbeta - \log \left(\sum_{j = 1}^n I(\tilde{T}_j \geq \tilde{T}_i) \hat{w}_j \exp(\bX_j^T \bbeta)\right) \right)
\end{align*}
and then the partial score is
\begin{equation}
  \hat{\bU}_n(\bbeta) = \frac{1}{n} \sum_{i=1}^n \Delta_i \hat{w}_i \left(\bX_i - \frac{\sum_{j = 1}^{n} \hat{w}_j  I(\tilde{T}_j \geq \tilde{T}_i) \exp(\bX_j^T \bbeta) \bX_j }{\sum_{j = 1}^{n} \hat{w}_j I(\tilde{T}_j \geq \tilde{T}_i) \exp(\bX_j^T \bbeta)} \right),
  \end{equation}

  To derive the asymptotic distribution of $\hat{\bbeta}_n$,  note that 
  \begin{align*}
  \hat{\bU}_n(\bbeta) - \hat{\bU}_n(\bbeta_0^*) = \left. \frac{\partial \hat{\bU}_n(\bbeta)}{\partial \bbeta} \right \rvert_{\bbeta = \bbeta_0^*} (\bbeta - \bbeta_0^*)+ o_P(\|\bbeta - \bbeta_0^*\|)
  \end{align*}
  Thus, choosing $\bbeta = \hat{\bbeta}_n$ leads to
  $$
  \sqrt{n}(\hat{\bbeta}_n - \bbeta_0^*) \approx \left(-\left. \frac{\partial \hat{\bU}_n(\bbeta)}{\partial \bbeta} \right \rvert_{\bbeta = \bbeta_0^*}\right)^{-1} n^{1/2} \hat{\bU}_n(\bbeta_0^*)
  $$
  As a result,  we just need to prove that $- \left. \frac{\partial \hat{\bU}_n(\bbeta)}{\partial \bbeta} \right \rvert_{\bbeta = \bbeta_0^*} \rightarrow_p -\left. \frac{\partial \bU_0(\bbeta)}{\partial \bbeta} \right \rvert_{\bbeta = \bbeta_0^*} = \bSigma_0$ and $n^{-1/2} \hat{\bU}_n(\bbeta_0^*)$ converges to a normal distribution.  
  We prove the second convergence by showing that $n^{1/2}\hat{\bU}_n(\bbeta_0^*)$ has a weighted asymptotically linear expansion: $n^{1/2} \hat{\bU}_n(\bbeta_0^*) = n^{-1/2}\sum_{i=1}^n \hat{w}_i \bU_i(\bbeta_0^*) + o_p(1)$
  with  
  $$
  \bU_i(\bbeta_0^*) = \int^{\tau_2}_0 \left[ \bX_i(t) - \frac{\bs^{(1)}(\bbeta_0^*, t)}{\bs^{(0)}(\bbeta_0^*, t)}\right]dN_i(t) - \int^{\tau_2}_0 \frac{Y_i(t) \exp({\bbeta_0^*}^T \bX_i(t))}{\bs^{(0)}(\bbeta_0^*, t)} \left[ \bX_i(t) - \frac{\bs^{(1)}(\bbeta_0^*, t)}{\bs^{(0)}(\bbeta_0^*, t)}\right] d\tilde{N}(t) 
  $$
  and $\tilde{N}(t) = \E[N(t)]$.  Similar asymptotic linear expansions have appeared in \cite{lin1989robust} and \cite{lin2000fitting}.   This weighted asymptotic linear  expansion motivates the study of the IPLW empirical measure and processes.  

We start by giving the weighted asymptotic linear expansion of the IPLW partial score $\hat{\bU}_n(\bbeta)$ and subsequently give its asymptotic distribution. 
\begin{theorem}[Asymptotic linear expansion]  \thlabel{thm::thm_Cox_right_asn_d}
Under assumptions (D1) - (D4),
we have the following two results:
\begin{enumerate}
\item For each $\bbeta$, $n^{1/2} \hat{\bU}_n(\bbeta) = n^{-1/2}\sum_{i=1}^n \hat{w}_i \bU_i(\bbeta) + o_p(1)$ such that 
$$
\bU_i(\bbeta) = \int^{\tau_2}_0 \left[ \bX_i(t) - \frac{\bs^{(1)}(\bbeta, t)}{\bs^{(0)}(\bbeta, t)}\right]dN_i(t) - \int^{\tau_2}_0 \frac{Y_i(t) \exp(\bbeta^T \bX_i(t))}{\bs^{(0)}(\bbeta, t)} \left[ \bX_i(t) - \frac{\bs^{(1)}(\bbeta, t)}{\bs^{(0)}(\bbeta, t)}\right] d\tilde{N}(t) 
$$
with $\tilde{N}(t) = \E[ N(t)]$.  
\item $n^{-1/2} \hat{\bU}_n(\bbeta_0^*) \rightarrow_d N(0, \bSigma)$ with 
\begin{align*}
\Sigma &= \Var[\bU_1(\bbeta_0^*)] + \E \left[\bU_1(\bbeta_0^*) \bU_1(\bbeta_0^*)^T \frac{I(Q = 0)[1 - \pi_{\bgamma_0}(\bZ_1)]}{\pi_{\bgamma_0}(\bZ_1)}
\right] \\
& - \bQ_e(\bU_1(\bbeta_0^*))^T \bSigma^{-1}(\bgamma_0) \bQ_e(\bU_1(\bbeta_0^*)),
\end{align*}
where $\bQ_e(\bU_1(\bbeta_0^*)) = \P_0[ I(Q = 0) (1 - \pi_{\bgamma_0}(\bZ_1)) \tZ_1 \bU_1(\bbeta_0^*)^T]$.
\end{enumerate}
\end{theorem}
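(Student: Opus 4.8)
The plan is to prove the two assertions in order: first the per-subject asymptotic linear expansion of $n^{1/2}\hat{\bU}_n(\bbeta)$ with the (estimated) weights carried along, and then to absorb the estimation of $\bgamma_0$ and read off the limiting covariance. For statement 1, fix $\bbeta$ and write the score in counting-process form, $n^{1/2}\hat{\bU}_n(\bbeta) = n^{-1/2}\sum_i \hat w_i\int_0^{\tau_2}\bX_i(t)\,dN_i(t) - n^{1/2}\int_0^{\tau_2}\frac{\bS_{n,w}^{(1)}(\bbeta,t)}{\bS_{n,w}^{(0)}(\bbeta,t)}\,d\bar N(t)$, with $\bar N(t) = n^{-1}\sum_i \hat w_i N_i(t)$. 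First I would invoke the Glivenko--Cantelli property of the IPLW empirical process (Appendix B) to get $\sup_t\|\bS_{n,w}^{(k)}(\bbeta,t) - \bs^{(k)}(\bbeta,t)\|\to_p 0$ and, using the unbiasedness of the weights under (D1), $\sup_t|\bar N(t) - \tilde{N}(t)| = O_p(n^{-1/2})$ with $\tilde{N}(t)=\E[N(t)]$. The decisive algebraic step is the exact identity $\frac{\bS_{n,w}^{(1)}}{\bS_{n,w}^{(0)}} - \frac{\bs^{(1)}}{\bs^{(0)}} = \frac{1}{\bS_{n,w}^{(0)}}\big(\bS_{n,w}^{(1)} - \tfrac{\bs^{(1)}}{\bs^{(0)}}\bS_{n,w}^{(0)}\big)$, whose numerator $\psi_n(t) := \bS_{n,w}^{(1)} - \tfrac{\bs^{(1)}}{\bs^{(0)}}\bS_{n,w}^{(0)}$ is \emph{centered}, since $\bs^{(1)} - \tfrac{\bs^{(1)}}{\bs^{(0)}}\bs^{(0)} = 0$.

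Substituting this identity into the second integral and splitting $d\bar N = d(\bar N - \tilde{N}) + d\tilde{N}$ produces (i) $n^{1/2}\int \psi_n(t)\,\bS_{n,w}^{(0)}(t)^{-1}\,d\tilde{N}(t)$ and (ii) a remainder $n^{1/2}\int \psi_n(t)\,\bS_{n,w}^{(0)}(t)^{-1}\,d(\bar N - \tilde{N})(t)$. Term (i), after replacing $\bS_{n,w}^{(0)}$ by $\bs^{(0)}$ (that replacement error is a product of two $O_p(n^{-1/2})$ factors against the finite deterministic measure $d\tilde{N}$, hence $o_p(1)$ after scaling), equals $n^{-1/2}\sum_i \hat w_i\int_0^{\tau_2}\frac{Y_i(t)\exp(\bbeta^T\bX_i(t))}{\bs^{(0)}(\bbeta,t)}\big[\bX_i(t) - \tfrac{\bs^{(1)}}{\bs^{(0)}}\big]\,d\tilde{N}(t) + o_p(1)$, which is exactly the second piece of $\bU_i(\bbeta)$. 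Term (ii) is of the form $\G_n^w(g_n)$, a weighted empirical process evaluated at a data-dependent integrand $g_n$ that tends to $0$ in $L_2(\tilde{N})$; here the naive bound only gives $O_p(1)$, so the crux is that under (D2)--(D3) these $g_n$ lie in a fixed Donsker class, whence asymptotic equicontinuity forces $\G_n^w(g_n)\to_p 0$. Collecting (i) with the first integral gives $n^{1/2}\hat{\bU}_n(\bbeta) = n^{-1/2}\sum_i \hat w_i\bU_i(\bbeta) + o_p(1)$.

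For statement 2, I apply statement 1 at $\bbeta_0^*$; since a tower argument gives $\E[\bU_i(\bbeta_0^*)] = \bU_0(\bbeta_0^*)=0$, only the estimation of $\bgamma_0$ remains. Writing $\hat w_i = w_i(\hat{\bgamma}_n)$ and Taylor-expanding, $\nabla_{\bgamma}w_i(\bgamma_0) = -I(Q_i=0)L_i\frac{1-\pi_{\bgamma_0}(\bZ_{1i})}{\pi_{\bgamma_0}(\bZ_{1i})}\tZ_{1i}$ because $w_i\equiv1$ on $\{Q_i=1\}$. Combining the MLE expansion $n^{1/2}(\hat{\bgamma}_n - \bgamma_0) = \bSigma^{-1}(\bgamma_0)\,n^{-1/2}\sum_i\ell_{\bgamma,i} + o_p(1)$, with logistic score $\ell_{\bgamma,i} = I(Q_i=0)(L_i - \pi_{\bgamma_0}(\bZ_{1i}))\tZ_{1i}$ and information $\bSigma(\bgamma_0) = \E[I(Q=0)\pi_{\bgamma_0}(1-\pi_{\bgamma_0})\tZ_1\tZ_1^T]$, together with the CLAR conditional independence $L\indep\bU_i(\bbeta_0^*)\mid Q_i=0,\bZ_{1i}$, gives $n^{-1}\sum_i\bU_i\,\nabla_{\bgamma}w_i^T \to_p -\bQ_e(\bU_1(\bbeta_0^*))^T$. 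Hence $n^{-1/2}\hat{\bU}_n(\bbeta_0^*) = n^{-1/2}\sum_i\xi_i + o_p(1)$ with i.i.d.\ mean-zero summands $\xi_i = w_i(\bgamma_0)\bU_i(\bbeta_0^*) - \bQ_e(\bU_1(\bbeta_0^*))^T\bSigma^{-1}(\bgamma_0)\ell_{\bgamma,i}$, and the ordinary CLT yields asymptotic normality with $\bSigma = \Var(\xi_1)$.

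It remains to identify $\Var(\xi_1)$ with the stated $\bSigma$, which I would do using only $L^2=L$ and $\E[L\mid Q=0,\bZ_1]=\pi_{\bgamma_0}(\bZ_1)$ together with CLAR. The weighted-square term splits as $\E[w_1(\bgamma_0)^2\bU_1\bU_1^T] = \Var[\bU_1(\bbeta_0^*)] + \E\big[\bU_1\bU_1^T\,\tfrac{I(Q=0)(1-\pi_{\bgamma_0})}{\pi_{\bgamma_0}}\big]$, giving the first two terms; the cross and score-square contributions collapse via $\E[w_1\bU_1\ell_{\bgamma,1}^T] = \bQ_e(\bU_1(\bbeta_0^*))^T$ and $\E[\ell_{\bgamma,1}\ell_{\bgamma,1}^T] = \bSigma(\bgamma_0)$ into $-\bQ_e^T\bSigma^{-1}(\bgamma_0)\bQ_e$, completing the identification. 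The main obstacle is not this bookkeeping but the empirical-process input used repeatedly above: because the Cox model is treated as a working, possibly misspecified, model and the covariates are time-dependent, the martingale structure underlying classical Andersen--Gill arguments is unavailable, so the $o_p(1)$ control of the second-order terms must instead come from the Donsker property and asymptotic equicontinuity of the \emph{weighted} empirical process. Establishing that property---controlling the entropy of the relevant function classes under the bounded-variation assumption (D2) and obtaining the correct two-source variance decomposition in the spirit of \cite{saegusa2013weighted}---is the technical heart of the argument, and is where I would concentrate the effort.
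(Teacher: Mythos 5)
Your proposal is correct. For assertion 1 it is essentially the paper's own proof: the same four-way decomposition of $\sqrt{n}\hat{\bU}_n(\bbeta)$, with your exact identity $\bS^{(1)}_{n,w}/\bS^{(0)}_{n,w}-\bs^{(1)}/\bs^{(0)}=\psi_n/\bS^{(0)}_{n,w}$ followed by replacing $\bS^{(0)}_{n,w}$ with $\bs^{(0)}$ being just a tidier version of the paper's expansion of the ratio in its term (III), and with the same inputs (uniform convergence of $\bS^{(k)}_{n,w}$ from the IPLW Glivenko--Cantelli result, plus negligibility of the cross term in $d(\bar N-\tilde N)$). On that cross term (the paper's term (IV)) you are in fact more careful than the paper, which waves it through via uniform convergence of the integrand and tightness of $\sqrt{n}(\bar N(\tau_2)-\tilde N(\tau_2))$; you correctly observe that a sup-norm bound only yields $O_p(1)$ and that the $o_p(1)$ must come from asymptotic equicontinuity of the weighted process over a fixed Donsker class, which is exactly what the paper's Appendix B machinery provides. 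Where you genuinely diverge is assertion 2: the paper views $\bU_i(\bbeta_0^*)$ as an element $\eta_{\bbeta_0^*}$ of a Donsker class, writes $\sqrt{n}\hat{\bU}_n(\bbeta_0^*)=\G_n^{\pi,e}\eta_{\bbeta_0^*}+o_P(1)$, and reads both the normality and the covariance $\bSigma$ off its estimated-weight IPLW weak-convergence proposition, whereas you give a self-contained finite-dimensional argument: Taylor expansion of $\hat w_i=w_i(\hat\bgamma_n)$ in $\bgamma$, the logistic-MLE linear expansion (the paper's lemma on $\hat\bgamma_n$), the explicit influence function $\xi_i=w_i(\bgamma_0)\bU_i(\bbeta_0^*)-\bQ_e(\bU_1(\bbeta_0^*))^T\bSigma^{-1}(\bgamma_0)\ell_{\bgamma,i}$, the ordinary CLT, and a direct verification that $\Var(\xi_1)$ equals the stated $\bSigma$ using $L^2=L$ and CLAR. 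The two routes are equivalent in substance---the paper's proposition carries out the same expansion internally and produces the same $\bQ_e$ correction---but yours is more elementary and suffices because only pointwise convergence at $\bbeta_0^*$ is needed (a point the paper itself concedes in a footnote), while the paper's process-level formulation buys uniformity over the index class, which it reuses for the consistency proof and for precisely the equicontinuity steps your assertion-1 argument also relies on. Your derivative computation $\nabla_\bgamma w_i(\bgamma_0)=-I(Q_i=0)L_i\tfrac{1-\pi_{\bgamma_0}(\bZ_{1i})}{\pi_{\bgamma_0}(\bZ_{1i})}\tZ_{1i}$, the limit $n^{-1}\sum_i\bU_i\nabla_\bgamma w_i^T\rightarrow_p-\bQ_e(\bU_1(\bbeta_0^*))^T$, and the cross-moment identity $\E[w_1\bU_1\ell_{\bgamma,1}^T]=\bQ_e(\bU_1(\bbeta_0^*))^T$ all hold under (D1), so the variance identification is sound.
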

The proof of \thref{thm::thm_Cox_right_asn_d} can be found in appendix \ref{sec::proofs}.
We first prove that the IPLW partial score can be written in the weighted asymptotic linear expansion form (first assertion).  
Then we use the weak convergence result of the IPLW empirical process from \thref{ipw_weak_converge} to obtain the asymptotic distribution of $n^{1/2} \hat{\bU}_n(\bbeta)$.
Next recall $-\frac{\partial \hat{\bU}_n(\bbeta)}{\partial \bbeta} = \bA_n(\bbeta)$ and we can write $\bA_n(\bbeta)$ equivalently as 
\begin{equation}
\bA_n(\bbeta) = -\frac{\partial \hat{\bU}_n(\bbeta)}{\partial \bbeta} = \int^{\tau_2}_0 \left\{ \frac{\bS_{n, w}^{(2)}(\bbeta, t)}{\bS_{n, w}^{(0)}(\bbeta, t)} - \left(\frac{\bS_{n, w}^{(1)}(\bbeta, t)}{\bS_{n, w}^{(0)}(\bbeta, t)}\right)^{\otimes 2}\right\} d\bar{N}(t)
\label{eq::An}
\end{equation}
and similarly define its population version $\bA(\bbeta)$
\begin{equation}
\begin{aligned}
\bA(\bbeta) &= -\frac{\partial \bU_0(\bbeta)}{\partial \bbeta} = \int^{\tau_2}_0 \left\{ \frac{\bs^{(2)}(\bbeta, t)}{\bs^{(0)}(\bbeta, t)} - \left(\frac{\bs^{(1)}(\bbeta, t)}{\bs^{(0)}(\bbeta, t)}\right)^{\otimes 2}\right\} d \tilde{N}(t) \\
& = \int^{\tau_2}_0 \left\{ \frac{\bs^{(2)}(\bbeta, t)}{\bs^{(0)}(\bbeta, t)} - \left(\frac{\bs^{(1)}(\bbeta, t)}{\bs^{(0)}(\bbeta, t)}\right)^{\otimes 2}\right\} \bs^{(0)}(t) dt
\end{aligned}
\end{equation}
Under assumption (D5),  $\bA(\bbeta_0^*) = \bSigma_0$ is positive definite and we later prove that $\bA_n(\bbeta) \rightarrow_p \bA(\bbeta)$ for $\bbeta$ in a compact set $\mathbb{B}$ that contains $\bbeta_0^*$. 
Together with the asymptotic normality of the IPLW partial score, we can derive the asymptotic normality of the estimator $\hat \bbeta_n$.

\section{Empirical process theory for an IPLW process}  \label{sec::appendix_ipw_process}

To study the asymptotic properties of  $\hat \bbeta_n$, we need to generalize empirical process theory
into an IPW scenario.
We first introduce an IPLW empirical measure 
$$
\P^{\pi}_n = \frac{1}{n} \sum_{i=1}^n \frac{I(L_i + Q_i > 0)}{Q_i + (1 - Q_i) \pi_{\bgamma_0}(\bX_i)} \delta_{\bX_i, \tilde{T}_i, \Delta_i} = \frac{1}{n} \sum_{i=1}^n w_i \delta_{\bX_i, \tilde{T}_i, \Delta_i}
$$ 
where $\delta_{\bX_i, \tilde{T}_i, \Delta_i}$ is the Dirac measure placing unit mass on $(\bX_i, \tilde{T}_i, \Delta_i)$ and $w_i = \frac{I(L_i + Q_i > 0)}{Q_i + (1 - Q_i)\pi_{\bgamma_0}(\bX_i)}$ such that  $\P^{\pi}_n f = \frac{1}{n} \sum_{i=1}^n w_i f(\bX_i, \tilde{T}_i, \Delta_i)$ for a function $f = f(\bx, \tilde{t}, \delta)$.  
In practice, $\pi_{\bgamma_0}(X_i)$ is unknown,  so we introduce the IPLW empirical measure with estimated weight
$$
\P^{\pi, e}_n = \frac{1}{n} \sum_{i=1}^n \frac{I(L_i + Q_i > 0)}{Q_i + (1 - Q_i) \pi_{\hat{\bgamma}_n}(\bX_i)} \delta_{\bX_i, \tilde{T}_i, \Delta_i} = \frac{1}{n} \sum_{i=1}^n \hat{w}_i \delta_{\bX_i, \tilde{T}_i, \Delta_i}
$$
with $\bgamma_0$ replaced by $\hat{\bgamma}_n$. Finally, we denote $\P_0$ as the probability measure corresponding to the true distribution such that $\P_0 f = \E[ f(\bX, \tilde{T}, \Delta)]$.  Note the usual empirical measure $\P_n = \frac{1}{n} \sum_{i=1}^n \delta_{\bX_i, \tilde{T}_i, \Delta_i}$ is  unobserved due to the missingness in Class 3 of Figure~\ref{fig::ex01}.  The  IPLW empirical measure  leads to the IPLW empirical processes $\G^{\pi}_n = \sqrt{n}(\P^{\pi}_n - \P_0)$ and $\G^{\pi, e}_n = \sqrt{n}(\P^{\pi, e}_n - \P_0)$.  
It turns out that our IPLW empirical measure and empirical process also enjoy similar asymptotic properties as the usual empirical measure and empirical processes.  For any $\phi: {\cal F} \rightarrow \mathbb{R}$, we write $\| \phi(f)\|_{{\cal F}} = \sup_{f \in {\cal F}} | \phi(f)|$.
We say that ${\cal F}$ is $\P$-Glivenko-Cantelli if and only if $\sup_{f \in {\cal F}} | (\P_n - \P) f| = o_P(1)$. 
We first prove that Glivenko-Cantelli property also holds for the IPLW empirical process. 
\begin{proposition}[IPLW uniform convergence]\thlabel{thm::thm_gc}
Suppose that ${\cal F} = \{f(\bx,  \tilde{t}, \delta)\}$ is $\P_0$-Glivenko-Cantelli with an integrable envelope function $F$ such that $\P_0 F < \infty$.  Under assumption (A1-3), 
$$
\|\P^{\pi}_n - \P_0\|_{\cal F} \rightarrow_{P^*} 0.
$$
If $\hat{\bgamma}_n \rightarrow_p \bgamma_0$,  then
$
\|\P^{\pi, e}_n - \P_0\|_{\cal F} \rightarrow_{P^*} 0
$ also holds. 
\end{proposition}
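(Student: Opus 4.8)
The plan is to reduce the weighted uniform law to an ordinary Glivenko--Cantelli statement for an augmented function class, exploiting that the chosen weights are exactly unbiasing. First I would record the pointwise unbiasedness of the weights: for every $f\in{\cal F}$,
$$
\E\!\left[w\, f(\bX,\tilde T,\Delta)\right] = \P_0 f,
$$
which is the generalization of \thref{prop::prop_ipw_llk} from the log-likelihood $l$ to an arbitrary integrable $f$. This follows by conditioning on the in-trial censoring indicator $Q$: on $\{Q=1\}$ one has $w\equiv 1$ (the individual is uncensored in the trial, so $(\tilde T,\Delta)$ is fully observed), while on $\{Q=0\}$ one has $w = L/\pi_{\bgamma_0}(\bX)$, and the CLAR assumption (A1) gives $\E[L\mid \bX,\tilde T,\Delta, Q=0] = \pi_{\bgamma_0}(\bX)$, so that $\E[w\mid \bX,\tilde T,\Delta]=1$ in both cases.

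Next, viewing $w=w(L,Q,\bX)$ as a single fixed measurable function on the complete records, I would rewrite the deviation as $\sup_{f\in{\cal F}}\big|(\P^{\pi}_n-\P_0)f\big| = \sup_{f\in{\cal F}}\big|(\mathbb{P}^{\mathrm{c}}_n-\mathbb{P}^{\mathrm{c}}_0)(w f)\big|$, where $\mathbb{P}^{\mathrm{c}}_n$ and $\mathbb{P}^{\mathrm{c}}_0$ are the ordinary empirical and population measures on the full vector $(\bX,\tilde T,\Delta,L,Q)$; the identity uses precisely the unbiasedness $\mathbb{P}^{\mathrm{c}}_0(wf)=\P_0 f$ from the previous step. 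Thus it suffices to prove that the product class $w\,{\cal F}=\{w f: f\in{\cal F}\}$ is $\mathbb{P}^{\mathrm{c}}_0$-Glivenko--Cantelli. The positivity assumption (bounding $\pi_{\bgamma_0}\ge\delta>0$) yields $0\le w\le \delta^{-1}$, so $w\,{\cal F}$ has the integrable envelope $\delta^{-1}F$; since $\{w\}$ is a singleton (hence trivially Glivenko--Cantelli) and ${\cal F}$ is $\P_0$-Glivenko--Cantelli with integrable envelope $F$, the product $w\,{\cal F}$ is Glivenko--Cantelli by the standard preservation theorem for products of such classes (as in \cite{van2000asymptotic} and the weighted-process arguments of \cite{saegusa2013weighted}). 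This gives $\|\P^{\pi}_n-\P_0\|_{\cal F}\to_{P^*}0$.

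For the estimated-weight statement I would split $\P^{\pi,e}_n-\P_0=(\P^{\pi,e}_n-\P^{\pi}_n)+(\P^{\pi}_n-\P_0)$, the second term being $o_{P^*}(1)$ by the first part. Bounding the first term by the envelope,
$$
\sup_{f\in{\cal F}}\big|(\P^{\pi,e}_n-\P^{\pi}_n)f\big| \le \frac1n\sum_{i=1}^n |\hat w_i - w_i|\, F(\bX_i,\tilde T_i,\Delta_i),
$$
and noting that $\hat w_i-w_i = L_i(1-Q_i)\big[\pi_{\hat{\bgamma}_n}(\bX_i)^{-1}-\pi_{\bgamma_0}(\bX_i)^{-1}\big]$, I would use that $\bgamma\mapsto \pi_{\bgamma}(\bx)^{-1}=1+\exp(-\bgamma^T\tX)$ is smooth. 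On the event $\{\|\hat{\bgamma}_n-\bgamma_0\|\le\epsilon\}$, which has probability tending to one since $\hat{\bgamma}_n\to_p\bgamma_0$, a mean-value bound together with the positivity assumption and boundedness of the covariates gives $|\hat w_i-w_i|\le C\|\hat{\bgamma}_n-\bgamma_0\|$ uniformly in $i$. Hence the displayed quantity is at most $C\|\hat{\bgamma}_n-\bgamma_0\|\cdot n^{-1}\sum_i F_i$, and since $n^{-1}\sum_i F_i\to \P_0 F<\infty$ by the law of large numbers while $\|\hat{\bgamma}_n-\bgamma_0\|\to_p0$, the product is $o_p(1)$, completing the second claim.

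I expect the estimated-weight step to be the main obstacle, because $\hat{\bgamma}_n$ depends on the entire sample, so $\P^{\pi,e}_n$ is no longer an ordinary empirical measure of independent summands and the clean preservation argument does not apply directly. The device that resolves this is the Lipschitz-in-$\bgamma$ control of the inverse linkage probability, valid uniformly on a shrinking neighborhood of $\bgamma_0$, which decouples the estimation error $\|\hat{\bgamma}_n-\bgamma_0\|$ from the empirical averaging. A secondary point requiring care is the outer-probability ($P^*$) bookkeeping and the measurability needed to invoke the product preservation theorem on the augmented sample space that carries the linkage and in-trial censoring indicators $(L,Q)$.
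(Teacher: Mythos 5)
Your proposal is correct, and its two halves relate to the paper's proof differently. For the known-weight claim you do essentially what the paper does: the paper centers the weights, splits $\P^{\pi}_n - \P_0$ into $(\P_n - \P_0)$ plus the term driven by $g = w - 1$, shows via CLAR that $\P_0(gf) = 0$ for every $f \in {\cal F}$, and invokes the Glivenko--Cantelli preservation theorem for the product of the bounded weight function with ${\cal F}$; your uncentered version (unbiasedness $\E[wf] = \P_0 f$ plus preservation applied to $w{\cal F}$ on the augmented space carrying $(L,Q)$) is the same argument up to rearrangement. For the estimated-weight claim, however, your route is genuinely different and more elementary. The paper writes $\P^{\pi,e}_n f = \P^{\pi}_n \xi_{\hat\bgamma_n} f$ with $\xi_{\bgamma}(\bx,q) = \frac{q+(1-q)\pi_{\bgamma_0}(\bx)}{q+(1-q)\pi_{\bgamma}(\bx)}$, proves that the doubly indexed class $\{\xi_{\bgamma} f : \bgamma \in \mathbb{K}, f \in {\cal F}\}$ is $\P_0$-Glivenko--Cantelli via a VC-subgraph argument plus preservation, and then disposes of the population bias $\sup_f\|\P_0 \xi_{\hat\bgamma_n} f - \P_0 f\|$ by a mean-value expansion in $\bgamma$. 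You instead decompose $\P^{\pi,e}_n - \P_0 = (\P^{\pi,e}_n - \P^{\pi}_n) + (\P^{\pi}_n - \P_0)$ and control the first difference deterministically: $\hat w_i - w_i = L_i(1-Q_i)\bigl[\pi_{\hat\bgamma_n}(\bX_i)^{-1} - \pi_{\bgamma_0}(\bX_i)^{-1}\bigr]$ is $O(\|\hat\bgamma_n - \bgamma_0\|)$ uniformly in $i$ by smoothness of $\bgamma \mapsto 1+\exp(-\bgamma^T\tX)$ on a compact neighborhood with bounded covariates, and the resulting bound $C\|\hat\bgamma_n - \bgamma_0\| \cdot n^{-1}\sum_i F_i$ vanishes by the law of large numbers. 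This avoids any uniform-in-$\bgamma$ empirical process class, and your bound is a genuine measurable majorant of the (possibly nonmeasurable) supremum, so the outer-probability bookkeeping is automatic. What your shortcut gives up is reusability downstream: the $\xi_{\bgamma}$-indexed classes and the paper's particular decomposition are exactly the ingredients reused at the $\sqrt{n}$ scale in \thref{ipw_equi_cont_gamma} and \thref{ipw_weak_converge}, where a crude bound of order $\|\hat\bgamma_n - \bgamma_0\| = O_P(n^{-1/2})$ no longer vanishes after multiplication by $\sqrt{n}$ and one truly needs the equicontinuity-type argument. For the consistency statement at hand, though, your argument is complete and arguably cleaner.
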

The proof can be found in Appendix \ref{sec::proofs}.   
Recall that $\tX = (1, \bX^T)^T$ and $\tx = (1, \bx^T)^T$. Now, to definite weak convergence, 
let $X_n$ be a bounded process and $X$ be a bounded process whose finite-dimensional laws correspond to the finite dimensional projections of a tight Borel law on $\ell^{\infty}({\cal F})$.  We say that $X_n \rightsquigarrow X$ in $\ell^{\infty}({\cal F})$ if and only if $\E^* H(X_n) \rightarrow \E H(X)$ for all $H \in C_b(\ell^{\infty}({\cal F}))$, where $C_b(\ell^{\infty}({\cal F}))$ denotes all bounded continuous functions on $\ell^{\infty}({\cal F})$ \citep{van1996weak,van2000asymptotic}.  
The next theorem states that the weak convergence result for IPLW empirical process. 
\begin{proposition}[IPLW weak convergence]  \thlabel{ipw_weak_converge}
Under assumption (A1-3),  suppose that ${\cal F} = \{f(\bx, \tilde{t}, \delta\}$ is $\P_0$-Donsker with an integrable envelope function $F$ such that $\P_0 F < \infty$, then  
\begin{align*}
& \G_n^{\pi} \rightsquigarrow \G(g_1 \cdot) \\
& \G_n^{\pi, e} \rightsquigarrow \G^e = \G(g_1 \cdot - g_2 \bQ_e(\cdot)^T \bg_3)
\end{align*}
in $l^{\infty}({\cal F})$ where $g_1(l, q, \bx) = \frac{I(l + q > 0)}{q + (1 - q) \pi_{\bgamma_0}(\bx)}$, $g_2(l, q, \bx) = I(q = 0)[l - \pi_{\bgamma_0}(\bx)]$,  $\bg_3(\bx) = \bSigma_{\bgamma_0}^{-1} \tx$ and $\bQ_e(f) = \E[I(Q = 0)(1 - \pi_{\bgamma_0}(\bX)) f(\bX, \tilde{T}, \Delta) \tX]$.  $\G$ is the $\P_0$-Brownian bridge process, indexed by ${\cal F}$.  
\end{proposition}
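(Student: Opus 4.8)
The plan is to recast both statements as the ordinary empirical-process central limit theorem applied to enlarged function classes built on the full i.i.d. observation $O=(\bX,\tilde T,\Delta,Q,L)$; the values of $\tilde T,\Delta$ that are missing in Class~3 never enter, because there $L+Q=0$ and the weight vanishes, so $g_1 f$ is everywhere well defined. The key preliminary fact is the unbiasedness identity $\E[w\mid\bX,\tilde T,\Delta]=1$: on $\{Q=1\}$ we have $w=1$, while on $\{Q=0\}$ we have $w=L/\pi_{\bgamma_0}(\bX)$ with $\E[L\mid\bX,\tilde T,\Delta,Q=0]=\pi_{\bgamma_0}(\bX)$ by the CLAR assumption (A1). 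Hence $\P_0(g_1 f)=\P_0 f$, so centering $\G_n^{\pi}$ at $\P_0 f$ is the same as centering the class $g_1\mathcal F=\{g_1 f:f\in\mathcal F\}$ at its own mean. Throughout I write $f_i=f(\bX_i,\tilde T_i,\Delta_i)$ and $F_i=F(\bX_i,\tilde T_i,\Delta_i)$.

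For the first assertion I would simply observe that $\G_n^{\pi} f=n^{-1/2}\sum_i\{(g_1 f)(O_i)-\P_0(g_1 f)\}$ is the usual empirical process indexed by $g_1\mathcal F$. Because positivity bounds $0\le g_1\le 1/\delta$, multiplying the $\P_0$-Donsker class $\mathcal F$ by the fixed bounded function $g_1$ again yields a $\P_0$-Donsker class, with integrable envelope $F/\delta$ \citep{van1996weak}. The empirical-process CLT then delivers $\G_n^{\pi}\rightsquigarrow\G(g_1\cdot)$, the $\P_0$-Brownian bridge indexed by $f\mapsto g_1 f$.

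For the second assertion I would linearize the estimated weights. Writing $w_i(\bgamma)$ for the weight with $\pi_\bgamma$ in place of $\pi_{\bgamma_0}$, a Taylor expansion about $\bgamma_0$ gives $\hat w_i-w_i=\dot w_i(\bgamma_0)^T(\hat\bgamma_n-\bgamma_0)+\tfrac12(\hat\bgamma_n-\bgamma_0)^T\ddot w_i(\bar{\bgamma})(\hat\bgamma_n-\bgamma_0)$, with $\dot w_i(\bgamma_0)=-I(Q_i=0)\frac{L_i(1-\pi_{\bgamma_0}(\bX_i))}{\pi_{\bgamma_0}(\bX_i)}\tX_i$. Under positivity and the finite moments of $\tX$, the quadratic term is $o_p(1)$ uniformly in $f$, since $\|\hat\bgamma_n-\bgamma_0\|^2=O_p(n^{-1})$ while $n^{-1/2}\sum_i\|\ddot w_i(\bar{\bgamma})\|\,F_i=O_p(\sqrt n)$. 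For the linear term, the Glivenko-Cantelli property of $\{\dot w(\bgamma_0)f:f\in\mathcal F\}$ (a Donsker class multiplied by a fixed integrable factor) gives $n^{-1}\sum_i\dot w_i(\bgamma_0)f_i\rightarrow_p\E[\dot w(\bgamma_0)f]=-\bQ_e(f)$ uniformly in $f$, the last identity again using (A1). Combining this with $\sqrt n(\hat\bgamma_n-\bgamma_0)=O_p(1)$ and the asymptotically linear expansion of the logistic MLE, $\sqrt n(\hat\bgamma_n-\bgamma_0)=n^{-1/2}\sum_i g_2(O_i)\bg_3(\bX_i)+o_p(1)$ (score $I(Q=0)(L-\pi_{\bgamma_0})\tX$, information $\bSigma_{\bgamma_0}$), yields, uniformly in $f$,
$$
\G_n^{\pi,e} f=\frac1{\sqrt n}\sum_i\Big\{\big(g_1 f-g_2\,\bQ_e(f)^T\bg_3\big)(O_i)-\P_0\big(g_1 f-g_2\,\bQ_e(f)^T\bg_3\big)\Big\}+o_p(1),
$$
where I use $\P_0(g_2\bg_3)=0$. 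The corrected class is the sum of $g_1\mathcal F$ and a finite-dimensional class spanned by the fixed functions $g_2\bg_3$ with uniformly bounded, linear-in-$f$ coefficients $\bQ_e(f)$, hence $\P_0$-Donsker, and the CLT gives $\G_n^{\pi,e}\rightsquigarrow\G(g_1\cdot-g_2\bQ_e(\cdot)^T\bg_3)$.

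The main obstacle is carrying out the linearization uniformly over the infinite index set $\mathcal F$ rather than pointwise: one must control the Taylor remainder and the Glivenko-Cantelli approximation of $n^{-1}\sum_i\dot w_i(\bgamma_0)f_i$ simultaneously for all $f\in\mathcal F$, and verify that the stochastic equicontinuity survives substitution of the random $\hat\bgamma_n$. This is exactly where the integrable envelope together with the Donsker-preservation and maximal-inequality arguments adapted from \citep{saegusa2013weighted} are needed; once the uniform expansion is in hand, identifying the limiting covariance and checking the Donsker property of the corrected class are routine.
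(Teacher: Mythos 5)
Your proof is correct and shares the paper's overall strategy: for the first assertion the two arguments are essentially identical (positivity makes $g_1$ bounded, so $g_1\cdot{\cal F}$ remains $\P_0$-Donsker, and the CLAR identity $\P_0 g_1f=\P_0 f$ fixes the centering), and for the second both rest on linearizing the estimated weights in $\bgamma$, the asymptotically linear expansion of the logistic MLE (\thref{lem::lem_estimated_weight}), and Donsker/Glivenko--Cantelli preservation. Where you genuinely differ is the bookkeeping of the second part. The paper decomposes $\G_n^{\pi,e}f-\G_n^{\pi}f=\G_n(\zeta_{\hat\bgamma_n}-\zeta_{\bgamma_0})f+\sqrt{n}\,\P_0(\zeta_{\hat\bgamma_n}-\zeta_{\bgamma_0})f$, kills the empirical-fluctuation term with a dedicated lemma (\thref{ipw_equi_cont_gamma}), Taylor-expands only the population drift to extract $-\bQ_e(f)^T\sqrt{n}(\hat\bgamma_n-\bgamma_0)$, and then concludes weak convergence by verifying finite-dimensional convergence plus asymptotic $\rho$-equicontinuity (via $\|\bQ_e^T\|_{{\cal F}_\delta}\le C\delta$) and invoking Theorem 1.5.7 of \cite{van1996weak}. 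You instead Taylor-expand $\hat w_i-w_i$ inside the sum, so fluctuation and drift are handled in one stroke by a uniform LLN over the gradient class $\{\dot w(\bgamma_0)f:\ f\in{\cal F}\}$ plus an explicit quadratic-remainder bound, and you finish by transporting the Donsker property of the corrected class $\{g_1f-g_2\bQ_e(f)^T\bg_3:\ f\in{\cal F}\}$ through the uniform asymptotically linear representation. Your finish is shorter and renders a separate equicontinuity check unnecessary; the paper's splitting isolates the two error sources, which is the structure one would need if $\hat\bgamma_n$ converged at a slower rate or came from a nonparametric first stage. One repair is needed in your preservation step: you describe $\dot w(\bgamma_0)=-I(Q=0)L\,\pi_{\bgamma_0}(\bX)^{-1}(1-\pi_{\bgamma_0}(\bX))\tX$ as a ``fixed integrable factor,'' but multiplication by an integrable, unbounded factor need not preserve the Glivenko--Cantelli or Donsker property. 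What you actually need (and have, by the positivity assumption and boundedness of the covariates) is that $\dot w(\bgamma_0)$ is bounded, and likewise that $\ddot w(\bgamma)$ is bounded uniformly over a compact neighborhood of $\bgamma_0$ for your $O_p(\sqrt n)$ bound on $n^{-1/2}\sum_i\|\ddot w_i(\bar{\bgamma})\|F_i$; this is exactly what the paper arranges through its VC-subgraph and preservation arguments, and both gaps close in one line once stated.
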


\section{Proofs}  \label{sec::proofs}

\begin{proof}[ of \thref{prop::prop_ipw_llk}]
For simplicity, denote $P(L = 1 | Q = 0, \bX) = \pi_0(\bX)$. 
First, we have
\begin{align*}
\E\left[ \frac{I(L = 1) l(\bbeta, \lambda_0)}{\pi_0(\bX)} \middle | Q = 0 \right]  & = \E\left[ \E\left[ \frac{I(L = 1) l(\bbeta, \lambda_0)}{\pi_0(\bX)} \middle| Q = 0, \bX, \tilde{T}, \Delta \right] \middle | Q = 0 \right] \\
& = \E\left[ \frac{ l(\bbeta, \lambda_0) }{ \pi_0(\bX)}
\E\left[ I(L = 1) \middle | Q = 0, \bX, \tilde{T}, \Delta \right]
 \middle | Q = 0 \right] \\
 & = \E\left[ l(\bbeta, \lambda_0) | Q = 0 \right]
\end{align*}
The second to last equality holds as $l(\beta, \lambda_0)$ is a function of $\tilde{T}, \bX, \Delta$.  The last equality holds by assumption (A1).  
To prove \eqref{eq::ipw_whole}, we have 
\begin{align*}
& \E \left[ \frac{I(L + Q > 0)}{Q + (1 - Q)\pi_0(\bX)} l(\bbeta, \lambda_0)
\right]
 = \E\left[ \frac{I(L = 1)}{\pi_0(\bX)} l(\bbeta, \lambda_0)
\middle | Q = 0 \right] P(Q = 0) + \E \left[ l(\bbeta, \lambda_0) \middle | Q = 1
\right] P(Q = 1) \\
& = \E\left[ l(\bbeta, \lambda_0) | Q = 0 \right] P(Q = 0) + \E \left[ l(\bbeta, \lambda_0) \middle | Q = 1
\right] P(Q = 1) = \E(l(\bbeta, \lambda_0))
\end{align*}
\end{proof}

We first give the asymptotic distribution of $\hat{\bgamma}_n$,  the estimates of the logistic regression parameter $\bgamma_0$. 
\begin{lemma}  \thlabel{lem::lem_estimated_weight}
Under assumptions (A1), (A2) and (A3),  $\hat{\bgamma}_n$ is consistent for $\bgamma_0$ and 
$$
\sqrt{n}(\hat{\bgamma}_n - \bgamma_0) \rightarrow_d N(0, \bSigma^{-1}_{\bgamma_0})
$$
where $\bSigma_{\bgamma_0} = \E\left[I(Q = 0) \frac{\exp(\tX^T \bgamma_0) \tX \tX^T }{(1 + \exp(\tX^T \bgamma_0))^2} \right] = \E[I(Q = 0) \pi_{\bgamma_0}(\bX)(1 - \pi_{\bgamma_0}(\bX)) \tX \tX^T]$. 
\end{lemma}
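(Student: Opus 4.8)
The plan is to treat $\hat{\bgamma}_n$ as a standard maximum likelihood estimator for a correctly specified logistic regression and invoke the classical M-estimation machinery (e.g., Theorems 5.7 and 5.23 of \cite{van2000asymptotic}). Since the linkage model is only relevant for the subpopulation with $Q = 0$, the MLE is based on the restricted log-likelihood
\begin{equation*}
M_n(\bgamma) = \frac{1}{n}\sum_{i=1}^n I(Q_i = 0)\left[ L_i \log \pi_{\bgamma}(\bX_i) + (1 - L_i)\log(1 - \pi_{\bgamma}(\bX_i))\right],
\end{equation*}
whose score is $\dot{M}_n(\bgamma) = \frac{1}{n}\sum_{i=1}^n I(Q_i = 0)[L_i - \pi_{\bgamma}(\bX_i)]\tX_i$, so that $\hat{\bgamma}_n$ solves $\dot{M}_n(\bgamma) = \bm{0}$. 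I write $M(\bgamma) = \E M_n(\bgamma)$ for the population criterion.

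For consistency, I would first note that $M_n(\bgamma)$ is concave in $\bgamma$ and that, under (A1), $P(L = 1 \mid \bX, Q = 0)$ equals $\pi_{\bgamma_0}(\bX)$, so the usual Kullback--Leibler argument shows $M(\bgamma)$ is uniquely maximized at $\bgamma_0$. Uniqueness relies on the identifiability assumption (the time-independent analog of (D1)) that the distribution of $\tX$ on $\{Q = 0\}$ is not supported on a lower-dimensional affine subspace, which forces $\bSigma_{\bgamma_0}$ to be positive definite and $M$ to be strictly concave near $\bgamma_0$. A uniform law of large numbers, justified by the boundedness of $\pi_{\bgamma}$ together with an integrable envelope, then gives $\sup_{\bgamma} |M_n(\bgamma) - M(\bgamma)| \rightarrow_p 0$, and a Wald-type argmax argument yields $\hat{\bgamma}_n \rightarrow_p \bgamma_0$.

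For asymptotic normality, I would carry out the usual one-term Taylor expansion of the score about $\bgamma_0$,
\begin{equation*}
\bm{0} = \sqrt{n}\,\dot{M}_n(\bgamma_0) + \ddot{M}_n(\tilde{\bgamma})\,\sqrt{n}(\hat{\bgamma}_n - \bgamma_0),
\end{equation*}
with $\tilde{\bgamma}$ on the segment between $\hat{\bgamma}_n$ and $\bgamma_0$. The leading term is an average of i.i.d. mean-zero vectors, mean zero because $\E[I(Q=0)(L - \pi_{\bgamma_0}(\bX))\tX] = \bm{0}$ under correct specification, so the CLT gives $\sqrt{n}\,\dot{M}_n(\bgamma_0) \rightarrow_d N(0, \bSigma_{\bgamma_0})$; here the Bernoulli variance identity $\E[(L - \pi_{\bgamma_0})^2 \mid \bX, Q=0] = \pi_{\bgamma_0}(1 - \pi_{\bgamma_0})$ reproduces exactly the stated $\bSigma_{\bgamma_0}$. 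The Hessian satisfies $\ddot{M}_n(\tilde{\bgamma}) \rightarrow_p -\bSigma_{\bgamma_0}$ by the LLN, the consistency of $\tilde{\bgamma}$, and continuity, with positivity (A3) keeping $\pi_{\bgamma}$ bounded away from $0$ and $1$. Slutsky's theorem then delivers $\sqrt{n}(\hat{\bgamma}_n - \bgamma_0) \rightarrow_d N(0, \bSigma_{\bgamma_0}^{-1}\bSigma_{\bgamma_0}\bSigma_{\bgamma_0}^{-1}) = N(0, \bSigma_{\bgamma_0}^{-1})$, with the information identity collapsing the sandwich.

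Because this is textbook MLE asymptotics, I do not expect a genuinely hard step; the only items requiring care are verifying the regularity conditions in this conditional-on-$\{Q=0\}$ formulation, namely that $\bSigma_{\bgamma_0}$ is invertible (from the non-degeneracy of $\tX$ and positivity (A3)) and that the relevant function class admits an integrable envelope so the uniform LLN and CLT both apply. These are precisely the hypotheses packaged in (A1)--(A3).
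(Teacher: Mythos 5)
Your proof is correct, and it reaches the paper's conclusion by a somewhat different technical route. The paper frames $\hat{\bgamma}_n$ as a Z-estimator: it works directly with the score $\bphi_{\bgamma}(l,q,\bx) = I(q=0)[l - \pi_{\bgamma}(\bx)]\tx$, establishes uniform convergence of $\bS_n(\bgamma)$ to $\bS(\bgamma)$ by showing $\{\bphi_{\bgamma}\}$ is a VC-subgraph class (Lemmas 2.6.15 and 2.6.18 of van der Vaart and Wellner), invokes Theorem 5.9 of van der Vaart (2000) for consistency, and then verifies a Lipschitz-in-$\bgamma$ bound $\|\bphi_{\bgamma_1} - \bphi_{\bgamma_2}\| \leq \tfrac14\|\tX\tX^T\|\,\|\bgamma_1 - \bgamma_2\|$ so that Theorem 5.21 delivers the asymptotically linear expansion $\sqrt{n}(\hat{\bgamma}_n - \bgamma_0) = \bSigma_{\bgamma_0}^{-1} n^{-1/2}\sum_i \bphi_{\bgamma_0}(L_i,Q_i,\bX_i) + o_P(1)$ and hence the limit law. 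You instead run the classical Cram\'er-style MLE argument: concavity of the restricted log-likelihood plus a Kullback--Leibler identification step for consistency, then a Taylor expansion of the score, the CLT, convergence of the Hessian at an intermediate point, and Slutsky, with the information identity collapsing the sandwich to $\bSigma_{\bgamma_0}^{-1}$. Both are valid; yours is more elementary and self-contained (and your derivation of positive definiteness of $\bSigma_{\bgamma_0}$ from non-degeneracy of $\tX$ is cleaner than the paper's, which simply assumes it), while the paper's empirical-process route is chosen because the same VC-subgraph and Lipschitz devices are reused downstream (in the Glivenko--Cantelli and weak-convergence results for the IPLW process), and because Theorem 5.21 hands over the linear expansion of $\hat{\bgamma}_n$ that those later proofs consume directly; your Taylor expansion yields that expansion too, just implicitly. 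One small bookkeeping point: correct specification of the logistic form $P(L=1\mid\bX,Q=0) = \pi_{\bgamma_0}(\bX)$ is a separate modeling assumption (the paper's (A2)-type condition), not a consequence of CLAR (A1), so you should attribute it accordingly rather than to (A1); boundedness of $\bX$, which both you and the paper need for the envelope and Lipschitz steps, likewise sits in that assumption package.
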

\begin{proof}[ of \thref{lem::lem_estimated_weight}]
First, the score for logistic regression converges as
\begin{align*}
\bS_n(\bgamma) = \frac{1}{n} \sum_{i=1}^n I(Q_i = 0) \left[L_i \tX_i - \frac{\exp(\tX_i^T \bgamma) \tX_i}{1 + \exp(\tX_i^T \bgamma)} \right] \rightarrow_p \E \left[I(Q = 0) \left(L \tX - \frac{\exp(\tX^T \bgamma) \tX}{1 + \exp(\tX^T \bgamma)} \right) \right] = \bS(\bgamma)
\end{align*}
and 
\begin{align*}
\bS(\bgamma_0) &= \E \left[ I(Q = 0) \left(L\tX - \frac{\exp(\tX^T \bgamma_0) \tX}{1 + \exp(\tX^T \bgamma_0)} \right) \right] \\
& = \E \left[  
\E \left[ I(Q = 0) \left(
L \tX - \P(L = 1 | Q = 0; \bX) \tX
\right)
 | Q,  \bX
\right]
\right] \\
& = \E \left(
I(Q = 0)
\left[\E[L | Q, \bX] \tX - \P(L = 1 | Q = 0, \bX)\tX 
\right]
\right) = 0
\end{align*}
since $\E[L | Q, \bX] = I(Q = 0) \P(L = 1 | Q = 0; \bX) + I(Q = 1) \P(L = 1 | Q = 1; \bX)$. 
Further,  by the law of large numbers,
\begin{align*}
\nabla \bS_n(\bgamma) = -\frac{1}{n} \sum_{i=1}^n I(Q_i = 0) \frac{\exp(\tX_i^T \bgamma) \tX_i \tX_i^T }{\left[ 1 + \exp(\tX_i^T \bgamma)\right]^2} \rightarrow_p -\E\left[  I(Q = 0) \frac{\exp(\tX^T \bgamma) \tX \tX^T}{(1 + \exp(\tX^T \bgamma))^2} \right] = -\bSigma(\bgamma)
\end{align*}
We assume that $\bSigma(\bgamma)$ is positive-definite, thus $\bgamma_0$ is the unique solution for $\bS(\bgamma) = \bm{0}$.  Next we verify the uniform convergence condition: 
$$
\sup_{\bgamma} \| \bS_n(\bgamma) - \bS(\bgamma) \|  \rightarrow_p 0
$$
Denote $\bphi_{\bgamma}(l, q, \bx) = I(q = 0)\left[l - \pi_{\bgamma}(\bx)\right] \tx$ and $\bS_n(\bgamma) - \bS(\bgamma) = (\P_n - \P_0) \bphi_{\bgamma}$.  The function class $\{\bphi_{\bgamma}(l, q, \bx):  \bgamma\}$ forms a VC-subgraph class by Lemma 2.6.15, 2.6.18 of \cite{van1996weak}.  Thus, by Theorem 5.9 of \cite{van2000asymptotic},  we have 
$\hat{\bgamma}_n \rightarrow_p \bgamma_0$. 

For asymptotic normality of $\hat{\bgamma}_n$, note that 
\begin{align*}
\|\bphi_{\bgamma_1}(L, Q, \bX) - \bphi_{\bgamma_2}(L, Q, \bX) \| \leq 
\left\| 
\left[
 \frac{\exp(\tX^T \bgamma_1)}{1 + \exp(\tX^T \bgamma_1)} - \frac{\exp(\tX^T \bgamma_2)}{1 + \exp(\tX^T \bgamma_2)}
  \right]\tX
\right\|
  \leq \frac{1}{4} \|\tX \tX^T\| \|\bgamma_1 - \bgamma_2\|
\end{align*}
Under the assumption that $\bX$ is bounded, we have $\E \|\tX \tX^T\| < \infty$.  $\E \bphi_{\bgamma}(L, Q, \bX)$ is differentiable at $\bgamma_0$ with derivative $\bSigma_{\bgamma_0}$.   By Theorem 5.21 of \cite{van2000asymptotic},  we concluded that 
$$
\sqrt{n}(\hat{\bgamma}_n - \bgamma_0) = \bSigma_{\bgamma_0}^{-1} \frac{1}{\sqrt{n}} \sum_{i=1}^n \bphi_{\bgamma_0}(L_i, Q_i, \bX_i) + o_P(1) \rightarrow_d N(0,  \bSigma_{\bgamma_0}^{-1})
$$
\end{proof}

\begin{proof}[ of \thref{thm::thm_gc}]
We start with bounding the difference between the IPLW empirical measure $\P_n^{\pi}$ and the usual empirical measure $\P_n$. 
By triangle inequality,
$$
\|\P^{\pi}_n - \P_0\|_{\cal F} \leq \|\P_n - \P_0\|_{\cal F} + \left\|\frac{1}{n}\sum_{i=1}^n \left(\frac{I(L_i + Q_i > 0)}{Q_i + (1 - Q_i)\pi_{\gamma_0}(\bX_i)} - 1 \right) \delta_{\bX_i, \tilde{T}_i, \Delta_i}\right\|_{\cal F}
$$
The first term is $o_{P^*}(1)$ since ${\cal F}$ is Glivenko-Cantelli. For the second term,  define function $g(l, q, \bx) = \frac{I(l + q > 0)}{q + (1 - q)\pi_{\bgamma_0}(\bx)} - 1$, then consider the function class ${\cal F}^* = \left\{ \left[g \cdot f\right](l, q, \bx, \delta, \tilde{t}):  f \in {\cal F}  \right\}$ where $f = f(\bx, \tilde{t}, \delta)$.  By assumption (A3),  $g(l, q, \bx)$ is bounded and ${\cal F}$ has an integrable envelope function $F$.  These two together imply that ${\cal F}^*$ is $\P_0$-Glivenko-Cantelli by the Glivenko-Cantelli Preservation theorem \citep{van2000preservation}. Then,  for any $f \in {\cal F}$, 
\begin{align*}
\P_0 g f & = \E \left[ \E[g(L, Q, \bX) f(\bX, \tilde{T}, \Delta) | Q]  \right] \\
& = \E[g(L, Q, \bX) f(\bX, \tilde{T}, \Delta) | Q = 0] \P(Q = 0) + \E[g(L, Q, \bX) f(\bX, \tilde{T}, \Delta) | Q = 1] \P(Q = 1) \\
& = \E[0 * f (\bX, \tilde{T}, \Delta)| Q = 1] \P(Q = 1) + \E \left[\left(\frac{I(L = 1)}{\pi_{\bgamma_0}(\bX)} - 1 \right) f(\bX, \tilde{T}, \Delta) \middle | Q = 0 \right] \P(Q = 0) \\
& = \E \left[ f(\bX, \tilde{T}, \Delta) \E\left(\frac{I(L = 1)}{\pi_{\bgamma_0}(\bX)} - 1 \middle | Q = 0, \bX, \tilde{T}, \Delta \right)  \middle | Q = 0 \right] \P(Q = 0) = 0
\end{align*}
The last equality is due to the assumption (A1).  Thus,  the second term could be rewritten as 
$$
\left\|\frac{1}{n} \sum_{i=1}^n \delta_{\bX_i, \tilde{T}_i, \Delta_i,  L_i, Q_i}\right\|_{{\cal F}^*} = \|\P_n - \P_0\|_{{\cal F}^*}
$$
which is again $o_{P^*}(1)$. 

 Now consider $\P_n^{\pi, e}$.   Since $\hat{\bgamma}_n \rightarrow_p \bgamma_0$, it suffices to consider a small compact neighborhood $\mathbb{K} \subset \R^{p + 1}$ of $\bgamma_0$.  Let $\xi_{\bgamma}(\bx, q) = \frac{q + (1 - q)\pi_{\bgamma_0}(\bx)}{q + (1 - q) \pi_{\bgamma}(\bx)}$. Since $\bX$ is bounded and $\pi_{\bgamma}$ is continuous in $\bX$,  $\xi_{\bgamma}(\bX, Q)$ is also bounded.  Lemma 2.6.15 and 2.6.18 of \cite{van1996weak} then imply that $\{\xi_{\bgamma}(\bx, q):  \bgamma \in \mathbb{K} \}$ is a VC-subgraph class.  Next the Glivenko-Cantelli Preservation theorem \citep{van2000preservation} implies that
$$
{\cal G} = \left \{ h_{\bgamma, f} (q, \bx, \tilde t, \delta) = \underbrace{\frac{q + (1 - q) \pi_{\bgamma_0}(\bx)}{q + (1 - q)\pi_{\bgamma}(\bx)}}_{\xi_{\bgamma}(\bx, q)} f(\bx, \tilde t,\delta):  f \in {\cal F}, \bgamma \in \mathbb{K} \right\}
$$
is a $\P_0$-Glivenko-Cantelli class as ${\cal F}$ has an integrable envelope function and $\xi_{\bgamma}(x, q)$ is bounded.  Then recognizing that 
\begin{align*}
\P_n^{\pi, e} f & = \frac{1}{n} \sum_{i=1}^n \frac{I(L_i + Q_i > 0)}{Q_i + (1 - Q_i)\pi_{\bgamma_0}(\bX_i)} \left\{\frac{Q_i + (1 - Q_i)\pi_{\bgamma_0}(\bX_i)}{Q_i + (1 - Q_i)\pi_{\hat{\bgamma}_n}(\bX_i)} \right\} f(\bX_i, \tilde{T}_i, \Delta_i) = \P_n^{\pi} \xi_{\hat\bgamma_n}  f
\end{align*}
We have 
\begin{align*}
\|\P^{\pi, e}_n - \P_0\|_{\cal F} &= \sup_{f \in {\cal F}} \left\| \P_n^{\pi} \xi_{\hat \bgamma_n} f - \P_0 f \right\| \leq \sup_{f \in {\cal F}} \left\| \P_n^{\pi} \xi_{\hat \bgamma_n}f - \P_0 \xi_{\hat \bgamma_n} f \right\| + \sup_{f \in {\cal F}} \left \| \P_0 \xi_{\hat \bgamma_n} f - \P_0 f  \right\|
\end{align*}
Notice the first term is $o_{P^*}(1)$ since ${\cal G}$ is $\P_0$-Glivenko-Cantelli.  For the second term, we have 
\begin{equation}
\begin{aligned}
\frac{1}{Q + (1 - Q) \pi_{\hat{\bgamma}_n}(\bX)} - \frac{1}{Q + (1 - Q) \pi_{\bgamma_0}(\bX)} = I(Q = 0) \left(1 - \frac{1}{\pi_{\bgamma^*}(\bX)} \right) \tX^T (\hat{\bgamma}_n - \bgamma_0)
\end{aligned}
\label{eqn::eqn_ipw_diff}
\end{equation}
where $\bgamma^*$ is some convex combinations of $\bgamma_0$ and $\hat{\bgamma}_n$ and $\bgamma^* \rightarrow_p \bgamma_0$.  This implies that 
\begin{equation}
\begin{aligned}
\xi_{\hat{\bgamma}_n}(\bX, Q) - 1 = \xi_{\hat{\bgamma}_n}(\bX, Q) - \xi_{\bgamma_0}(\bX, Q) =  I(Q = 0) \pi_{\bgamma_0}(\bX) \left(1 - \frac{1}{\pi_{\bgamma^*}(\bX)} \right) \tX^T (\hat{\bgamma}_n - \bgamma_0)
\end{aligned}
\label{eqn::eqn_gamma_diff}
\end{equation}
and the second term can be rewritten as following:
\begin{align*}
& \sup_{f \in {\cal F}} \left \|  \E \left \{f(\bX, \tilde{T},\Delta) I(Q = 0) \pi_{\bgamma_0}(\bX) \left(1 - \frac{1}{\pi_{\bgamma^*}(\bX)} \right) \tX^T (\hat{\bgamma}_n - \bgamma_0) \right\}
 \right \|\\
& = \sup_{f \in {\cal F}} \left \|  \left\{f(\bX, \tilde{T},\Delta) I(Q = 0) \pi_{\bgamma_0}(\bX) \left(1 - \frac{1}{\pi_{\bgamma^*}(\bX)} \right) \tX^T  \right\}
\right \| \| \hat{\bgamma}_n - \bgamma_0\| 
\end{align*}
 Further, since $\pi_{\bgamma}(\bX)$ is bounded away from zero for $\bgamma \in \mathbb{K}$ and $\|\P_0\|_{\cal F} < \infty$,  the second term is now determined by $\|\hat{\bgamma}_n - \bgamma_0\|$, which is $o_P^*(1)$. 
\end{proof}
The proof of \thref{ipw_weak_converge} relies on the following lemma.

\begin{lemma} \thlabel{ipw_equi_cont_gamma}
Let $\zeta_{\bgamma}(l ,q, \bx) = \frac{I(l + q > 0)}{q + (1 - q)\pi_{{\bgamma}}(\bx)}$
and ${\cal F} = \{f(\bx, \tilde{t}, \delta\}$ be a $\P_0$-Glivenko-Cantelli class with an integrable envelope function $F$ such that $\P_0 F < \infty$. Then 
$$
\sup_{f \in {\cal F}} \left\| \sqrt{n}(\P_n - \P_0)\left(
\zeta_{\hat \bgamma_n} f - \zeta_{\bgamma_0}f \right) \right\| = o_P^*(1)
$$
\end{lemma}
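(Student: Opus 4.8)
The plan is to exploit the fact that $\zeta_{\hat\bgamma_n}-\zeta_{\bgamma_0}$ is of order $\|\hat\bgamma_n-\bgamma_0\|=O_P(n^{-1/2})$ and to pair this rate with the $\sqrt n$-consistency of $\hat\bgamma_n$ from \thref{lem::lem_estimated_weight}, so that only the Glivenko--Cantelli property of ${\cal F}$ (and an integrable envelope) is needed, never a Donsker property. Since $\hat\bgamma_n\to_p\bgamma_0$, I first restrict attention to a fixed compact neighborhood $\mathbb{K}$ of $\bgamma_0$; with probability tending to one $\hat\bgamma_n\in\mathbb{K}$, on $\mathbb{K}$ the linkage probability $\pi_{\bgamma}$ is bounded away from $0$ by (A3), and $\tx=(1,\bx^T)^T$ is bounded because $\bX$ is bounded. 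On $\{\hat\bgamma_n=\bgamma_0\}$ the quantity vanishes, so I may assume $\hat\bgamma_n\neq\bgamma_0$.

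The next step is a Taylor expansion in $\bgamma$. Using $1/\pi_{\bgamma}(\bx)=1+e^{-\bgamma^T\tx}$, the difference $\zeta_{\bgamma}-\zeta_{\bgamma_0}$ is supported on $\{q=0\}$ and equals $l\,(e^{-\bgamma^T\tx}-e^{-\bgamma_0^T\tx})$. Expanding $e^{-\bgamma^T\tx}$ about $\bgamma_0$ gives
\[
\zeta_{\bgamma}(l,q,\bx)-\zeta_{\bgamma_0}(l,q,\bx)=-l\,I(q=0)\,e^{-\bgamma_0^T\tx}\,\tx^T(\bgamma-\bgamma_0)+l\,I(q=0)\,r_{\bgamma}(\bx),
\]
where, by boundedness of $\tx$ and of $e^{-\bgamma^T\tx}$ on $\mathbb{K}$, the second-order remainder satisfies $|r_{\bgamma}(\bx)|\le C\|\bgamma-\bgamma_0\|^2$ for a fixed constant $C$. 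Substituting $\bgamma=\hat\bgamma_n$ and multiplying by $f$ splits the target into a linear term and a remainder term.

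For the linear term, pulling out the (data-dependent but $f$-free) constant vector $\hat\bgamma_n-\bgamma_0$ yields
\[
\sqrt n(\P_n-\P_0)\big[-l\,I(q=0)\,e^{-\bgamma_0^T\tx}\,\tx^T(\hat\bgamma_n-\bgamma_0)f\big]=-\big[\sqrt n(\hat\bgamma_n-\bgamma_0)\big]^T(\P_n-\P_0)b_f,
\]
with $b_f=l\,I(q=0)\,e^{-\bgamma_0^T\tx}\,\tx\, f$. Since $b_f$ is $f$ multiplied by the fixed bounded vector $l\,I(q=0)e^{-\bgamma_0^T\tx}\tx$, the class $\{b_f:f\in{\cal F}\}$ is $\P_0$-Glivenko--Cantelli with integrable envelope by the Glivenko--Cantelli preservation theorem \citep{van2000preservation}, so $\sup_{f}\|(\P_n-\P_0)b_f\|=o_{P^*}(1)$. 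Combined with $\sqrt n(\hat\bgamma_n-\bgamma_0)=O_P(1)$ from \thref{lem::lem_estimated_weight}, the linear term is $O_P(1)\,o_{P^*}(1)=o_{P^*}(1)$ uniformly in $f$. For the remainder, the quadratic bound gives $|l\,I(q=0)r_{\hat\bgamma_n}(\bx)f|\le C\|\hat\bgamma_n-\bgamma_0\|^2F$, hence
\[
\sup_{f}\Big|\sqrt n(\P_n-\P_0)\big[l\,I(q=0)r_{\hat\bgamma_n}f\big]\Big|\le \sqrt n\,C\|\hat\bgamma_n-\bgamma_0\|^2(\P_nF+\P_0F)=O_P(n^{-1/2}),
\]
using $\P_nF\to_p\P_0F<\infty$ and $\|\hat\bgamma_n-\bgamma_0\|^2=O_P(n^{-1})$. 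Adding the two bounds proves the claim.

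The main obstacle is the linear term: a naive Donsker/equicontinuity route would force a square-integrable envelope on ${\cal F}$, which is not assumed here. The observation that avoids this is that the linear term factors as the $O_P(1)$ scaled estimation error $\sqrt n(\hat\bgamma_n-\bgamma_0)$ times a Glivenko--Cantelli fluctuation $(\P_n-\P_0)b_f$ that is $o_{P^*}(1)$ uniformly in $f$, so the $\sqrt n$ is absorbed by the consistency rate rather than by the empirical process. The remaining care is routine: verifying boundedness from (A3) and bounded covariates, controlling measurability through outer probability, and checking that the remainder is genuinely $O(\|\bgamma-\bgamma_0\|^2)$ uniformly over $\mathbb{K}$.
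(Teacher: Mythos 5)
Your proof is correct, and it rests on the same key mechanism as the paper's --- linearize the weight in $(\hat\bgamma_n-\bgamma_0)$, absorb the $\sqrt n$ into the $O_P(1)$ scaled estimation error from \thref{lem::lem_estimated_weight}, and get uniformity in $f$ from a Glivenko--Cantelli preservation argument --- but the execution is genuinely different. The paper first rewrites $\P_n\zeta_{\bgamma}f=\P_n^{\pi}\xi_{\bgamma}f$ and $\P_0\zeta_{\bgamma}f=\P_0\xi_{\bgamma}f$ (the latter identity uses (A1) and correctness of the logistic model), then applies an exact first-order mean-value expansion with a \emph{random} intermediate point $\bgamma^*$; because $\bgamma^*$ is random, it must establish the Glivenko--Cantelli property for a class indexed jointly by $f\in{\cal F}$ and $\bgamma\in\mathbb{K}$ (VC-subgraph plus preservation), and for the weighted measure, via \thref{thm::thm_gc}. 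You instead expand to second order around the \emph{fixed} point $\bgamma_0$: your linear term involves only the fixed bounded function $g_0(l,q,\bx)=l\,I(q=0)e^{-\bgamma_0^T\tx}\tx$, so preservation is needed only for the simpler class $\{fg_0:f\in{\cal F}\}$ under the unweighted empirical measure, and the quadratic remainder is killed by the crude envelope bound $\sqrt n\,C\|\hat\bgamma_n-\bgamma_0\|^2(\P_nF+\P_0F)=O_P(n^{-1/2})$. What your route buys: it bypasses the IPLW-measure reformulation entirely, so this step does not lean on (A1) or on the linkage model being correctly specified, and it never has to manage a random Taylor point; the price is the explicit second-order bookkeeping. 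What the paper's route buys: it stays inside the IPLW empirical-process framework and reuses \thref{thm::thm_gc}, which is needed elsewhere anyway, and the mean-value form makes the expansion exact with no remainder term. Both arguments use only the Glivenko--Cantelli hypothesis with an integrable envelope, as the lemma requires, and neither needs a Donsker property.
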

\begin{proof} 
First, recall that $\xi_{\bgamma}(q, x) = \frac{q + (1 - q) \pi_{\bgamma_0}(\bx)}{q + (1 - q) \pi_{\bgamma}(\bx)}$,  then we have that 
\begin{align*}
& \P_n \zeta_{\hat \bgamma_n} f = \P_n^{\pi} \xi_{\hat \bgamma_n} f; \quad \P_n \zeta_{\bgamma_0} f = \P_n^{\pi} \xi_{\bgamma_0} f
\end{align*}
Further,  since 
\begin{align*}
\P_0 \zeta_{\bgamma} &= \E_0 \left[ \frac{I(L + Q > 0)}{Q + (1 - Q) \pi_{\bgamma}(\bX)} \right] \\
& = \E_0 \left[ \E_0 \left[ \frac{I(L + Q > 0)}{Q + (1 - Q) \pi_{\bgamma}(\bX)} \middle | Q, \bX\right] \right] = \E_0 \left[ \frac{Q + (1 - Q) \pi_{\bgamma_0}(\bX)}{Q + (1 - Q) \pi_{\bgamma}(\bX)}\right] = \P_0 \xi_{\bgamma}
\end{align*}
We further have that 
\begin{align*}
& \P_0 \zeta_{\hat \bgamma_n} f = \P_0 \xi_{\hat \bgamma_n} f;  \quad \P_0 \zeta_{\bgamma_0} f = \P_0 \xi_{\bgamma_0} f
\end{align*}
Given above results and equation \eqref{eqn::eqn_gamma_diff}, we have 
\begin{align*}
& \sup_{f \in {\cal F}} \left\| \sqrt{n} (\P_n - \P_0)(\zeta_{\hat \bgamma_n} f - \zeta_{\bgamma_0} f) \right\| = \sup_{f \in {\cal F}} \left\| \sqrt{n} (\P_n^{\pi} - \P_0)(\xi_{\hat \bgamma_n} f - \xi_{\bgamma_0} f) \right\| \\
& \leq  \sup_{f \in {\cal F}} \left \| (\P_n^{\pi} - \P_0) \left( I(Q = 0) \pi_{\bgamma_0}(\bX) \left(1 - \frac{1}{\pi_{\bgamma^*}(\bX)} \right) f(\bX, \tilde{T}, \Delta) \tX^T
 \right) \right\| \left \|\sqrt{n} (\hat{\bgamma}_n - \bgamma_0)\right\|
\end{align*}
where $\bgamma^*$ is a point lies between $\hat \bgamma_n$ and $\bgamma_0$.
Note that the consistency of $\hat{\bgamma}_n$ implies that $\bgamma^*\overset{p}{\rightarrow}\bgamma_0$ and $\bgamma^*$ will fall into a compact small neighborhood $\mathbb{K}$ around $\bgamma_0$ with probability 1. As a result, 
${\cal G} = \{I(q = 0)\pi_{\bgamma_0}(\bx) \left(1 - \frac{1}{\pi_{\bgamma}(\bx)}\right)\tx^T:   \bgamma\in \mathbb{K} \}$ forms a VC-subgraph class by Lemma 2.6.15 and 2.6.18 of \cite{van1996weak}.  Then by the Glivenko-Cantelli Preservation theorem \citep{van2000preservation},  we have ${\cal F}_1 = \{f g: f \in {\cal F}, g \in {\cal G} \}$ is a $\P_0$-Glivenko-Cantelli class with an integrable envelope function. Then 
\begin{align*}
\sup_{f \in {\cal F}} \left\| \sqrt{n} (\P_n - \P_0)(\zeta_{\hat \bgamma_n} f - \zeta_{\bgamma_0} f) \right\| \leq \|\P_n^{\pi} - \P_0\|_{{\cal F}_1} \sqrt{n}\|\hat \bgamma_n - \bgamma_0\|
\end{align*}
By \thref{thm::thm_gc} and \thref{lem::lem_estimated_weight},  $\sqrt{n}\|\hat{\bgamma}_n - \bgamma_0\| = O_P^*(1)$ and $\|\P_n^{\pi} - \P_0\|_{{\cal F}_1} = o_P^*(1)$. 
\end{proof}

\begin{proof}[ of \thref{ipw_weak_converge}]
Let $\zeta_{\bgamma}(l,  q, \bx) = \frac{I(l + q > 0)}{q + (1 - q)\pi_{{\bgamma}}(\bx)}$.
This implies that $\zeta_{\bgamma_0}(L,Q, \bX) = g_1(L,Q, \bX)$, where $g_1$ is defined in \thref{ipw_weak_converge}.
Moreover, under assumption (A3) and $\|\P_0\|_{\cal F} < \infty$, ${\cal F}' = \{g_1 \cdot f:  f \in {\cal F} \}$ is a Donsker Class by Example 2.10.10 of \cite{van1996weak}. 

By the definition that $\G_n^{\pi} = \sqrt{n}(\P_n^{\pi} - \P_0)$, we have
$$
\G_n^{\pi}f = \sqrt{n}(\P_n - \P_0)\zeta_{\bgamma_0}f =  \sqrt{n}(\P_n - \P_0)g_1 f.
$$ 
Thus, the usual Donsker theorem (see, e.g., Section 19.2 of \citealt{van2000asymptotic}) implies that $\G^{\pi}_n \rightsquigarrow \G(g_1 \cdot)$ in $l^{\infty}({\cal F})$ and 
$$
\Var(\G (g_1 f)) = \Var(f(\bX, \tilde{T}, \Delta)) + \E\left[f(\bX, \tilde{T}, \Delta)^2 \frac{I(Q = 0)[1 - \pi_{\bgamma_0}(\bX)]}{\pi_{\bgamma_0}(\bX)}\right]
$$
Next,  
\begin{equation}
\begin{aligned}
\G_n^{\pi, e} f - \G_n^{\pi} f & = \sqrt{n}(\P_n^{\pi, e} - \P_0)f - \sqrt{n}(\P_n^{\pi} - \P_0)f \\
& = \G_n \left(\zeta_{\hat {\bgamma}_n}- \zeta_{ {\bgamma}_0} \right)f  +  \sqrt{n} \P_0 \left( \zeta_{\hat {\bgamma}_n} - \zeta_{ {\bgamma}_0}\right)f 
\end{aligned}
\label{eq::WC::01}
\end{equation}
By \thref{ipw_equi_cont_gamma},  $\G_n \left(\zeta_{ \hat{\gamma}_n} - \zeta_{ {\gamma}_0} \right)f$ is $o_P^*(1)$,
which bounds the first term. 
For the second term,  note that 
\begin{align*}
\zeta_{\hat \bgamma_n} - \zeta_{\bgamma_0} = - I(L = 1) I(Q = 0) \left( \frac{1}{\pi_{\bgamma_0}(\bX)} - 1 \right) \tX^T (\hat{\bgamma}_n - \bgamma_0) + o(\|\hat{\bgamma}_ n - \bgamma_0 \|^2)
\end{align*}
and $\E[I(L = 1) | Q = 0, \bX] = \pi_{\bgamma_0}(\bX)$, we can show that  
\begin{align*}
\sqrt{n} \P_0 \left( \zeta_{\hat {\bgamma}_n} - \zeta_{ {\bgamma}_0}\right)f 
= - \left\{\E \left[ I(Q = 0) (1 - \pi_{\bgamma_0}(\bX)) f(\bX, \tilde{T}, \Delta) \tX^T \right]\right\} \sqrt{n}(\hat{\bgamma}_n - \bgamma_0) +o_P^*(1) .
\end{align*}
%
%
%
Together,  equation \eqref{eq::WC::01} and \thref{lem::lem_estimated_weight} implies that
\begin{align*}
\G_n^{\pi, e} f &= \underbrace{\frac{1}{\sqrt{n}}\sum_{i=1}^n \left[g_1(L_i, Q_i, \bX_i) f(\bX_i, \tilde{T}_i, \Delta_i) - \P_0 f \right]}_{ \G_n^{\pi} f}\\
& \underbrace{- \frac{1}{\sqrt{n}} \E \left[ I(Q = 0) (1 - \pi_{\bgamma_0}(\bX)) f(\bX, \tilde{T}, \Delta) \tX^T \right] \bSigma_{\bgamma_0}^{-1}\sum_{i=1}^n I(Q_i = 0)[L_i - \pi_{\bgamma_0}(\bX_i)] \bX_i + o_P^*(1)}_{=\sqrt{n} \P_0 \left( \zeta_{\hat {\bgamma}_n} - \zeta_{ {\bgamma}_0}\right)f} \\
& = \G_n \left[ g_1 \cdot f -  g_2 \bQ_e(f)^T g_3\right] + o_P^*(1)
\end{align*}
where $g_1(l, q, \bx) = \frac{I(l + q > 0)}{q + (1 - q) \pi_{\bgamma_0}(\bx)}$, $g_2(l, q, \bx) = I(q = 0)[l - \pi_{\bgamma_0}(\bx)]$,  $\bg_3(\bx) = \bSigma_{\bgamma_0}^{-1} \tx$ and $\bQ_e(f) = \E[I(Q = 0)(1 - \pi_{\bgamma_0}(\bX)) f(\bX, \tilde{T}, \Delta) \tX]$.  This proves the finite dimensional convergence of $\G_n^{\pi, e}$.   Next, we prove the asymptotic equicontinuity of $\G_n^{\pi, e}$.   Define $\rho(f, g) = \P_0(f - g)^2$ and ${\cal F}_\delta = \{f - g: f, g \in {\cal F},  \rho(f - g) \leq \delta\}$.  First,  ${\cal F}$ is totally bounded by the metric $\rho$ given \cite[Problem 2.1.2]{van1996weak} and $\|\P_0\|_{{\cal F}} < \infty$.   Next,
\begin{align*}
\|\G^{\pi, e}_n\|_{{\cal F}_\delta} \leq \|\G_n^\pi\|_{{\cal F}_\delta} + \|\G_n^{\pi, e} - \G_n^\pi\|_{{\cal F}_\delta}
\end{align*} 
For the first term on the right-hand side, $\G^\pi_n$ is asymptotically equicontinuous with respect to $\rho$. 
For the second term, 
\begin{align*}
\|\G_n^{\pi, e} - \G_n^{\pi}\|_{{\cal F}_\delta} \leq \|\bQ_e^T\|_{{\cal F}_\delta} \sqrt{n} \|\hat{\bgamma}_n - \bgamma_0\| + o_P^*(1)
\end{align*}
and 
\begin{align*}
\|\bQ_e^T\|_{{\cal F}_\delta} = \sup_{h \in {\cal F}_\delta} \left\|  \E [I(Q = 0) (1 - \pi_{\bgamma_0}(\bX)) h(\bX, \tilde{T}, \Delta) \tX^T
\right\| \leq C \sup_{h \in {\cal F}_\delta}( \P (h^2))^{1/2} \leq C \delta
\end{align*}
by the definition of $h$ and some constant $C > 0$.  Thus, we have $\lim_{\delta \searrow 0} \|\G_n^{\pi, e}\|_{{\cal F}_\delta} = 0$.   Thus, by Theorem 1.5.7 of \cite{van1996weak}, $\G^{\pi, e}_n \rightsquigarrow \G(g_1\cdot - g_2 \bQ_e(\cdot)^T \bg_3)$ in $l^\infty({\cal F})$  and 
\begin{align*}
\Var(\G(g_1\cdot f - g_2 \bQ_e(f)^T \bg_3)) & = \Var(f(\bX, \tilde{T}, \Delta)) \\
& + \E\left[f(\bX, \tilde{T}, \Delta)^2 \frac{I(Q = 0)[1 - \pi_{\bgamma_0}(\bX)]}{\pi_{\bgamma_0}(\bX)}\right] - \bQ_e(f)^T\bSigma_{\bgamma_0}^{-1}\bQ_e(f),
\end{align*}
which completes the proof.
\end{proof}
Before proving \thref{thm::thm_Cox_right_asn_d}, we first introduce a useful lemma.

\begin{lemma} \thlabel{lem::uniform_lemma}
Under assumption (D2-3),  for a small compact set $\mathbb{B}$ that contains $\bbeta_0^*$, 
$$
\sup_{t \in [0, \tau_2], \bbeta \in \mathbb{B}} \|\bS_{n, w}^{(k)}(\bbeta, t) - \bs^{(k)}(\bbeta, t)\| = o_P(1)
$$
for $k = 0, 1, 2$. 
\end{lemma}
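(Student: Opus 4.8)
The plan is to recognize each $\bS_{n,w}^{(k)}(\bbeta,t)$ as the estimated-weight IPLW empirical measure $\P_n^{\pi,e}$ evaluated at a single function, so that the lemma collapses to one application of the IPLW Glivenko--Cantelli result \thref{thm::thm_gc}. Concretely, for $k=0,1,2$ and each scalar coordinate I would set $f_{k,\bbeta,t}(\bar\bX(\tau_2),\tilde T,\Delta)=Y(t)\exp(\bbeta^T\bX(t))[\bX(t)^{\otimes k}]$, where $Y(t)=I(\tilde T\ge t)$. Then $\bS_{n,w}^{(k)}(\bbeta,t)=\P_n^{\pi,e}f_{k,\bbeta,t}$ and, by the redefinition of $\bs^{(k)}$ in the time-dependent setup, $\bs^{(k)}(\bbeta,t)=\P_0 f_{k,\bbeta,t}$. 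Collecting these over the index set gives the class $\mathcal F=\{f_{k,\bbeta,t}: t\in[0,\tau_2],\ \bbeta\in\mathbb B,\ k=0,1,2\}$, read coordinatewise as a finite union of scalar classes, and the claimed supremum is exactly $\|\P_n^{\pi,e}-\P_0\|_{\mathcal F}$. Here \thref{thm::thm_gc} is applied in its time-dependent form, with the data $(\bar\bX(\tau_2),\tilde T,\Delta)$ and linkage weight $\pi_{\bgamma_0}(\bZ_1)$; assumptions (D1)--(D3) play the roles of (A1)--(A3), with (D3) supplying the positivity that makes the weight function bounded.

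Next I would verify the two hypotheses of \thref{thm::thm_gc}: an integrable envelope and the $\P_0$-Glivenko--Cantelli property of $\mathcal F$. The envelope is immediate from (D2): since $\|\bX_i(t)\|_1\le \|\bX_i(0)\|_1+\int_0^t\|d\bX_i(s)\|_1\le B$ for all $t$, each $\bX_i(t)$ is uniformly bounded, so over the compact set $\mathbb B$ we have $\exp(\bbeta^T\bX(t))\le \exp(B\sup_{\bbeta\in\mathbb B}\|\bbeta\|)=:M$ and $\|\bX(t)^{\otimes k}\|\le B^k$; hence a constant $F\equiv M B^k$ serves as a (trivially integrable) envelope. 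The consistency $\hat\bgamma_n\to_p\bgamma_0$ required by the estimated-weight half of \thref{thm::thm_gc} is supplied by \thref{lem::lem_estimated_weight}.

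The substantive step is the $\P_0$-Glivenko--Cantelli property of $\mathcal F$, and this is where (D2) does the real work; it is also exactly the point where the argument departs from the time-independent case and borrows the bounded-variation technique of \cite{lin1989robust}. I would build $\mathcal F$ from elementary classes and invoke preservation results. The indicators $\{t\mapsto I(\tilde T\ge t)\}$ form a VC, hence $\P_0$-Donsker, class. Under (D2) each map $t\mapsto\bX(t)$ has total variation bounded by $B$, so $t\mapsto\bbeta^T\bX(t)$ has total variation bounded uniformly over $\bbeta\in\mathbb B$, and the class of uniformly bounded functions of bounded variation is $\P_0$-Donsker by the bracketing-entropy bound for bounded-variation functions (see, e.g., Theorem~2.7.5 of \citealt{van1996weak}). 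Composing with $\exp(\cdot)$, Lipschitz on the bounded range $[-B\sup_{\bbeta}\|\bbeta\|,\ B\sup_{\bbeta}\|\bbeta\|]$, preserves the Donsker property, and the coordinates of $\bX(t)^{\otimes k}$ are again of bounded variation. Since every factor is uniformly bounded, their products remain $\P_0$-Glivenko--Cantelli by the Glivenko--Cantelli preservation theorem \citep{van2000preservation}.

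Having checked both hypotheses, \thref{thm::thm_gc} yields $\|\P_n^{\pi,e}-\P_0\|_{\mathcal F}\to_{P^*}0$, that is, $\sup_{t\in[0,\tau_2],\ \bbeta\in\mathbb B}\|\bS_{n,w}^{(k)}(\bbeta,t)-\bs^{(k)}(\bbeta,t)\|=o_P(1)$ for each $k=0,1,2$, the supremum over the finite union of coordinate classes being controlled by the sum of the individual suprema. The main obstacle, as flagged above, is the entropy and preservation bookkeeping for the $(\bbeta,t)$-indexed class in the presence of time-dependent covariates; once (D2) delivers uniform boundedness and bounded variation, the remainder reduces to standard preservation theorems and a direct appeal to \thref{thm::thm_gc}.
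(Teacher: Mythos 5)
Your proposal is correct and follows essentially the same route as the paper: both reduce the claim to the IPLW Glivenko--Cantelli result (\thref{thm::thm_gc}) applied to the class $\{y(t)\exp(\bbeta^T\bx(t))\bx(t)^{\otimes k}: t\in[0,\tau_2],\,\bbeta\in\mathbb{B}\}$, using (D2) to get boundedness and bounded variation and then preservation theorems to establish the Donsker/Glivenko--Cantelli property (the paper cites Example 2.11.16 and Theorem 2.10.6 of \cite{van1996weak} where you invoke bracketing entropy for bounded-variation classes, a cosmetic difference). Your explicit verification of the envelope condition and of $\hat{\bgamma}_n\rightarrow_p\bgamma_0$ via \thref{lem::lem_estimated_weight} is, if anything, slightly more complete bookkeeping than the paper's own write-up.
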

\begin{proof}[ of \thref{lem::uniform_lemma}]
By assumption (D2),  $\bX(t)$ can be written as the difference of two nondecreasing processes of $t \in [0, \tau_2]$, then by Example 2.11.16 of \cite{van1996weak} and the fact that $\bX(t)$ is bounded,  $\G_n \bX \rightsquigarrow \G_0$ in $l^{\infty}([0, \tau_2])$.  Now we can view ${\cal G} = \{\bx(t):  t \in [0, \tau_2]\}$ as a function class $\{ f_t(x) = \bx(t);  t \in [0, \tau_2]\}$.  ${\cal G}$ is a $\P_0$-Donsker class.  Next, $\{\bbeta:  \bbeta \in \mathbb{B}\}$ is trivially a $\P_0$-Donsker class and $\{\bx(t)^T \bbeta:  t \in [0, \tau_2], \bbeta \in \mathbb{B}\}$ is also a Donsker class by theorem 2.10.6 of \cite{van1996weak}.  Similarly,  we can prove that $\{y(t) \exp(\bbeta^T \bx(t)) \bx(t)^{\otimes k}:  \bbeta \in \mathbb{B}, t \in [0, \tau_2]\}$ is also a Donsker class for $k = 0, 1, 2$  as $\{y(t):  t \in [0, \tau_2]\}$ is also a $\P_0$-Donsker class by theorem 2.11.16 of \cite{van1996weak}.  Then the result can be proved by \thref{thm::thm_gc}.  
\end{proof}

\begin{proof}[ of \thref{thm::thm_Cox_right_asn_d}]
The proof of the asymptotic linear expansion is inspired by \cite{lin1989robust}.
By \thref{lem::uniform_lemma}, we have
$$
\sup_{\bbeta \in \mathbb{B}, t \in [0, \tau_2]} \|\bS_{n, w}^{(k)}(\bbeta, t) - \bs^{(k)}(\bbeta, t)\| = o_P(1) \qquad k = 0, 1, 2
$$
for a compact set $\mathbb{B}$ that contains $\bbeta_0^*$. 
Then we can decompose the partial score in equation \eqref{eq::Un} as follows:
\begin{equation}
\begin{aligned}
\sqrt{n} \hat{\bU}_n(\bbeta) &= \frac{1}{\sqrt{n}} \sum_{i=1}^n \Delta_i \hat{w}_i \left\{\bX_i(\tilde{T}_i) -  \frac{\bS_{n, w}^{(1)}(\bbeta, \tilde{T}_i)}{\bS_{n, w}^{0}(\bbeta, \tilde{T}_i)}\right\} \\
&= \frac{1}{\sqrt{n}} \sum_{i=1}^n \hat{w}_i \int^{\tau_2}_0 \bX_i(t) d N_i(t)  -\sqrt{n}\int^{\tau_2}_0 \frac{\bS_{n, w}^{(1)}(\bbeta, t)}{\bS_{n, w}^{(0)}(\bbeta, t)} d \bar{N}(t)\\
& = \underbrace{\frac{1}{\sqrt{n}} \sum_{i=1}^n \hat{w}_i \int^{\tau_2}_0 \bX_i(t) dN_i(t)}_{(I)} -\underbrace{\sqrt{n} \int^{\tau_2}_0 \frac{\bs^{(1)}(\bbeta, t)}{\bs^{(0)}(\bbeta, t)} d[\bar{N}(t) - \tilde{N}(t)]}_{(II)} \\
& \underbrace{- \sqrt{n} \int^{\tau_2}_0 \frac{\bS_{n, w}^{(1)}(\bbeta, t)}{\bS_{n, w}^{(0)}(\bbeta, t)} d \tilde{N}(t)}_{(III)} - \underbrace{\sqrt{n} \int^{\tau_2}_0 \left[
\frac{\bS_{n, w}^{(1)}(\bbeta, t)}{\bS_{n,w}^{(0)}(\bbeta, t)} - \frac{\bs^{(1)}(\bbeta, t)}{\bs^{(0)}(\bbeta, t)}
 \right] d[ \bar{N}(t) - \tilde{N}(t) ]}_{(IV)}
\end{aligned}
\label{eqn::eqn_u_decomposition}
\end{equation}

Term (I) and (II) are already linear expansions so we do not need to conduct any further derivation.
In what follows, we will first show that term (IV) is $o_P(1)$ and then argue that term (III)
has an asymptotic linear expansion.

{\bf Term (IV).}
By assumption (D4),  for large enough $n$, both $\bS_{n, w}^{(0)}(\bbeta, t)$ and $\bs^{(0)}(\bbeta, t)$ are bounded away from 0, so \thref{lem::uniform_lemma} implies that
$$
\sup_{\bbeta \in \mathbb{B}, t \in [0, \tau_2]} \left \| \frac{\bS_{n, w}^{(1)}(\bbeta, t)}{\bS_{n, w}^{(0)}(\bbeta, t)} - \frac{\bs^{(1)}(\bbeta, t)}{\bs^{(1)}(\bbeta, t)}
\right\| = o_P(1)
$$
Moreover,
by \thref{ipw_weak_converge},  $n^{1/2}(\bar{N}(\tau_2) - \tilde{N}(\tau_2))$ converges to a mean zero normal random variable. Together, this implies that (IV) is $o_P(1)$.  

{\bf Term (III).}
Next,  we have that 
\begin{align*}
\frac{\bS_{n,w}^{(1)}(\bbeta, t)}{\bS_{n, w}^{(0)}(\bbeta, t)} &= \frac{\bS_{n, w}^{(1)}(\bbeta, t)}{\bs^{(0)}(\bbeta, t)\left[1 + \frac{\bS_{n, w}^{(0)}(\bbeta, t) - \bs^{(0)}(\bbeta, t)}{\bs^{(0)}(\bbeta, t)}\right]} = \frac{\bS_{n,w}^{(1)}(\bbeta, t)}{\bs^{(0)}(\bbeta, t)} \left[ 1 - \frac{\bS_{n, w}^{(0)}(\bbeta, t) - \bs^{(0)}(\bbeta, t)}{\bs^{(0)}(\bbeta, t)} + o_P(1) \right] \\
& = \frac{1}{\bs^{(0)}(\bbeta, t)} \left[ \bS_{n, w}^{(1)}(\bbeta, t) - \frac{\bs^{(1)}(\bbeta, t)}{\bs^{(0)}(\bbeta, t)} \{\bS_{n, w}^{(0)}(\bbeta, t) - \bs^{(0)}(\bbeta, t)\}\right] + o_P(1) 
\end{align*}
Thus, 
$$
n^{1/2} \int^{\tau_2}_0 \frac{\bS_{n, w}^{(1)}(\bbeta, t)}{\bS_{n, w}^{(0)}(\bbeta, t)} d \tilde{N}(t) = n^{1/2} \int^{\tau_2}_0 \frac{1}{\bs^{(0)}(\bbeta, t)} \left[
\bS_{n, w}^{(1)}(\bbeta, t) - \frac{\bs^{(1)}(\bbeta, t)}{\bs^{(0)}(\bbeta, t)}\{\bS_{n, w}^{(0)}(\bbeta, t) - \bs^{(0)}(\bbeta, t)\} d \tilde{N}(t)
\right] + o_P(1)
$$
Taking above equation back to equation \eqref{eqn::eqn_u_decomposition}, we get the desired asymptotic linear expansion of $\hat{\bU}_n(\bbeta)$: $\sqrt{n} \hat{\bU}_n(\bbeta) = n^{-1/2}\sum_{i=1}^n \hat{w}_i \bU_i(\bbeta) + o_p(1)$ with 
$$
\bU_i(\bbeta) = \int^{\tau_2}_0 \left[ \bX_i(t) - \frac{\bs^{(1)}(\bbeta, t)}{\bs^{(0)}(\bbeta, t)}\right]dN_i(t) - \int^{\tau_2}_0 \frac{Y_i(t) \exp(\bbeta^T \bX_i(t))}{\bs^{(0)}(\bbeta, t)} \left[ \bX_i(t) - \frac{\bs^{(1)}(\bbeta, t)}{\bs^{(0)}(\bbeta, t)}\right] d\tilde{N}(t) 
$$
Finally,  to apply \thref{ipw_weak_converge},  note that $\pi_{\bgamma_0}(\bX)$ needs to be replaced by $\pi_{\bgamma_0}(\bZ_1)$ and 
$\bU_i(\bbeta)$ can be viewed as a function $\eta_{\bbeta}(\bar{\bX}(\tilde{T}), \tilde{T}, \Delta)$. 
As a result,  we have $\sqrt{n} \hat{\bU}_n(\bbeta) = n^{-1/2}\sum_{i=1}^n \hat{w}_i \bU_i(\bbeta) + o_p(1) = \G^{\pi, e}_n \eta_{\bbeta} + o_P(1)$. 
\thref{ipw_weak_converge} implies
the asymptotic normality of $\sqrt{n} \hat{\bU}_n(\bbeta_0^*)$\footnote{we only need the pointwise convergence result at $\bbeta = \bbeta_0^*$}, which completes the proof.  
\end{proof}

\begin{proof}[ of \thref{thm::thm_Cox_right_asn_d_asp}]
Based on the fact that $\hat \bU_n(\hat \bbeta_n) = 0$ and $\bU_0(\bbeta_0^*)=0$,  we have 
$$
\sqrt{n}(\hat{\bbeta}_n - \bbeta_0^*) = \bA_n(\bbeta_0^*)^{-1} n^{1/2} \widehat{\bU}_n(\bbeta_0^*) + o_P(\sqrt{n}\|\hat{\bbeta}_n - \bbeta_0^*\|),
$$
where $\bA_n$ is defined in equation \eqref{eq::An}.   To prove that $\hat{\bbeta}_n \rightarrow_p \bbeta_0^*$, we adopt the same strategy as Lemma 3.1 in \cite{andersen1982Cox}.  From \thref{thm::thm_Cox_right_asn_d},  we obtained that $\frac{1}{n} \hat{\bU}_n(\bbeta) = \frac{1}{n} \sum_{i=1}^n \hat{w}_i \bU_i(\bbeta) + o_P(n^{-1/2})$.  Then by \thref{thm::thm_gc}, 
\begin{align*}
\frac{1}{n} \hat{\bU}_n(\bbeta) \rightarrow_p \E\left\{ \int^{\tau_2}_0 \left[ \bX(t) - \frac{\bs^{(1)}(\bbeta, t)}{\bs^{(0)}(\bbeta, t)}
 \right] dN(t) - \int^{\tau_2}_0 \left[ \bX(t) - \frac{ \bs^{(1)}(\bbeta, t)}{ \bs^{(0)}(\bbeta, t)}\right] \frac{Y(t) \exp(\bbeta^T \bX(t))}{\bs^{(0)}(\bbeta, t)} d \tilde{N}(t) \right\}
\end{align*}
It is not hard to prove that above expectation equals to $\bU_0(\bbeta) = \E\left[ \Delta \left( \bX(\tilde{T}) - \frac{\bs^{(1)}(\bbeta, \tilde{T})}{\bs^{(0)}(\bbeta, \tilde{T})} \right)\right]$.   By assumption (D5), $\bA(\bbeta_0^*) = -\frac{\partial \bU_0(\bbeta)}{\partial \bbeta} \big |_{\bbeta =\bbeta_0^*}$ is positive definite and by a similar argument as Lemma 3.1 in \cite{andersen1982Cox},  we have $\hat{\bbeta}_n \rightarrow_p \bbeta_0^*$.   We now prove the uniform convergence of $\bA_n(\bbeta)$ to $\bA(\bbeta)$ in a small compact subset $\mathbb{B}$ that contains $\bbeta_0^*$. 
\begin{equation}
\begin{aligned}
& \sup_{\bbeta \in \mathbb{B}}\|\bA_n(\bbeta) - \bA(\bbeta)\|  \leq \int^{\tau_2}_0 \sup_{\bbeta \in \mathbb{B}}\left\| \frac{\bS_{n, w}^{(2)}(\bbeta, t)}{\bS_{n, w}^{(0)}(\bbeta, t)} - \left(\frac{\bS_{n, w}^{(1)}(\bbeta, t)}{\bS_{n, w}^{(0)}(\bbeta, t)}\right)^{\otimes 2}\right \| |d(\bar{N}(t) - \tilde{N}(t))| \\
& + \int^{\tau_2}_0  \sup_{\bbeta \in \mathbb{B}} \left\| \left \{ \frac{\bS_{n, w}^{(2)}(\bbeta, t)}{\bS_{n, w}^{(0)}(\bbeta, t)} - \left(\frac{\bS_{n, w}^{(1)}(\bbeta, t)}{\bS_{n, w}^{(0)}(\bbeta, t)}\right)^{\otimes 2}\right\} - \left\{ \frac{\bs^{(2)}(\bbeta, t)}{\bs^{(0)}(\bbeta, t)} - \left(\frac{\bs^{(1)}(\bbeta, t)}{\bs^{(0)}(\bbeta, t)}\right)^{\otimes 2}\right\} \right\|d \tilde{N}(t)
\end{aligned}
\label{eqn::difference_p_s_d}
\end{equation}
For the first term of \eqref{eqn::difference_p_s_d},  let 
$$
h_n(t) = \sup_{\bbeta \in \mathbb{B}}\left\| \frac{\bS_{n, w}^{(2)}(\bbeta, t)}{\bS_{n, w}^{(0)}(\bbeta, t)} - \left(\frac{\bS_{n, w}^{(1)}(\bbeta, t)}{\bS_{n, w}^{(0)}(\bbeta, t)}\right)^{\otimes 2}\right \|$$ and 
$$
h(t) = \sup_{\bbeta \in \mathbb{B}}\left\| \frac{\bs^{(2)}(\bbeta, t)}{\bs^{(0)}(\bbeta, t)} - \left(\frac{\bs^{(1)}(\bbeta, t)}{\bs^{(0)}(\bbeta, t)}\right)^{\otimes 2}\right \|
$$. We can replace $h_n(t)$ by $h(t)$ such that 
$$
\int^{\tau_2}_0 |h_n(t)| |d(\bar{N}(t) - \tilde{N}(t))| = \int^{\tau_2}_0 |h(t)| |d(\bar{N}(t) - \tilde{N}(t))| + o_P(1)
$$
by \thref{lem::uniform_lemma}.  By \thref{thm::thm_gc}, $\sup_{t \in [0, \tau_L]}|\bar{N}(t) - \tilde{N}(t)| = o_P(1)$.   Thus the first term is $o_P(1)$.  The second term is also $o_P(1)$ by \thref{lem::uniform_lemma}.  By the uniform convergence of $\bA_n(\bbeta)$ to $\bA(\bbeta)$, we then have $\bA_n(\bbeta_0^*) \rightarrow_p \bA(\bbeta_0^*)$. 
Then, by slutsky's theorem, we obtained that 
$$
\sqrt{n}(\hat{\bbeta}_n - \bbeta_0^*) \rightarrow_d \mathbf{N}(0, \bSigma_0^{-1} \bSigma \bSigma_0^{-1})
$$
\end{proof}


\section{Doubly Robust Estimation and its limitation}  \label{sec::doubly_robust}
Now we propose an augmented inverse probability of linkage weighting (AIPLW) estimator for estimating the long-term effect. We first need to define several outcome regression functions:
\begin{align*}
m_0(\bX) & = \E\left[  \Delta \left(\bX - \frac{s^{(1)}(\tilde{T}, \bbeta)}{s^{(0)}(\tilde{T}, \bbeta)} \right)
\middle| \bX, Q = 0, L = 1  \right]  \\
& =  \underbrace{\E\left[ \Delta \middle| \bX, Q = 0, L = 1\right] \bX}_{m_1(\bX)} -
\underbrace{\E \left[ \Delta \frac{s^{(1)}(\tilde{T}, \bbeta)}{s^{(0)}(\tilde{T}, \bbeta)} \middle | \bX,  Q = 0, L = 1 \right]}_{m_2(\bX)} \\ 
m_3(\bX) & = \E\left[ I(\tilde{T} \geq t) \exp(\bX^T \bbeta) | \bX, Q = 0,  L = 1 \right]  = \E\left[ I(\tilde{T} \geq t) | \bX, Q = 0,  L = 1 \right] \exp(\bX^T \bbeta)
\end{align*}
The augmented IPW partial likelihood is then as following:
\begin{equation}
\begin{aligned}
U_{n, AIPLW}  &= \frac{1}{n} \sum_{i=1}^n \left\{
I(Q_i = 1) \Delta_i \left[ \bX_i - \frac{S_{n, DR}^{(1)}(\bbeta, \tilde{T}_i)}{S_{n, DR}^{(0)}(\bbeta, \tilde{T}_i)} \right] 
+ I(Q_i = 0) \left[ \frac{I(L_i = 1)}{\pi_{\hat{\gamma}_n}(\bX_i)} \Delta_i  \left[ \bX_i - \frac{S_{n, DR}^{(1)}(\bbeta, \tilde{T}_i)}{S_{n, DR}^{(0)}(\bbeta, \tilde{T}_i)} \right]  \right. \right. \\
& \left. \left.  + \left(1 - \frac{I(L_i =  1)}{\pi_{\hat{\gamma}_n}(\bX_i)}\right) \hat{m}_0(\bX_i)\right]
\right\} 
\end{aligned}
\label{eqn::eqn_aipw_ee}
\end{equation}
with 
\begin{align*}
S^{(k)}_{n, DR}(\bbeta, t) & = \frac{1}{n} \sum_{i=1}^n \left\{ I(Q_i = 1) I(\tilde{T}_i \geq t) \exp(\bX_i^T \bbeta)  + I(Q_i = 0) \left[
\frac{I(L_i = 1)}{\pi_{\hat{\gamma}_0}(\bX_i)}I(\tilde{T}_i \geq t) \exp(\bX_i^T \bbeta)  \right. \right. \\
& \left. \left. +  \left(1 - \frac{I(L_i =  1)}{\pi_{\hat{\gamma}_n}(\bX_i)}\right) \hat{m}_3(\bX_i)
\right] \right\} \bX_i^{\otimes k}
\end{align*}
where $\hat{m}_k(\mathbf{x})$ are certain estimators of $m_k(\mathbf{x})$ for $k = 1, 2, 3$ such that $\hat{m}_0(\mathbf{x}) = \hat{m}_1 (\mathbf{x}) - \hat{m}_2(\mathbf{x})$.  Let $\hat{\bbeta}_{AIPW}$ be the solution to $U_{n, AIPW} = \bm{0}$. $\hat{\bbeta}_{AIPW}$ is a doubly robust estimator in the sense that either $\pi_{\hat{\gamma}_n}(\mathbf{x})$ being consistent for $\pi_0(\mathbf{x}) = P(L = 1 | \mathbf{x},  Q = 0)$ or $\hat{m}_k(\mathbf{x})$ being consistent for $m_k(\mathbf{x})$ for $k = 1, 2, 3$ will guarantee that $\hat{\bbeta}_{AIPW}$ is a consistent estimator for $\bbeta_0^*$. Informally, to prove the doubly-robust property of $\hat{\bbeta}_{AIPW}$,  we need to prove that the population version of the estimating equation \eqref{eqn::eqn_aipw_ee} have the same root $\bbeta_0^*$ as the IPLW estimation equation.  

We first argue that $S_{n, DR}^{(k)}(\bbeta, t)$ is a doubly-robust estimator of $s_{(k)}(\bbeta, t)$ in the above sense.  It is not hard to see that $S_{n, DR}^{(k)}(\bbeta, t)$ converges to $s_{DR}^{(k)}(\bbeta, t)$ with
\begin{align*}
s^{(k)}_{DR}(\bbeta, t) = & \E \left\{  \left(
I(Q = 1) I(\tilde{T} \geq t) \exp(\bX^T \bbeta) + I(Q = 0) \left[ \frac{I(L = 1)}{\pi_{\gamma_0}(\bX)} I(\tilde{T} \geq t) \exp(\bX^T \bbeta) \right. \right. \right.\\
& \left. \left. \left. + \left(1 - \frac{I(L = 1)}{\pi_{\gamma_0}(\bX)} \right) m_3^*(\bX) \right] \right) \bX^{\otimes k}
\right\}
\end{align*}
where $\hat{m}_3(\mathbf{x}) \rightarrow_p m_3^*(\mathbf{x})$ and $\pi_{\hat{\gamma}_0}(\mathbf{x}) \rightarrow \pi_{\gamma_0}(\mathbf{x})$.  $m_3^*(\mathbf{x})$ is not necessarily $m_3(\mathbf{x})$ and similarly $\pi_{\gamma_0}(\mathbf{x})$ might not be $\pi_0(\mathbf{x})$. Then when $\pi_0(\mathbf{x}) = \pi_{\gamma_0}(\mathbf{x})$ and $m_3^*(\mathbf{x}) \neq m_3(\mathbf{x})$,  we have
\begin{align*}
s_{DR}^{(k)}(\bbeta, t)  &= \E \left\{
\left[I(Q = 1) I(\tilde{T} \geq t) \exp(\bX^T \bbeta) + I(Q = 0) \frac{I(L = 1)}{\pi_0(\bX)}I(\tilde{T} \geq t) \exp(\bX^T \bbeta) \right] \bX^{\otimes k}
\right\} \\
& = s^{(k)}(\bbeta, t) 
\end{align*} 
according to \thref{prop::prop_ipw_llk} and assumption (A1).  On the other hand,  when $\pi_{\gamma_0}(\mathbf{x}) \neq \pi_0(\mathbf{x})$ and $m_3^*(\mathbf{x}) = m_3(\mathbf{x})$,  we can rewrite $s_{DR}^{(k)}(\bbeta, t)$ as 
\begin{align*}
& \E \left\{  \left(I (Q = 1) I(\tilde{T} \geq t) \exp(\bX^T \bbeta) + \underbrace{I(Q = 0) \frac{I(L = 1)}{\pi_{\gamma_0}(\bX)} \left(
I(\tilde{T} \geq t) \exp(\bX^T \beta) - m_3(\bX) \right)}_{\mathbf{I}} \right. \right. \\
& \left. \left. + I(Q = 0)m_3(\bX) \right) \bX^{\otimes k} \right\}
\end{align*}
and term $\mathbf{I}$ is 0 by law of total expectation.  Thus,  again we have $s_{DR}^{(k)}(\bbeta, t) = s^{(k)}(\bbeta, t)$. 
Similarly,  it is not hard to prove that the population version of the above estimating equation \eqref{eqn::eqn_aipw_ee}
\begin{align*}
& U_{AIPW}  = \E \left\{
I(Q = 1) \Delta \left[ \bX - \frac{s_{DR}^{(1)}(\bbeta, \tilde{T})}{s_{DR}^{(0)}(\bbeta, \tilde{T})} \right] 
+ I(Q = 0) \left[ \frac{I(L = 1)}{\pi_{\gamma_0}(\bX_i)} \Delta \left[ \bX - \frac{s_{DR}^{(1)}(\bbeta, \tilde{T})}{s_{DR}^{(0)}(\bbeta, \tilde{T})} \right]  + \right. \right. \\
& \left. \left.  \left(1 - \frac{I(L =  1)}{\pi_{\gamma_0}(\bX)}\right) m_0^*(\bX)\right]
\right\} 
\end{align*}
where $\hat{m}_0(\mathbf{x}) \rightarrow m_0^*(\mathbf{x})$. Then when we have $\pi_{\gamma_0}(\mathbf{x}) = \pi_0(\mathbf{x})$,   $U_{AIPW}$ becomes
$$
\E \left\{ \left[I(Q = 1) + I(Q = 0) \frac{I(L = 1)}{\pi_{\gamma_0}(\bX)} \right] \Delta \left[ \bX - \frac{s^{(1)}(\bbeta, \tilde{T})}{s^{(0)}(\bbeta, \tilde{T})} \right] \right\}
$$
which is the same as the IPLW estimating equation according to \thref{prop::prop_ipw_llk} and assumption (A1).  On the other hand,  if we have  $\pi_{\gamma_0}(\mathbf{x}) \neq \pi_0(\mathbf{x})$ and $m_k^*(\mathbf{x}) = m_k(\mathbf{x})$ for $k = 1, 2, 3$, then using the same argument as above, we have 
 that $U_{AIPW}$ becomes
\begin{align*}
E\left\{ \Delta   \left[ \bX - \frac{s^{(1)}(\bbeta, \tilde{T})}{s^{(0)}(\bbeta, \tilde{T})} \right]
\right\}
\end{align*}
which is just the original partial likelihood estimating equation for Cox model.  Thus,  $\hat{\bbeta}_{AIPW}$ is doubly-robust. 

\subsection{Difficulty of Estimation of Doubly-Robust Estimator}
Based on \eqref{eqn::eqn_aipw_ee},  we need to estimate $m_k(\mathbf{x})$ for $k = 1, 2, 3$ to solve for $\hat{\bbeta}_{AIPW}$.  However, there are major difficulties with estimating these three regression functions. 
To estimate $m_k(\mathbf{x})$, we need to estimate
\begin{align*}
m_1(\bX) &= \E[\Delta |\bX; Q = 0, L = 1]\bX = P(T \leq C | X,  Q = 0, L = 1) \bX \\
m_2(\bX) & = \E \left[\Delta \frac{s^{(1)}(\tilde{T}, \bbeta)}{s^{(0)}(\tilde{T}, \bbeta)} \middle | \bX, Q = 0, L = 1 \right] \\
m_3(\bX) & = P(\tilde{T} \geq t | \bX; Q = 0, L = 1) \exp(\bX^T \bbeta)
\end{align*}
We have a couple modeling strategies.  We use $m_1(\mathbf{x})$ as an example to illustrate the modeling details. For the first modeling strategy,  we can try to estimate $m_k(\mathbf{x})$ through modeling the distribution of $C_1, C_2, T$.  More specifically, for $P(T \leq C |\bX,  Q = 0, L = 1)$,  it is not hard to get that 
\begin{align*}
P(T \leq C | \bX, Q = 0; L = 1) = \frac{P(T \leq C_2, T \geq C_1 | \bX, L = 1)}{P(T \geq C_1 | \bX, L = 1)}
\end{align*}
For the numerator,  we have 
\begin{align*}
P(T \leq C_2, T \geq C_1 | \bX, L = 1) = \int^{\tau_2}_0 P(C_2 \geq s,  C_1 \leq s | T = s, \bX, L = 1) f_T(s | \bX, L = 1) ds
\end{align*}
Thus,  we need to model the joint distribution of $C_1$ and $C_2$ given $T, \bX$ and $L = 1$.   Note here with the independent censoring assumption,  we have $(C_1,  C_2) \indep T | \bX$.  Together with assumption (A1),  it is not clear if we also have $(C_1,  C_2) \indep T | \bX, L = 1$.  Next, even if we make the further assumption that $(C_1, C_2) \indep T | \bX, L = 1$. It still requires us to model the joint distribution of $C_1$ and $C_2$ given $\bX$ and $L = 1$.  Another thing is that this also requires us to model the distribution of failure time $T$ given $\bX$ and $L = 1$, which need careful modeling to avoid model conflict with the Cox model for $T$ given $\bX$ alone.   Similar modelings are required for estimating $m_2(\bX)$ and $m_3(\bX)$. 

For a second modeling strategy,  we can directly model $P(\Delta = 1 | \bX, Q = 0, L = 1)$ with a logistic regression as $\Delta$ is binary variable. Similarly,  we can estimate $m_3(\bX)$ by directly modeling the distribution of the observed time $\tilde{T}$ given $\bX$, $Q = 0$ and $L = 1$ through Cox regression.  To estimate $m_2(\bX)$,  we need to either model the distribution of $\tilde{T}$ given $\Delta = 1$, $Q = 0$ and $L = 1$ or model the distribution of $\Delta$ given $T, \bX, Q = 0$ and $L = 1$.  Take all things into consideration,  very careful modelings need to be carried out to ensure model congeniality and whether such models exist is not clear to us.

Finally,  to avoid the potential model congeniality issue,  non-parametric estimation technique might be applied. However, nonparametric estimation in general suffers from the curse of dimensionality issue, which might require a very large number of samples to get a good estimate.

\section{Linkage assumption and NLAC method} \label{sec::linkage_assumption_for_naive_method}
Note that one sufficient condition for CLAR is 
$$
 L \indep (T, C)| Q = 0, \bX.
$$
Technically, we can also modify the CLAR assumption such that linkage also depends on the censoring time in clinical trial $C_1$:
$$
P(L=1|\tilde{T}, \Delta, Q = 0,  \bX, C_1) = P(L=1| Q = 0, \bX, C_1)
$$
as $C_1$ is always observed when $Q = 0$. One sufficient assumption for this modified CLAR assumption is
$$
L \indep (T, C_2) | Q = 0, \bX, C_1.
$$

We now discuss some other potential assumptions for linkage. 
For an alternative approach,  we might assume that 
$$
L \indep (T, C) | \bX.
$$
However,  the IPW type method for this assumption suffers from the same issue as the complete-case analysis in that unlinked participants that are diagnosed with PC within the clinical trial will not be included in analysis. 

As our main goal is to deal with the missing survival outcome $T$ and $C_2$ and the missingness only happens when a participant is not linked and censored in the clinical trial, an alternative approach would be to directly model $P(L = 0, Q = 0 | \bX, T, C_1, C_2)$ and a MAR type assumption would be 
$$
P(L = 0, Q = 0 | \bX, T, C_1, C_2) = P(L = 0, Q = 0 | \bX)
$$
since only $\bX$ is always observed.  However,  this MAR assumption would never hold as we always have $Q = 0$ when $T \geq C_1$.  Thus,  we choose to model linkage alone as the CLAR assumption.  Next, we give the proof for \thref{prop::prop_naive_method_linkage_assumption}. 
\begin{proof} [ of \thref{prop::prop_naive_method_linkage_assumption}]
The NLAC method modifies the censoring time $C$ compared to the oracle method. To prove the consistency of the estimator obtained by naive method,  we only need to prove that the population version of the partial likelihood for NLAC method has a solution at $\bbeta = \bbeta_0$.  The rest is the same as the consistency proof in \cite{andersen1982Cox}.   For notational simplicity,  we illustrate the proof with time-independent covariates only. 

Recall for NLAC method, we have 
$$
\tilde{T}_j = \begin{cases} \tilde{T}_j & L_j + Q_j > 0 \\
C_{1j} & L_j + Q_j = 0 \end{cases}
$$
and $\Delta_{j} = 0$ if $L_j + Q_j = 0$.  Thus, the partial likelihood for NLAC method solves 
\begin{align*}
\hat{U}_{naive}(\bbeta) = \frac{1}{n} \sum_{i=1}^n I(L_i + Q_i > 0) \Delta_i \left[ \bX_i - \frac{S_{n, w}^{(1)}(\bbeta, \tilde{T}_i)}{S_{n, w}^{(0)}(\bbeta, \tilde{T}_i)} \right] = \bm{0}
\end{align*}
with 
$$
S_{n, w}^{(k)}(\bbeta, t) = \sum_{j=1}^n \left[ I(L_j + Q_j > 0) I(\tilde{T}_j \geq t) \exp(\bX_j^T \bbeta)  + I(L_j + Q_j = 0) I(C_{1j} \geq t) \exp(\bX_j^T \bbeta) 
\right] \bX_j^{\otimes k}
$$
Then, by similar technique in \cite{andersen1982Cox}, we can prove that 
$$
\hat{U}_{naive}(\bbeta) \rightarrow_p U_0(\bbeta) = \E \left[ \Delta I(L + Q > 0) \left[ 
\bX - \frac{S^{(1)}(\bbeta, \tilde{T})}{S^{(0)}(\bbeta, \tilde{T})}
\right]
\right]
$$
with $S^{(k)}(\bbeta, t) = \E\left[ \left\{I(L + Q > 0)I(\tilde{T} \geq t)\exp(\bX^T \bbeta) + I(L = 0) I(Q = 0) I(C_1 \geq t) \exp(\bX^T \bbeta) \right\} \bX^{\otimes k} \right]$. Next, we prove that $\bbeta = \bbeta_0$ solves $U_0(\bbeta) = \bm{0}$.
We first have that
\begin{align*}
\E[ \Delta I(L + Q > 0) \bX] = \E[ \Delta I(Q = 1) \bX] + \E[\Delta I(L = 1) I(Q = 0) \bX]
\end{align*}
Further,  we have 
\begin{align*}
\E[ \Delta I(Q = 1) \bX] & = \E[I(T \leq C) I(T \leq C_1) \bX] = E[I(T \leq C_1) \bX] = \int_0^{\tau_1} \E[P(C_1 \geq t | \bX) f_T(t | \bX)] dt \\
& = \int^{\tau_1}_0 \E[ \bX I(C_1 \geq t) I(T \geq t) \exp(\bX^T \bbeta_0)] \lambda_0(t) dt
\end{align*}
since $\lambda_T(t | \bX) = \lambda_0(t) \exp(\bX^T \bbeta_0)$ and $C_1 \indep T | \bX$. Next, we have 
\begin{align*}
\E[\Delta I(L = 1) I(Q = 0) \bX] & = \E[ I(T \leq C) I(T \geq C_1) I(L = 1) \bX] \\
& = \E[I(T \leq C_2) I(T \geq C_1)I(L = 1) \bX]  \\
& = \E[\bX \pi(\bX, C_1, C_2) I(T \geq C_1) I(T \leq C_2)]
\stepcounter{equation}\tag{\theequation}\label{myeq1}
 \\
& = \E[ \bX \E[ \E[ I(T \geq C_1) I(T \leq C_2) \pi(\bX, C_1, C_2) | T, \bX] | \bX]]
\end{align*}
where $\E[ I(L = 1) | T, Q = 0, C_1, C_2, \bX] = \pi(\bX, C_1, C_2)$. Next, denote $g(T, \bX) = \E[ I(T \geq C_1) I(T \leq C_2) \pi(\bX, C_1, C_2) | T, \bX]$, we further have 
\begin{align*}
g(s, \bX) = \E [I(T \geq C_1) I(T \leq C_2) \pi(X, C_1, C_2) | T = s, \bX] = \E[ I(C_1 \leq s) I(C_2 \geq s) \pi(X, C_1, C_2) | \bX]
\end{align*}
since $(C_1, C_2) \indep T | \bX$. Further, we have
\begin{align*}
& \E[ \Delta I(L = 1) I(Q = 0) \bX]  = \E[\bX \E[ g(T, \bX) | \bX]] = \E\left[ \bX \int^{\tau_2}_0 g(s, \bX) f_T(s | \bX) ds \right] \\
& = \int^{\tau_2}_0 \E\left[ \bX \E[I(C_1 \leq s) I(C_2 \geq s) \pi(\bX, C_1, C_2) | \bX]  f_T(s | \bX) \right] ds \\
& = \int^{\tau_2}_0 \E\left[ \bX I(C_1 \leq s) I(C_2 \geq s) \pi(\bX, C_1, C_2)  P(T \geq s | \bX) \exp(\bX^T \bbeta_0) \right] \lambda_0(s) ds \\
& = \int^{\tau_2}_0 \E\left[ \bX I(C_1 \leq s) I(C_2 \geq s) \pi(\bX, C_1, C_2)  \E[I(T \geq s) | \bX] \exp(\bX^T \bbeta_0) \right] \lambda_0(s) ds \\
& = \int^{\tau_2}_0 \E\left[ \bX \E[I(C_1 \leq s) I(C_2 \geq s) I(T \geq s) \pi(\bX, C_1, C_2) | \bX] \exp(\bX^T \bbeta_0) \right] \lambda_0(s) ds \\
& = \int^{\tau_2}_0 \E \left[ \bX I(C_1 \leq s) I(C_2 \geq s) I(T \geq s) \pi(\bX, C_1, C_2) \exp(\bX^T \bbeta_0) \right] \lambda_0(s) ds
\end{align*}
Thus, together, we have
\begin{align*}
& \E[\Delta I(L + Q > 0) \bX] = \\
& \int^{\tau_1}_0 \E[ \bX \left\{ I(C_1 \geq s) I(T \geq s) + \pi(\bX, C_1, C_2) I(C_1 \leq s) I(C_2 \geq s) I(T \geq s) \right\}  \exp(\bX^T \bbeta_0) ] \lambda_0(s) ds \\ 
& + \int^{\tau_2}_{\tau_1} \E[ \bX \pi(\bX, C_1, C_2) I(C_2 \geq s) I(T \geq s) \exp(\bX^T \bbeta_0)] \lambda_0(s) ds
\end{align*}
as $C_1 \in [0, \tau_1]$.  Similarly, we have
\begin{align*}
& \E \left[ \Delta I(L + Q > 0) \frac{S^{(1)}(\bbeta, T)}{S^{(0)}(\bbeta, T)} \right] = \\
&  \int^{\tau_1}_0 \E[ \left\{ I(C_1 \geq s) I(T \geq s) + \pi(\bX, C_1, C_2) I(C_1 \leq s) I(C_2 \geq s) I(T \geq s) \right\}  \exp(\bX^T \bbeta_0) ] \lambda_0(s) \frac{S^{(1)}(\bbeta, s)}{S^{(0)}(\bbeta, s)}ds \\
& + \int^{\tau_2}_{\tau_1} \E[ \pi(\bX, C_1, C_2) I(C_2 \geq s) I(T \geq s) \exp(\bX^T \bbeta_0)] \lambda_0(s) \frac{S^{(1)}(\bbeta, s)}{S^{(0)}(\bbeta, s)} ds
\end{align*}
Now as long as we can prove that 
\begin{align*}
\E[\Delta I(L + Q > 0) \bX] - \E \left[ \Delta I(L + Q > 0) \frac{S^{(1)}(\bbeta_0, T)}{S^{(0)}(\bbeta_0, T)} \right] = \bm{0}
\end{align*}
Then we are done.  We prove this by proving the following two equalities:
\begin{equation}
\begin{aligned}
& \int^{\tau_1}_0 \E[ \bX \left\{ I(C_1 \geq s) I(T \geq s) + \pi(\bX, C_1, C_2) I(C_1 \leq s) I(C_2 \geq s) I(T \geq s) \right\}  \exp(\bX^T \bbeta_0) ] \lambda_0(s) ds - \\
&  \int^{\tau_1}_0 \E[ \left\{ I(C_1 \geq s) I(T \geq s) + \pi(\bX, C_1, C_2) I(C_1 \leq s) I(C_2 \geq s) I(T \geq s) \right\}  \exp(\bX^T \bbeta_0) ] \lambda_0(s) \frac{S^{(1)}(\bbeta, s)}{S^{(0)}(\bbeta, s)}ds = \bm{0}
\end{aligned}
\label{eqn::naive_population_1}
\end{equation}
and 
\begin{equation}
\begin{aligned}
& \int^{\tau_2}_{\tau_1} \E[ \bX \pi(\bX, C_1, C_2) I(C_2 \geq s) I(T \geq s) \exp(\bX^T \bbeta_0)] \lambda_0(s) ds - \\
& \int^{\tau_2}_{\tau_1} \E[ \pi(\bX, C_1, C_2) I(C_2 \geq s) I(T \geq s) \exp(\bX^T \bbeta_0)] \lambda_0(s) \frac{S^{(1)}(\bbeta, s)}{S^{(0)}(\bbeta, s)} ds = \bm{0}
\end{aligned}
\label{eqn::naive_population_2}
\end{equation}
We first prove equation \eqref{eqn::naive_population_1}.  For $s \in [0, \tau_1]$, we have 
\begin{align*}
S^{(0)}(\bbeta, s) & = \E[ I(\tilde{T} \geq s) \exp(\bX^T \bbeta) [I(Q = 1) + I(L  = 1)I(Q = 0)]  \\
& + I(C_1 \geq s) I(L = 0) I( Q = 0) \exp(\bX^T \bbeta)] \\
& = \E[ I(\tilde{T} \geq s) \exp(\bX^T \bbeta) I(Q = 1) + I(C_1 \geq s) \exp(\bX^T \bbeta) I(Q = 0)]  \\
& + \E[ I(L = 1) I(Q = 0) \exp(\bX^T \bbeta) [I(\tilde{T} \geq s) - I(C_1 \geq s)]]
\end{align*}
Further, 
\begin{align*}
& \E[ I(\tilde{T} \geq s) \exp(\bX^T \bbeta) I(Q = 1) + I(C_1 \geq s) \exp(\bX^T \bbeta) I(Q = 0)] \\
& = \E[ I(T \geq s) I(T \leq C_1) \exp(\bX^T \bbeta)] + \E[I(C_1 \geq s) I(T \geq C_1) \exp(\bX^T \bbeta)] \\
& = \E[ I(T \geq s) I(C_1 \geq s) \exp(\bX^T \bbeta)]
\end{align*}
and
\begin{align*}
& \E[ I(L = 1) I(Q = 0) \exp(\bX^T \bbeta) [I(\tilde{T} \geq s) - I(C_1 \geq s)]] \\
& = \E[ I(L = 1) I(Q = 0) \exp(\bX^T \bbeta) I(T \geq s) I(C_2 \geq s) I(C_1 \leq s)] \\
& = \E[ \pi(\bX, C_1, C_2) I(T \geq s) I(C_2 \geq s) I(C_1 \leq s) \exp(\bX^T \bbeta)] 
\stepcounter{equation}\tag{\theequation}\label{myeq2}
\end{align*}
Similarly,  we can prove that 
\begin{align*}
S^{(1)}(\bbeta, s) = \E[ \bX \left\{ I(C_1 \geq s) I(T \geq s) + \pi(\bX, C_1, C_2) I(C_1 \leq s) I(C_2 \geq s) I(T \geq s) \right\}  \exp(\bX^T \bbeta) ]
\end{align*}
All these results suggest that when $\bbeta = \bbeta_0$, equation \eqref{eqn::naive_population_1} is $\bm{0}$.  When $s \in [\tau_1, \tau_2]$, we have 
\begin{align*}
S^{(0)}(\bbeta, s) & = \E[ I(\tilde{T} \geq s) \exp(\bX^T \bbeta) I(L = 1) I(Q = 0)] \\
& = \E[ I(T \geq s) I(C_2 \geq s) \exp(\bX^T \bbeta) I(T \geq C_1) \pi(\bX, C_1, C_2)] \\
& = \E[ \pi(\bX, C_1, C_2) I(T \geq s) I(C_2 \geq s) \exp(\bX^T \bbeta)] 
\end{align*}
as $C_1 \in [0, \tau_1]$. By the same idea,  we can prove that equation \eqref{eqn::naive_population_2} is $\bm{0}$.  Then we have finished the proof. 
\end{proof} 
Note that we only uses assumption $\text{\bf (N4)}$ in \eqref{myeq1} and \eqref{myeq2} and in fact, we can further assume that
$$
\text{\bf (N5)} \  L \indep T | (\bX, Q = 0, C_1, C_2, \Delta)
$$
and NLAC method still gives consistent estimate in this scenario.  Finally for the case of time-dependent covariates,  we can get similar results under the assumption that
$$
P(T \geq s | C_1, C_2, \bar{\bX}(\tau_M)) = P(T \geq s  | \bar{\bX}(s)) \qquad P(C_k \geq s | T, \bar{\bX}(\tau_M)) = P(C_k \geq s | \bar{\bX}(s)), k = 1, 2.
$$
The proof is overall very similar to the case when there are only time-independent covariates and we omit it. 

\section{Relaxation of the ``no gap'' assumption}  \label{sec::sec_gap_relaxation}
So far we have made the ``no gap'' assumption to focus on the right censoring problem.  Now we consider relaxations of this assumption as it is quite common that a participant might not be under observation for some time in practice.  
This allows for the possibility of interval censoring as a participant might be diagnosed with the event of interest during the gap when he is not under observation.  Further,  this creates a situation that we have both right censored and interval-censored data, which is also known as partly interval-censored data \citep{turnbull1976empirical}.  

Partly interval-censored data for Cox regression has been studied in \cite{kim2003maximum},\cite{cai2003hazard} and the estimation is more difficult than right-censored data.  For simplicity,  we do not deal with interval-censoring in the current paper and leave that to future work.  Instead we consider an alternative approach that transforms the interval-censored data to right-censored data. 
This approach is in the same spirit as the NLAC approach.  
However, one has to be careful with the transformation. We first discuss an intuitive but problematic approach.

\subsection{A problematic approach}
For illustration, we consider the oracle setting such that each participant is linked to the observational follow-up datasets.  For participants that are known to be interval-censored
during the gap between clinical trial and observational follow-up,  we treat such participants as being right censored at the last recorded date of clinical trial.  Thus,  we transform the partly interval-censored problem to a right-censored only problem.  On the other hand,  for participants with gaps,  it is also possible that they might not be interval-censored.  It is then tempting to use their survival information in the observational dataset, i.e,  failure time $T$ or the censoring time $C_2$.  However,  this approach is problematic as this would lead to biased estimates. To see the effect of bias with this approach empirically,  we conducted a simulation study\footnote{The detailed simulation setting is provided in the appendix \ref{sec::simulation_results_ti}.} with approximately 4.5\% of the participants being interval-censored.  The coverage of the 95\% confidence interval for parameter $\beta_1$ is only about 65\% with $n = 10,000$ and 1,000 repetitions. 

To see why we cannot use the survival information in the observational follow-up dataset for a participant with gap and not interval-censored,  we need to think about the corresponding censoring distribution. Considering participants with gaps, 
effectively the censoring time $C$ is set as 
$$C = \begin{cases} C_1  &  \text{if } C_1 < T < C_1 + U \\ \max(C_1, C_2) &  \text{if } C_1 + U \leq T \end{cases},
$$
where $U$ is a random variable for the length of the gap between the clinical trial and the start of observational follow-up dataset. Thus,  it is clear that the censoring time $C$ now depends on the failure time $T$ and violates the independent censoring assumption.

\subsection{A remedy } 
We now propose a remedy approach that properly transforms the partly interval-censored data to right-censored data and conventional statistical software can then be applied to estimate the parameter for Cox models. 
Again we consider the oracle setting that each participant is linked to the observational follow-up dataset. 
For participants with gaps and censored in the clinical trial,  we simply view them as being right censored at the last recorded date of clinical trial. Thus, we set $C = C_1$ whenever there is a gap between a participant's last recorded date in clinical trial and the start time of observational follow-up.  
Let $G$ denotes whether gap exists for a participant.  
Equivalently, the censoring time $C$ is set as 
$$
C = I(G = 1) C_1 + I(G = 0) \max(C_1, C_2).
$$ 
Thus,  similar to the linkage assumption for the NLAC method,  above proposed method works if $G \indep T | \bX, C_1, C_2$, under the oracle setting that each participant is linked.

For the more practical setting with incomplete linkages,  we can similarly apply the methods developed in the current paper.  To be more specific, for NLAC, the censoring time $C$ can be written as 
$$
C = I(L = 1)[ I(G = 1) C_1 + I(G = 0) \max(C_1, C_2)] + I(L = 0) C_1
$$
Thus,  NLAC again sets the censoring time as $C_1$ for participants that are not linked and censored in the clinical trial. 
One sufficient condition for NLAC to work is 
$$
L \indep T | \bX, C_1, C_2, G,
$$
which is similar to assumption $\text{\bf (N1)}$. 
For the IPLW method,  we might modify the CLAR assumption as 
$$
P(L = 1 | \bX, Q = 0,  \tilde{T}, \Delta, G) = P(L = 1 | \bX, Q = 0, G).
$$
We present relevant simulation results in appendix \ref{sec::simulation_results_ti} due to space limit.  The limitation of this remedy approach is similar to NLAC:  it only works when Cox model is correctly specified;  when Cox model is mis-specified,  our proposed approach will no longer work as we modify the censoring times.  

\section{More simulation results}   \label{sec::simulation_results_ti}
We now present the simulation results when there are gaps between the clinical trial and the observational follow-up dataset.  We consider the following simulation scenario.  The hazard function is $\lambda(t | \bX; \bbeta_0) = \lambda_0(t) \exp(\beta_1 X_1 + \beta_2 X_2 + \beta_3 X_3)$ and $\bbeta_0 = (\beta_1, \beta_2, \beta_3)^T = (-\log(2), \log(2), 0.2)^T$.  $X_1$ is a Bernoulli variable that takes value 1 with probability 0.5.  $X_2$ is a normal random variable with mean -1 and standard deviation 1.  $X_3$ is a normal random variable with mean $1$ and standard deviation 2.  The baseline hazard function is $\lambda_0(t) = 0.15$.  The censoring time in clinical trial $C_1$ is uniformly distributed between 0 and 3.5.  

For each patient,  the probability for a gap between clinical trial and the observational follow-up dataset to exist is 0.5 and the length of the gap $U$ is set as a uniform random variable between 1 and 2.  Thus, the starting time of the observational follow-up period for a participant is set as $C_1$ plus $U$. 
$C_2$ is set as start time of the observational follow-up time plus an exponential random variable with rate $0.8 * X_1 + 0.03$.  We further set $\tau_1 = 3.5$ and $\tau_2 = 16$\footnote{$C_1$ and $C_2$ will be administratively censored by $\tau$ or $\tau_M$}.   The percentage of interval-censored patient is approximately 4.5\%.  This is the scenario we used in section \ref{sec::sec_gap_relaxation}. 


I consider the same three mechanisms for linkage to the medicare data as the simulations with time-dependent covariates in Section \ref{sec::simulation}. 
The only difference is that for LCAR,  we have $P(L = 1) = 0.4$.  When Cox model is correctly specified, one additional mechanism for linkage is considered as
\begin{align*}
& P(L = 1 | \bX, C_2,  \Delta; Q = 0) = \frac{\exp(-0.25 + 0.5 * X_1 + 0.5 * X_2 - 0.1 * C_2 - 0.1 * \Delta)}{1 + \exp(-0.25 + 0.5 * X_1 + 0.5 * X_2 - 0.1 * C_2 - 0.1 * \Delta)} \\ & P(L = 1 | Q = 1)  = 0.5.
\end{align*}
This leads to a more serious violation of the CLAR assumption (A1) and linkage now depends on the censoring time in clinical trial $C_2$ conditional on $\bX$. As expected, NLAC should still work under this linkage mechanism as linkage does not depend on the failure time $T$. 
 We again consider sample sizes $n = 500, 1,000, 1,500, \ldots, 10,000$.  For each simulation setting, we generate 1,000 repetitions.
The simulation results are given in 
Table ~\ref{table::ti_s1_lcar} to Table ~\ref{table::ti_s1_nlar}. 
As the results are similar to the simulation studies in the main text, we omit the discussion here. 

\section{Simulation setting in section \ref{sec::motivating_example}}  \label{appendix::simulation_setting}
Now we present the simulation setting for the motivating example in section \ref{sec::motivating_example}.  The hazard function is $\lambda(t | x) = \lambda_0(t) \exp(\beta_1 X_1 + \beta_2 X_2 + \beta_3 X_3^2)$ and $\bbeta_0 = (\beta_1, \beta_2, \beta_3)^T = (-\log(2), \log(2), 0.2)^T$.  $X_1$ is a Bernoulli variable that takes value 1 with probability 0.5.  $X_2$ is a normal random variable with mean -1 and standard deviation $1$.  $X_3$ is a normal random variable with mean $X_2$ and standard deviation 2. The baseline hazard function is $\lambda_0(t) = 0.05$. The censoring time in clinical trial $C_1$ is exponentially distributed with rate $0.1 * X_1 + 0.05$.  The censoring time $C_2$ is defined as $C_1$ plus an exponential random variable with rate $0.8 * X_1 + 0.03$.   Further,  we set $\tau_1 = 3$ and $\tau_2 = 16$.   

\begin{table}
\caption{Simulation results for linkage mechanism (LCAR) and (CLAR).}
\begin{center}
\begin{tabular}{cc c c c c c c c }
\hline
 & &   & \multicolumn{3}{c}{Bias (Mean SE)} & \multicolumn{3}{c}{Coverage of 95\% CI}\\
 \hline
 Mechanism & Method              & n         & $\beta_1$ & $\beta_2$ & $\beta_3$ & $\beta_1$ & $\beta_2$ & $\beta_3$ \\
 \hline
LCAR & Oracle       & 500     & -0.01  (0.219) & 0.01 (0.095)   & 0.00  (0.045)  & 0.96   & 0.94  & 0.95 \\
&                        & 2000   & -0.00  (0.108) & 0.00 (0.047)   & 0.00  (0.022)  & 0.94   & 0.95  & 0.94 \\
\hline
& CC                  & 500     & -0.03  (0.355) & 0.02 (0.155)   & 0.01  (0.073)  & 0.97   & 0.94  & 0.96  \\
&                        & 2000   & -0.01  (0.172) & 0.00  (0.075)  & 0.00  (0.036)  & 0.96   & 0.96  & 0.94  \\
\hline
& CC+               & 500     &  0.11   (0.241) & -0.06 (0.111)  & -0.02 (0.054)  & 0.92   & 0.91  & 0.934  \\
&                         & 2000   &  0.12   (0.118) & -0.07 (0.054)  & -0.02 (0.026)  & $0.81^{\dagger}$  & $0.72^{\dagger}$  & $0.86^{\dagger}$  \\
\hline
& NLAC               & 500      &  -0.01  (0.240) &  0.01 (0.112)  & 0.00 (0.054)   & 0.96   & 0.95  & 0.95  \\
&                          & 2000    &  -0.00  (0.118) &  0.00 (0.055)  & 0.00 (0.026)   & 0.94   & 0.95  & 0.94  \\
\hline
& IPLW                  & 500      & -0.02    (0.271) & 0.02 (0.125)  & 0.01  (0.060)  & 0.95   & 0.92  & 0.93  \\
&                          & 2000    & -0.01    (0.134) & 0.00 (0.063)  & 0.00  (0.030)  & 0.95   & 0.95  & 0.93   \\
\hline
\hline
CLAR & Oracle           & 500     & -0.01  (0.219) & 0.01 (0.095)   & 0.00  (0.045)  & 0.96   & 0.94  & 0.95 \\
&                            & 2000   & -0.00  (0.108) & 0.00 (0.047)   & 0.00  (0.022)  & 0.94   & 0.95  & 0.94 \\
\hline
& CC                      & 500     & -0.18  (0.320) & -0.14 (0.153)   & 0.00  (0.070)  & 0.92   & $0.83^{\dagger}$  & 0.95  \\
&                            & 2000   & -0.16  (0.156) & -0.15  (0.074)  & -0.00  (0.034)  & $0.80^{\dagger}$   & $0.45^{\dagger}$  & 0.95  \\
\hline
& CC+                  & 500     &  -0.08   (0.239) & -0.22 (0.117)  & -0.02 (0.054)  & 0.95   & $0.51^{\dagger}$  & 0.94  \\
&                            & 2000   &  -0.07   (0.118) & -0.23 (0.057)  & -0.02 (0.027)  & $0.89^{\dagger}$   & $0.03^{\dagger}$  & $0.88^{\dagger}$  \\
\hline
& NLAC                 & 500      &  -0.01  (0.238)  &  0.01 (0.114)   & 0.00 (0.054)   & 0.95   & 0.95  & 0.95  \\
&                           & 2000    &  -0.00  (0.117)   &  0.00 (0.056)  & 0.00 (0.027)   & 0.95   & 0.95  & 0.94  \\
\hline
& IPLW                   & 500      &  -0.02   (0.263)  & 0.02 (0.134)  & 0.01  (0.063)  & 0.94    & 0.91  & 0.93  \\
&                          & 2000    &  -0.00   (0.130)  & 0.01 (0.069)  & 0.00  (0.032)  & 0.95    & 0.95  & 0.94   \\
\hline
\end{tabular}
\end{center}
\footnotesize{We use $^\dagger$ to highlight settings with coverage below 90\%.}
\label{table::ti_s1_lcar}
\end{table}

\begin{table}
\caption{Simulation results for linkage mechanism (LNAR($\tilde{T}$)) and (LNAR($C_2$)). }
\begin{center}
\begin{tabular}{c c c c c c c c c }
\hline
 & &   & \multicolumn{3}{c}{Bias (Mean SE)} & \multicolumn{3}{c}{Coverage of 95\% CI}\\
 \hline
 Mechanism & Method              & n         & $\beta_1$ & $\beta_2$ & $\beta_3$ & $\beta_1$ & $\beta_2$ & $\beta_3$ \\
 \hline
LNAR($\tilde{T}$) & Oracle           & 500     & -0.01  (0.219) & 0.011 (0.095)   & 0.00  (0.045)  & 0.96   & 0.94  & 0.95 \\
&                            & 2000   & -0.00  (0.108) & 0.00 (0.047)   & 0.00  (0.022)  & 0.94   & 0.95  & 0.94 \\
\hline
& CC                      & 500     & -0.20  (0.320) & -0.16 (0.154)   & 0.00  (0.071)  & 0.92   & $0.80^{\dagger}$  & 0.95  \\
&                            & 2000   & -0.19  (0.156) & -0.17  (0.074)  & -0.01  (0.034)  & $0.77^{\dagger}$   & $0.38^{\dagger}$  & 0.94  \\
\hline
& CC+                  & 500     &  -0.10   (0.239) & -0.24 (0.118)  & -0.02 (0.055)  & 0.94   & $0.47^{\dagger}$  & 0.94  \\
&                            & 2000   &  -0.1   (0.118) & -0.24 (0.058)  & -0.02 (0.027)  & $0.87^{\dagger}$   & $0.02^{\dagger}$  & $0.86^{\dagger}$  \\
\hline
& NLAC                 & 500      &  -0.01  (0.238)  &  0.01 (0.114)   & 0.00 (0.055)   & 0.95   & 0.95  & 0.95  \\
&                          & 2000    &  -0.01  (0.118)   &  0.00 (0.056)  & -0.00 (0.027)   & 0.94   & 0.95  & 0.94  \\
\hline
& IPLW                   & 500      &  -0.02   (0.265)  & 0.01 (0.137)   & 0.01  (0.064)  & 0.94    & 0.92  & 0.92  \\
&                           & 2000    &  -0.01   (0.131)  & -0.00 (0.070)  & -0.00 (0.032)  & 0.94    & 0.95  & 0.94   \\
\hline
\hline
LNAR($C_2$) & Oracle                & 500     & -0.01  (0.219) & 0.01 (0.095)   & 0.00  (0.045)  & 0.96   & 0.94  & 0.95 \\
&                            & 2000   & -0.00  (0.108) & 0.00 (0.047)   & 0.00  (0.022)  & 0.94   & 0.95  & 0.94 \\
\hline
& CC                      & 500     & -0.63  (0.340) & -0.27 (0.176)   & -0.02  (0.082)  & $0.55^{\dagger}$   & $0.63^{\dagger}$  & 0.95  \\
&                           & 2000   & -0.60  (0.165) & -0.28  (0.083)  & -0.02  (0.039)  & $0.04^{\dagger}$   & $0.09^{\dagger}$  & 0.91  \\
\hline
& CC+                  & 500     &  -0.46   (0.248) & -0.35 (0.127)  & -0.05 (0.060)  & $0.54^{\dagger}$   & $0.24^{\dagger}$  & $0.87^{\dagger}$  \\
&                           & 2000   &  -0.44   (0.122) & -0.36 (0.061)  & -0.05 (0.029)  & $0.04^{\dagger}$   & $0.00^{\dagger}$  & $0.62^{\dagger}$  \\
\hline
& NLAC                 & 500      &  -0.01  (0.246)  &  0.01 (0.123)   &  0.00 (0.059)   & 0.96   & 0.95  & 0.95  \\
&                          & 2000    &  -0.01  (0.121)   &  0.00 (0.060)  &  0.00 (0.029)   & 0.94   & 0.95  & 0.95  \\
\hline
& IPLW                   & 500      &  -0.04   (0.316)  & 0.04 (0.180)   &  0.02  (0.088)  & 0.94    & $0.87^{\dagger}$  & $0.89^{\dagger}$  \\
&                          & 2000    &  -0.01   (0.157)  & 0.01 (0.101)   &  0.01 (0.047)  & 0.94    & 0.91  & 0.91   \\
\hline
\end{tabular}
\end{center}
\footnotesize{We use $^\dagger$ to highlight settings with coverage below 90\%.}
\label{table::ti_s1_nlar}
\end{table}

\end{document}